\newtheorem{lemma}{Lemma}
\newtheorem{proposition}{Proposition}
\title{Strategic AI in Cournot Markets}
\author{Sanyukta Deshpande}
\author{Sheldon H. Jacobson}
\affil{\small University of Illinois Urbana-Champaign, USA}
\date{}
\begin{document}

\maketitle

\begin{abstract} As artificial intelligence increasingly automates decision-making in competitive markets, understanding the resulting dynamics and ensuring fair market mechanisms is essential.  We investigate the \emph{multi-faceted decision-making} of large language models (LLMs) in oligopolistic Cournot markets, showing that LLMs not only grasp complex market dynamics—demonstrating their potential as effective economic planning agents—but also engage in \emph{sustained tacit collusion}, driving prices up to 200\% above Nash equilibrium levels. Our analysis examines LLM behavior across three dimensions—(1) decision type, (2) opponent strategies, and (3) market composition—revealing how these factors may shape the competitiveness of LLM-based decision-makers. Furthermore, we show that regulating a few dominant agents by enforcing best-response strategies effectively disrupts collusion and helps restore competitive pricing. Our findings identify potential concerns associated with AI integration in competitive market environments and provide regulatory policy recommendations for the era of automation.
\end{abstract}

\section{Introduction}
Artificial Intelligence (AI) is increasingly used in automating decision-making across various market domains, such as pricing, resource, and financial management, prompting discussions about both individual strategies and their collective effects on market dynamics \citep{kasberger2023algorithmic, brown2023competition, otis2024uneven}. The recent developments in AI, Large Language Models (LLMs), or generative AI, are expected to substantially impact automation by enabling more sophisticated autonomous agents capable of strategic reasoning and adaptability.
However, LLMs' black-box nature complicates the comprehension of their decision-making, raising questions about their broader economic implications and how their behavior may deviate from the traditional conception of \emph{homo economicus}—the archetypal rational, self-interested economic agent \citep{horton2023large}. 

In particular, when autonomous AI agents help firms make strategic decisions in competitive markets, several questions emerge: Can they understand and adapt to market dynamics, given business contexts? Can they craft effective, potentially multifaceted strategies amid competition? And if they prove to be strategically competent, what outcomes emerge when multiple AI agents interact with each other or against theory-based decision-makers? From a multi-agent systems perspective, these questions relate to fundamental issues of agent learning, coordination, and emergent behaviors in strategic environments. While LLMs show promise, their capacity to navigate complex economic environments remains a subject of debate—particularly in contexts requiring foresight, adaptability, and strategic decision-making \citep{10.5555/3692070.3693779,valmeekam2023planbench, felin2024theory}.

One significant concern in this context is the potential for collusion. The use of algorithms has raised fears that they might learn to cooperate, increasing profits \emph{without explicit communication}. Tacit collusion—also referred to as price coordination (or just `collusion', for simplicity)—is a process where firms in a concentrated market might effectively share monopoly power without explicit agreements \citep{BrookeGroup1993}.  It is noted, `without an agreement, tacit collusion is difficult to sustain' \citep{TextMessaging2015} due to competitiveness in the market.  However, algorithmic collusion is validated in real-world examples \citep{assad2024algorithmic}, via reinforcement-learning algorithms \citep{calvano2020artificial}, and via theoretical settings which show that the equilibria of repeated games may result in higher than competitive prices \citep{benoit1985finitely, littman2003polynomial}. 
Given the challenge of distinguishing emergent collusive behavior from optimization failures, 
the prevailing economic perspective focused on identifying algorithmic reward-punishment mechanisms as indicators \citep{harrington2018developing}.
With LLMs, however, the opacity in decision-making hinders the comprehension and monitoring of their strategic behaviors, worsening concerns about collusion.  

In this work, we examine the strategic potential of LLMs within Cournot oligopolistic markets and assess the broader market consequences of their adoption, including the potential for collusion. Building on prior research—such as \citet{fish2024algorithmic}, which shows that LLMs can sustain supra-competitive prices in a direct price-setting Bertrand game—we extend the inquiry to Cournot markets, a well-established model for understanding oligopolistic quantity-setting markets where price is derived from aggregate production.  In the standard Cournot framework, individual firms make production decisions to maximize profits, while aggregate production determines the market price. We augment this framework by incorporating firm-level capital investment decisions that influence production-costs and, consequently, profits. Broadening the scope of strategic interactions, this augmentation aligns the model more closely with real-world complexities and allows us to test LLMs’ strategic potential over multifaceted decisions. We do this over homogeneous and heterogeneous multi-player markets. Building a market simulation, we first test the decision-making capabilities of LLMs, by tasking them with both production level and capital investment strategies. We follow this by analyzing the overall market outcomes when a fraction or all decision-makers are LLM-based. Our detailed approach and main findings are as follows.


\subsection{Approach and Contributions}
In \Cref{sec:model}, we develop a Cournot framework incorporating investment decisions—collectively referred to as the \emph{augmented Cournot framework}—and validate the game setting using Nash benchmarks. In this setting, total market production, i.e., the aggregate output of individual firms, defines the market price.  Firms can reduce their per-unit production-costs by making capital investments (subject to an upper bound). This investment-to-cost relationship is directly modeled from the Cournot profit-maximization principle, under the assumption that a portion of each firm's profit is reinvested as capital. A decision-maker, representing a firm in a competitive (oligopolistic) market, chooses both the \emph{production} and \emph{capital investment} strategies, which jointly determine overall firm profit. We establish the existence and computation of Nash equilibria in the augmented Cournot framework, providing the theoretical benchmarks.

We next develop the market simulation, which allows several types of decision-makers to interact over a number of periods: LLM agents, who task the generative AI for decision-making; Nash agents, who consistently decide on predetermined Nash optimal strategies; and best-response (BR) agents, who directly optimize their strategies based on previous market feedback. We assess the strategic potential of LLMs by comparing their performance against both Nash and BR agents. Each agent receives data on its previous decisions, the market’s total production, and resulting prices and profits—but lacks exact knowledge of the underlying market mechanism. Consequently, they must gauge it solely from observed feedback. The market remains dynamic, continuously evolving in response to the interactions among all participants.

In \Cref{sec:duopoly} and \Cref{sec:oligopoly}, we empirically examine LLMs' strategic behavior across homogeneous two-firm and heterogeneous multi-firm settings, respectively. We also explore partially regulated markets, in which certain firms adopt BR strategies under regulatory constraints. Our key findings are \footnote{
Code and data for all experiments are available at:
\url{https://github.com/sanyukta-D/Pricing_models}.
}:

\begin{enumerate}
    \item \textbf{Strategic Acumen in Competitive Markets}: We find that LLMs, especially GPT-4, demonstrate robust \emph{multi-faceted decision-making} capabilities. They consistently make optimal decisions against both static Nash and dynamic BR agents, ascertaining potential as effective planning agents in Cournot-based frameworks.
    
    \item \textbf{Potential of Tacit Collusion}: We show, via multiple experiments, that LLMs consistently exhibit \emph{autonomous tacit collusion}. Their interactions \emph{collectively} produce output levels below the Nash equilibrium, leading to sustained supra-competitive prices without explicit coordination, 
    sometimes reaching up to 200\% of Nash optimal price, i.e., the competitive Cournot-Nash price. Consequently, profits remain generally higher than the Nash levels.
    
    \item \textbf{Fair Market Mechanism via (Partial) Regulation}: Regulating a small number of major firms (those with larger market shares) by enforcing adherence to best-response strategies effectively mitigates autonomous tacit collusion. In multi-firm markets, we demonstrate that Nash pricing can be achieved when only the top few firms are constrained to operate as best-response agents.
\end{enumerate}


Our model situates AI agents in a setting where pricing emerges from their collective production and investment decisions, and is understood through feedback on their past actions. The emergence of supra-competitive pricing must occur either due to an inability to optimize or through an understanding that collectively lower production drives higher prices and profits and that deviations may not be eventually profitable due to threats. Our homogeneous two-firm experiments provide multiple reasons to discount the first possibility: the AI consistently leads competitive prices against theory-based static Nash as well as dynamic best-response agents and invariably reaches supra-competitive pricing against each other. Notably, among the production and investment decisions that the AI makes, the latter—which affects its own production-costs and thereby profits—is consistently closer to the optimal, while the former—which collectively drives prices—is far from optimal. This crucially happens in a game environment where best response strategies would converge to the Nash equilibrium in a few iterations.

Our heterogeneous multi-firm experiments strengthen the argument by showing that the AI need not always align with producing lower quantities to drive higher prices. Here, small-sized firms may produce close to their Nash production levels or even exceed those, whereas mid-sized or larger firms consistently decide on lower production and drive supra-competitive prices.  
This suggests that LLMs do not \emph{encode} the strategies of low production, but must make the decision based on the market feedback. 

In this context, although AI-driven supra-competitive pricing raises significant anti-trust concerns, the efficacy of partial regulation offers a promising counterbalance. If regulatory bodies develop a robust understanding of the market’s strategic mechanisms, focusing on the major firms’ decisions may be sufficient to curb collusion. While it may not be entirely unexpected that best-response strategies steer market dynamics toward Nash-competitive outcomes \citep{melkonyan2017collusion}, we find it noteworthy—and encouraging—that LLM agents consistently achieve this in complex and evolving market conditions, characterized by a heterogeneous mix of LLM and BR agents who generate imperfect best-response feedback, all without external intervention or explicit information about opponents’ strategies. This also suggests that regulatory bodies may have a critical window to develop appropriate antitrust measures, at least during the initial phase of autonomous decision-making.

\subsection{Related Literature}
\textbf{Algorithmic Collusion:}
The potential and regulation of algorithmic autonomous collusion have garnered attention across the fields of computer science, law, economics, and management sciences \citep{klein2021autonomous, fzrachi2019sustainable,schlechtinger2024fair, bernhardt2020collusion,hartline2024regulation, hansen2021frontiers, miklos2019collusion, banchio2022artificial, hernandez2019survey}. In market economics, the convergence of Q-learning algorithms has been experimentally examined in both price-setting (Bertrand) and quantity-setting (Cournot) models, both concluding in collusion \citep{calvano2020artificial, waltman2008q}.  It has been shown that spontaneous coupling between algorithms may lead to cooperation on actions better than the static Nash equilibrium \citep{banchio2022adaptive}. Multi-agent systems research has demonstrated that autonomous agents can learn to coordinate without explicit communication through repeated interactions and environmental feedback \citep{sen1994learning, leibo2017multi}. \citet{arunachaleswaran2024algorithmic} thoroughly discuss collusion and show that supra-competitive pricing may emerge in two-firm leader-follower settings even if both use algorithms that do not explicitly encode threats. Yet, synthetic collusion experiments have often met with criticism \citep{den2022artificial}, pointing to the lack of evidence for systematic collusion. See \citet{deng2023we} for an overview of developments in autonomous collusion.

\textbf{Cournot Markets:} Our study situates itself within the broader literature on decision-making in oligopolistic markets, specifically using the Cournot model, which is well-established for analyzing quantity-setting commodity firms \citep{cournot1838recherches, immorlica2010coalition, ventosa2005electricity, alsabah2021multiregional}. The integration of investment decisions with Cournot model has also been extensively studied \citep{spencer1983international, d1988cooperative, buehler2008intimidating}, focusing primarily on an industrial organization perspective.

\citet{kolumbus2022and} study Cournot competition in repeated online interactions, examining how users might strategically misreport information to their algorithmic agents—a question related to our investigation of how AI agents behave strategically in market settings. Tacit collusion is also well-studied in commodity markets, both theoretically and empirically \citep{siallagan2013aspiration, assad2024algorithmic, horstmann2018number}, and the decrease in the degree of collusion with the number of firms is noted. Finding equilibrium in Cournot markets in various contexts has been of interest, both in autonomous and manually operated markets \citep{abolhassani2014network, fiat2019beyond, caldentey2024multimarket, xu2021reinforcement}. Our simulation environment builds on this foundation by incorporating multi-dimensional strategic choices in a computational setting.

\textbf{LLM-Driven Decision Making:}
Since the introduction to LLMs, several studies have investigated the rationality and effectiveness of LLMs in various economic and game-theoretic scenarios \citep{deng2024llms, 10.5555/3692070.3693779, lore2024strategic}.   However, the reasoning and planning abilities of LLMs are still questioned; \citet{valmeekam2023planbench} find that LLMs still fall short on many critical abilities.  \citet{fish2024algorithmic} is the closest to our work in this domain, which demonstrates that LLM agents are adept at pricing and achieve supra-competitive prices in a Bertrand simulation. Our research builds on this by placing economic agents in a more complex, multidimensional, and multi-agentic strategy space within a quantity-setting Cournot model, enhanced with investment decisions.

\section{Model and Methods} \label{sec:model}
In this section, we present the market dynamics and the corresponding simulation game in which various decision-makers—including AI-based agents—participate.   \Cref{sec: economic_setting} first describes the economic interplay between firms' decisions and the resulting market dynamics. The model builds on the Cournot framework—where market dynamic solely unfolds from firms' production decisions—and extends it by incorporating capital investment decisions. Referred to as the \emph{augmented Cournot framework}, this formalizes a game with firms as players, production and investment decisions as actions, and resulting profits as utilities.  We rigorously validate the model and establish Nash benchmarks for both production and investment decisions. \Cref{sec: exp_design} then details the experimental design, outlining how various decision-makers participate in the simulation, and influence the market.

\subsection{Market Design}\label{sec: economic_setting}
In the traditional Cournot framework, firms maximize profits by selecting production quantities rather than prices. The market price is the same for all firms and is determined by total market production and an elasticity parameter that defines the demand–price relationship \citep{waltman2008q}. Given the quantity-setting framework where price is derived in the market, oligopolistic firms may rely on past data, predictions, and expected production-costs while making production choices. Here, capital investment may also play a role in shaping cost structures  \citep{brander1986oligopoly, hay1998investment}. Yet, the realized market price—and hence individual profits—ultimately depend on the collective production of \emph{all} firms.

We now present the \emph{augmented Cournot framework} by structuring it in two distinct phases. The first is a calibration phase that establishes the baseline Cournot market equilibrium, while the second phase introduces a dynamic game in which firms can adjust production and investment decisions. This second phase forms the basis for our multi-agent simulation experiments. The augmented Cournot framework satisfies the following conditions:
\begin{enumerate}
\renewcommand{\labelenumi}{(\alph{enumi})}
    \item Firms are strategic, rational, and potentially heterogeneous. They independently choose production and investment to maximize profits. The market is non-monopolistic, such that no firm has more than 95\% market share.
    \item The market price follows a \emph{constant-elasticity} price–production function, with all production selling at the derived market-clearing price. 
    \item \emph{Status quo} or baseline production-costs and profits are derived from the baseline market price and firm-wise market shares under the profit-maximizing Cournot principle. These are initial Nash equilibrium values that the firms begin at, however, their future production, costs, and profits depend on subsequent production and investment decisions. 
    \item Firms can invest in capital, subject to an upper bound tied to \emph{their own} status quo profits, to reduce production-costs. A fixed Cobb–Douglas-like function (extrapolated from the Cournot framework) governs the investment–cost relationship, accommodating heterogeneity coming from firms' status quo market shares. Investment benefits depreciate fully each period, requiring continuous investment decisions to maintain cost advantages.
    \end{enumerate}
\begin{figure}
    \centering
    \includegraphics[width=0.75\linewidth]{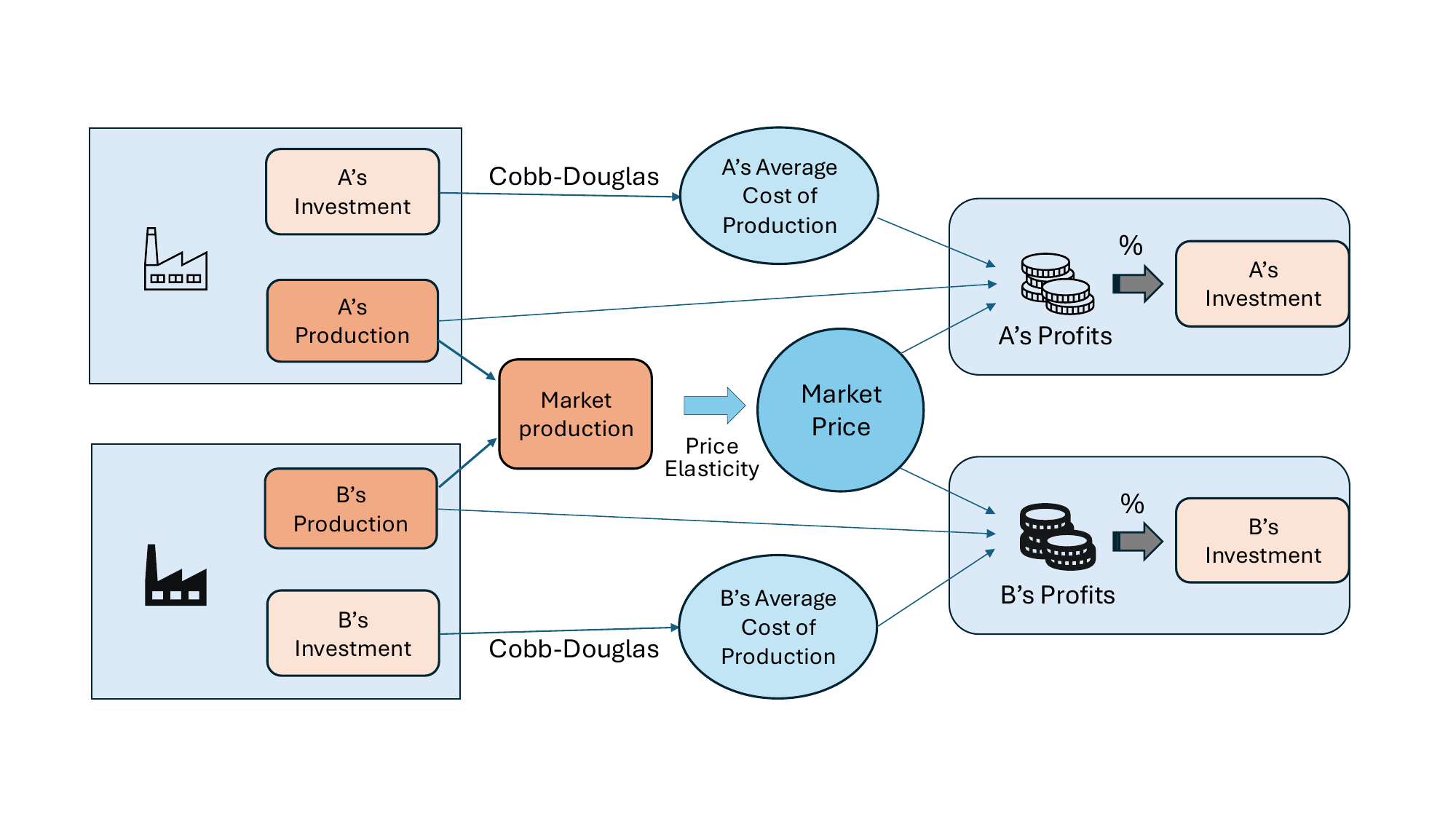}
    \caption{The market design: Firms make production and investment decisions; the latter decision affects the production-cost via a Cobb-Douglas function. The total market production and elasticity together determine price. Profits for each firm equal its production multiplied by the difference between price and its production-cost. A fraction of these profits is converted into investments. }
    \label{fig:model}
\end{figure}

Given firm-wise market shares and market price, the baseline equilibrium would define status quo profits and production-costs, which are used to parameterize the investment to production-cost function. The dynamic game would allow the firms to simultaneously make production and investment decisions and receive
realized market price and individual production-costs, and thereby profits. 
Figure \ref{fig:model} diagrams the market dynamics for two representative firms.

We now detail the market. Section \ref{subsec:model_1} first specifies price-production elasticity and sets the status quo production-costs and profits using the Cournot framework.  Section \ref{subsec:model_2} then builds an investment to production-costs relation, with investment being a fixed fraction of status quo profits. This modeling is based on conditions (a)-(d) outlined above. Section \ref{subsec:model_3} validates that the firms are profit-maximizing at the status quo, i.e., are at Nash equilibrium,  with respect to both their production and investment decisions. The described market mechanism may now be used to generate feedback to players' strategic decisions and the resulting market outcomes may be compared with the Nash equilibria. A complete list of notation and definitions of key parameters and variables is provided in \Cref{app:notation}.

\subsubsection{Baseline Market Equilibrium: Production, Price, and Production-costs}\label{subsec:model_1}
The price-production relationship is based on the constant elasticity function. Let $q_i$ be the production by firm $i$. Let $p(q_1, q_2,..q_n)$ be the market price, $Q = \sum_{i=1}^{n} q_i$ be the total quantity produced by $n$ firms and $A$ be a scaling constant. The price-production relationship is modeled as:
\begin{equation}
    p(q_1, q_2,..q_n) = A (\sum_{i=1}^{n} q_i)^{\epsilon} \label{eq: price}
\end{equation}
Let variables with a hat symbol (e.g.,  $\hat{q}_i$ ) denote the status quo variables that parametrize the market game. Accordingly,  $\hat{q}_i$  denotes the baseline productions, and $\hat{Q} = \sum_{i=1}^{n} \hat{q}_i$ the total market production. 
Without loss of generality, let the status quo price $\hat{p}$ be equal to 1, fixing the scaling constant $A = \hat{Q}^{-\epsilon}$. 

Let $\hat{w}_i$ be the status quo production-cost for producing one unit for firm $i$. If the firm has produced quantity $\hat{q}_i$, it would have seen \emph{naive profit} function $\pi_i (\hat{q}_i)$ as the difference between its revenue and total cost of production.
\begin{align} \pi_i(\hat{q}_i) & = [A (\sum_{j=1}^{n} \hat{q}_j)^{\epsilon} - \hat{w}_i]\hat{q}_i  = [\hat{p} - \hat{w}_i]\hat{q}_i, \quad \text{for } i = 1, 2, \ldots, n  \label{eq: profit_status_quo} 
\end{align}

The function $\pi_i(\hat{q}_i)$ originates from the production-only Cournot model. It is termed \emph{naive} in our context, as it defines profits prior to making investment decisions that may affect future production-costs.  We now use $\pi_i(\hat{q}_i)$ to extract the status quo production-costs $\hat{w_i}$—under conditions (a)-(d), the investment decisions are realized and optimal at the status quo. 

Since all firms are strategic, i.e., they maximize their profits to determine production $\hat{q}_i$, we use the first-order condition $d \pi_i/dq_i = 0$ at $q = \hat{q}$ for deriving $\hat{w}$ \citep{varian2003intermediate}. We differentiate thus  \eqref{eq: profit_status_quo} wrt $\hat q_i$, to find:
\begin{align}
    \hat{w}_i &= \hat{q}_i \left.\frac{d p}{d q_i}\right|_{q= \hat{q}} + \hat{p} = \epsilon A \hat{q}_i (\sum_{j=1}^n \hat{q}_j)^{\epsilon-1} + \hat{p} \quad \text{for } i = 1, 2, \ldots, n\label{eq: average_costs}
\end{align}
Substituting $\hat{q}_i$ and  $\epsilon$ then gives us the closed form solutions to (i) scaling constant $A$, (ii) per unit status quo production-cost for firm $i$, i.e., $\hat{w}_i$, and (iii) naive profits for each firm $i$, i.e., $\hat{\pi}_i$. 

\subsubsection{Parameterizing the Capital Investment vs Production-costs function}\label{subsec:model_2}
We now model the curve that defines how firms' investments reduce production-costs. The baseline production costs ($\hat{w}_i$) and profits ($\hat{\pi}_i$) are used to fit the parameters of this function, effectively reverse-engineering the cost structure from the initial equilibrium state. For this, we use a simplified Cobb-Douglas function. Given a firm's investment decision $b_i$ and parameters $k_1, k_2, k_3$, its per unit production-cost $w_i$ is given by  \eqref{eq: cobb-douglas}.
\begin{equation}
    w_i = k_1 b_i^{k_2} + k_3 \quad \text{for } i = 1, 2, \ldots, n \label{eq: cobb-douglas}
\end{equation}
For each firm $i$, the investment $b_i$ is proportional to its profits $\pi_i$ at equilibrium, we may hence write $\hat{b}_i = \hat{\pi}_i \times c$. The parameters $k_1$, $k_2$ and $k_3$ may be derived from the relationship between profits $\hat{\pi}_i$ and production-costs $\hat{w}_i$ from  \eqref{eq: profit_status_quo} and \eqref{eq: average_costs} so that function \eqref{eq: cobb-douglas} accurately fits the corresponding variables for all the firms. Appendix \ref{app: proofs} contains the computations of $k_1, k_2, k_3$ and \Cref{app:5_player_run} visualizes function \eqref{eq: cobb-douglas}.

By conditions (a)-(d),  allowed investment is uniformly capped by a fraction of the status quo profits, i.e., $c\hat{\pi}_i$  for all firms. If a firm makes an investment $b_i'$ such that $b_i' < c\hat{\pi}_i$, the resulting production-cost $w_i'$ would be higher than the status quo cost $\hat{w}_i$. For the remainder of this paper, $c$ is fixed at 0.2, implying that 20 percent of profits are allocated to investment, for all firms.

\subsubsection{Two-Dimensional decision-making in the dynamic market}\label{subsec:model_3}
For the dynamic game, the production-to-price and the investment to production-costs relations are available via Equation \eqref{eq: price} and \eqref{eq: cobb-douglas} respectively; we next proceed to model the game between all firms. Each firm  $i$ maximizes its profit, given as a function of both production $q_i$ and capital investment $b_i$, as well as the competitors' quantity $Q_{-i} = \sum_{i\neq j} q_j$. Then, using  \eqref{eq: price} for price and  \eqref{eq: cobb-douglas} for production-costs, the formal profit function for firm $i$ is derived as:
\begin{align}
    \pi_i^f (Q_{-i}, q_i, b_i) & =  [A (Q_{-i} + q_i)^{\epsilon} - (k_1 b_i^{k_2}+k_3 )]q_i - b_i \ \quad \text{for } i = 1, 2, \ldots, n\label{eq: future_profit_function}\\
     b_i & \in [0, c\hat{\pi}_i], \ q_i \geq 0  \notag
\end{align}

Given the function, firm $i$ strategizes investment $b_i$ and production $q_i$, while its profit also depends on the other firms' strategies via $Q_{-i}$.
The Nash equilibrium may now be calculated by finding a consistent set of $(q_i, b_i)$ that maximize $\pi_i^f (q_i, b_i)$ for all players. We show that the status quo decision variables, i.e., the baseline values $\hat{q_i}, \hat{b_i}$, attain Nash equilibria in non-monopolistic markets. Without loss of generality, we prove this under a 100-unit normalization for $\hat{q_i}$'s, which can be mapped back to absolute values. 

\begin{restatable}{theorem}{thmNE}\label{thm-1} 
\end{restatable}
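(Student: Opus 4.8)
The plan is to show that the status-quo profile $(\hat q_i,\hat b_i)_{i=1}^n$ is a Nash equilibrium of the game with payoffs \eqref{eq: future_profit_function} by verifying the defining property directly: for every firm $i$, when its rivals are held at their status-quo quantities $\hat Q_{-i}=\sum_{j\neq i}\hat q_j$, the pair $(\hat q_i,\hat b_i)$ maximizes $\pi_i^f(\hat Q_{-i},q_i,b_i)$ over the feasible set $q_i\ge 0$, $b_i\in[0,c\hat\pi_i]$. Since this would hold simultaneously for all $i$, the profile is mutually best-responding and hence an equilibrium. I would treat the production and investment coordinates separately, then argue joint global optimality.

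The production coordinate falls out of the construction of $\hat w_i$. Differentiating \eqref{eq: future_profit_function} gives $\partial\pi_i^f/\partial q_i=\tfrac{dp}{dq_i}\,q_i+p-(k_1 b_i^{k_2}+k_3)$, i.e.\ marginal revenue minus marginal cost. The Cobb--Douglas curve \eqref{eq: cobb-douglas} is calibrated to pass through $(\hat b_i,\hat w_i)$, so $k_1\hat b_i^{k_2}+k_3=\hat w_i$, and $\hat w_i$ was defined in \eqref{eq: average_costs} precisely to kill the naive first-order condition; hence $\partial\pi_i^f/\partial q_i=0$ at $(\hat q_i,\hat b_i)$. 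To promote this stationary point to a maximizer I would show $\partial^2\pi_i^f/\partial q_i^2<0$ for all $q_i\ge 0$; a short computation reduces the sign to that of the bracket $(\epsilon+1)q_i+2Q_{-i}$, which is positive once demand is elastic ($\epsilon\in(-1,0)$). This is where the non-monopolistic hypothesis and the $100$-unit normalization enter, pinning the parameters into the range that makes $q_i\mapsto\pi_i^f$ concave.

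For the investment coordinate I would first compute the calibration constants, finding $k_2=\tfrac12$, $k_3=\hat p=1$, and $k_1<0$, because the substitution $\hat b_i=c\hat\pi_i$ into \eqref{eq: profit_status_quo}--\eqref{eq: average_costs} places every status-quo pair $(\hat b_i,\hat w_i)$ on the single curve $\hat w_i-1=k_1\sqrt{\hat b_i}$. Two facts then follow: $\pi_i^f$ is concave in $b_i$ (its second $b_i$-derivative inherits the sign of $k_1<0$), and $\partial\pi_i^f/\partial b_i$ evaluated at $(\hat q_i,\hat b_i)$ collapses to $\tfrac{1}{2c}-1$, which is \emph{strictly positive} since $c=0.2<\tfrac12$. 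Thus profit is still increasing in $b_i$ at the upper bound, so each firm optimally invests up to its cap $c\hat\pi_i=\hat b_i$: the investment is a binding corner solution, not an interior stationary point. Together with the production step, this shows $(\hat q_i,\hat b_i)$ satisfies the Karush--Kuhn--Tucker conditions for the constrained program.

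The main obstacle is upgrading these necessary conditions to genuine global optimality over the two-dimensional feasible set — confirming the maximizer is exactly $(\hat q_i,\hat b_i)$ and not merely a coordinatewise critical point. I would pursue one of two routes. The first verifies that the Hessian of $\pi_i^f$ is negative semidefinite on the feasible region; its determinant condition reduces to a market-share bound of the form $s_i\le 1/(1-\epsilon)$ (with $s_i=\hat q_i/\hat Q$ the status-quo share), which is precisely where the $95\%$ cap does its work, after which KKT-plus-concavity yields the global maximum. The more robust second route is the value-function argument: fixing $b_i$, let $q_i^*(b_i)$ be the optimum from the (concave) production problem and set $\Pi(b_i):=\pi_i^f(\hat Q_{-i},q_i^*(b_i),b_i)$; by the envelope identity $\Pi'(b_i)=-\tfrac{k_1}{2}b_i^{-1/2}q_i^*(b_i)-1$, and one shows this stays positive on $[0,c\hat\pi_i]$, forcing the maximum to the cap with $q_i^*(\hat b_i)=\hat q_i$. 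Establishing that sign throughout — balancing the $b_i^{-1/2}$ blow-up of the marginal return against the shrinkage of $q_i^*(b_i)$ as costs rise — is the one genuinely delicate estimate; the remaining steps are routine, and verifying the conditions for all $i$ completes the Nash argument.
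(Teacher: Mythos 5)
Your Route 2 (the value-function/envelope argument) is structurally sound and, once completed, would amount to a reorganization of the paper's own proof rather than a different theorem: the paper also reduces everything to (i) ruling out interior optima in $b_i$ and (ii) optimizing $q_i$ along the cap $b_i=c\hat\pi_i$. Concretely, the paper shows that profits at $(\hat q_i,c\hat\pi_i)$ dominate anything achievable at $b_i=0$ (Lemma \ref{lem: step_one_NE}), shows the two first-order conditions admit \emph{no} simultaneous solution with $q_i>0$ (Lemmas \ref{lem: NE_2} and \ref{lem: NE_3}, via positivity of the function $h$), and then verifies $q_i=\hat q_i$ is optimal on the cap. Your preliminary computations agree with the paper's: $k_2=\tfrac12$, $k_3=\hat p$, $k_1<0$, the $q_i$-FOC holds at the status quo by construction of $\hat w_i$, and $\partial\pi_i^f/\partial b_i$ at $(\hat q_i,\hat b_i)$ is indeed $\tfrac1{2c}-1>0$. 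However, the step you yourself flag as delicate---positivity of $\Pi'(b_i)=-\tfrac{k_1}{2}b_i^{-1/2}q_i^*(b_i)-1$ on all of $[0,c\hat\pi_i]$, uniformly over admissible $Q_{-i}$---is exactly where the whole content of the theorem sits: a zero of $\Pi'$ would be precisely a joint interior FOC solution, so your missing estimate is equivalent to what Lemmas \ref{lem: NE_2} and \ref{lem: NE_3} prove, and that occupies most of the paper's proof. As written, your proposal defers it, so it is a plan, not a proof.

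Two concrete errors also need fixing. First, Route 1 fails outright: $\pi_i^f$ is \emph{not} jointly concave on the feasible set, and no market-share bound rescues negative semidefiniteness. With the paper's parameters ($\epsilon=-1$, $A=\hat Q=100$, $c=0.2$) one has $\pi_{qq}=-200\,Q_{-i}/(Q_{-i}+q_i)^3$, $\pi_{bb}=\tfrac{k_1}{4}b_i^{-3/2}q_i$, $\pi_{qb}=-\tfrac{k_1}{2}b_i^{-1/2}$; at the symmetric status quo ($Q_{-i}=q_i=50$, $b_i=5$, $k_1=-1/\sqrt{20}$) the Hessian determinant is exactly $0$, and for small $q_i$ with $b_i$ near the cap (e.g.\ $q_i=1$, $b_i=5$, $Q_{-i}=50$) it is strictly negative, so the Hessian is indefinite inside the feasible region for \emph{any} market shares. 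This degeneracy is precisely why the paper argues by exhausting cases rather than by concavity, and why your Route 2 (partial maximization in $q_i$ first, then monotonicity in $b_i$) is the only one of your two routes that can work. Second, your concavity-in-$q_i$ condition ``$\epsilon\in(-1,0)$'' excludes the value the theorem is actually stated and proved for, namely $\epsilon=-1$; your own bracket $(\epsilon+1)q_i+2Q_{-i}$ shows this boundary case is harmless, since it reduces to $2Q_{-i}>0$ in any non-monopolistic market.
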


\begin{proof}[Proof Sketch]
    We show that the equilibrium holds because, for any firm, the production-cost penalty from under-investing (i.e., choosing $b_i < \hat{b}_i$) is proven to be greater than any potential gains from that choice. This establishes that maximal investment is optimal, which in turn anchors the production decision at the original Cournot equilibrium point. The longer formal proof is provided in Appendix \ref{app: proofs}. 
\end{proof}

In other words, $\hat{q}_i$ and $\hat{b}_i = 0.2\hat{\pi}_i$ maximize the formal profit function in  \eqref{eq: future_profit_function}. 
Note that by definition,  $\hat{q}_i$ only maximize the naive profit function, i.e.,  \eqref{eq: profit_status_quo} and fix $\hat{w}_i$ in the process; \Cref{thm-1} extends this optimality to contain investment decisions as well,  validating the dynamic market design at status quo.

To summarize, the \emph{augmented Cournot framework} is completely parameterized by market shares $\hat{q_i}$, price $\hat{p}$ and elasticity $\epsilon$. These values populate the dynamic market game environment by fixing  \eqref{eq: cobb-douglas} and \eqref{eq: future_profit_function}. Given firms' strategic decisions (i.e., $(q_i, b_i)$ for all  $i$) as inputs, the game uses  \eqref{eq: price}, \eqref{eq: cobb-douglas}, and \eqref{eq: future_profit_function} to generate outputs; specifically, price $p$ and profits $\pi_i^f$.  The benchmarks for strategic decisions are the status quo variables $(\hat{q}, \hat{b_i} = c \hat{\pi_i})$ which correspond to the Nash equilibrium \footnote{The framework's Nash equilibrium characteristics are also validated through our experimental convergence analysis. In \Cref{sec:duopoly} and \Cref{sec:oligopoly}, we demonstrate that from all attained strategy profiles, best-response dynamic converges to the Nash equilibrium within a small number of iterations, confirming the framework's stability and making deviations from Nash particularly meaningful for strategic interpretation.}.

\subsection{Experimental Design} \label{sec: exp_design}
Section \ref{sec: economic_setting} designs the dynamic market game in which firms may participate by making simultaneous decisions on production $q_i$ and investment $b_i$. In ideal conditions, a firm may predict others' investment appetite and production capacities, stemming from the necessary information about their market share and price. However, in reality, it may lack this knowledge and moreover, the equations that govern the market variables. 


We now formulate the market simulation as a repeated game, where firms make multiple attempts to learn the best strategies in the same market. Briefly, firms are represented by decision-maker agents. They interact over a number of periods and generate a market history of their interactions. In each period, an agent decides production $q_i$ and investment $b_i$ following its own decision-making process, and the market informs on the resulting output $p, w_i, \pi_i^f$. The underlying mechanism generating $p, w_i, \pi_i^f$ (specifically,  equations \eqref{eq: price}, \eqref{eq: cobb-douglas}, and \eqref{eq: future_profit_function}, parametrized by $\hat{q}, \hat{p}, \epsilon$) remains fixed over the periods.
We model three types of agents that may participate in the game.  
\begin{enumerate}
    \item Nash agent: An agent that makes decisions according to computed Nash equilibrium in the status quo, i.e., production $\hat{q}_i$ and investment $\hat{b_i}$. This agent plays the same strategy, irrespective of market feedback.
    \item BR agent: The agent that computes its decisions as the \emph{best response} to the previous period's market conditions. Specifically, the agent maximizes the profit equation in \eqref{eq: future_profit_function} to find production $q_i$ and investment $b_i$, given the opponents’ production from the previous period, i.e., $Q_{-i}$.
    \item LLM agent: The agent that asks an LLM to output its market strategy of \emph{both} production and investment. The LLM receives the market history to make decisions, i.e., $\{$\emph{self} actions of production and investment, realized \emph{self} production-costs, total market production, market price and \emph{self} profits$\}$ for all previous periods.
\end{enumerate}

\begin{figure}
    \centering
    \includegraphics[width=0.85\linewidth]{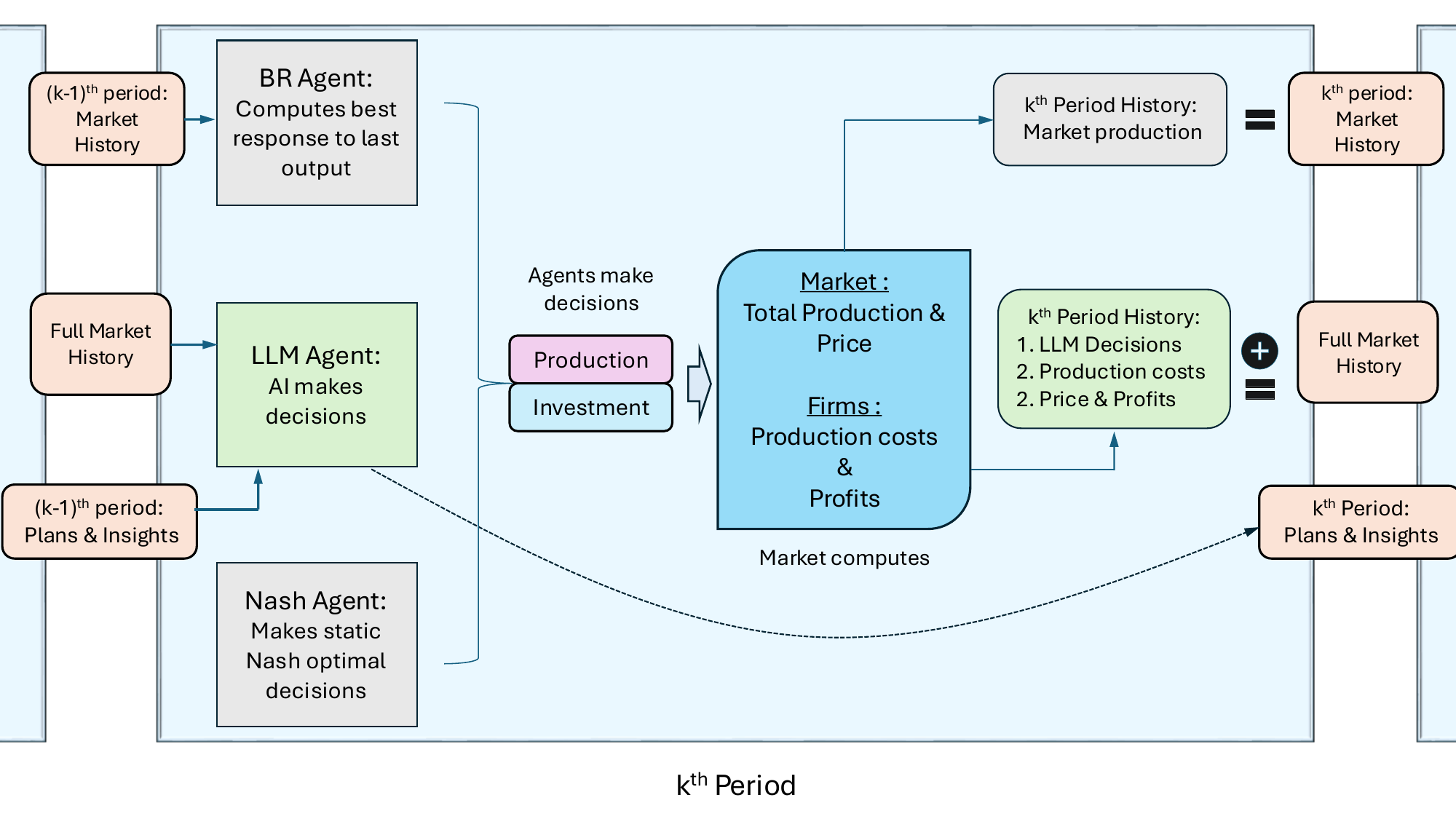}
    \caption{Template for the interaction between the Best-Response agent, the LLM agent, and the Nash agent: The Best-Response agent makes decisions by maximizing profits given the last period's market production. The LLM agent asks AI to make decisions given market history as well as its previous plans and insights. The Nash agent plays a static Nash equilibrium strategy repeatedly. }
    \label{fig: dynamics}
\end{figure}

In the experiments, an LLM agent is made to interact with all three types of agents. The behavior of LLMs against each type of agent informs their ability to understand, navigate, and respond to market feedback. The templates of their instruction prompts remain the same for a set of fixed parameters, and the agents aren't disclosed any information about the type of other agents. See \Cref{app:exp_settings} for a description on prompt and parameters.

Figure \ref{fig: dynamics} illustrates the agents' decision-making processes and interaction dynamics. The interaction between any combination of agents, e.g., between multiple LLMs, may be generalized from Figure \ref{fig: dynamics}.

\section{Two-Firm Homogeneous Interactions}\label{sec:duopoly}
Section \ref{sec:model} builds a multi-dimensional decision-making game between firms competing against each other. 
In this section, we test the LLM behavior of navigating this dynamic in a two-player game. 
Sections \ref{subsec:llmvsNash}, \ref{subsec:llmvsBR}, and \ref{subsec:llmvsllms} analyze the interactions of LLM agents with Nash, BR, and LLM agents, respectively. For various generation models, the adeptness of LLM is tested against the benchmark of Nash optimality. The interactions analyzed involve two identical firms engaged in competition against one another within a duopoly. The detailed parameter setting is in \Cref{app:exp_settings}, and representative runs for each type of experiment are in \Cref{app:2_player_run}.

\subsection{LLM vs Nash agents}\label{subsec:llmvsNash}
Our first set of experiments includes analyzing the competition between an LLM agent and a Nash agent.  The interaction design follows Figure \ref{fig: dynamics}: The Nash agent knows and plays the static optimal strategy repeatedly, while the LLM agent learns the best strategy by possibly trying multiple strategies and gathering feedback. Successful convergence to optimality would demonstrate LLM's ability to understand the market feedback and adjust accordingly. To recall, solutions' Nash optimality includes convergence of both production and investment decisions. 


\begin{table}[h!]
\centering
\begin{tabular}{@{}lccc@{}}
\toprule
\textbf{Model} & \textbf{Avg Number of Periods} & \textbf{Convergence} & \textbf{Nash Optimality} \\ \midrule
\textbf{GPT-4} &
\textbf{30} & 
\textbf{3/3 (100\%)} & 
\textbf{3/3 (100\%)} \\
\textbf{GPT-4 Turbo} &
63.3 & 
2/3 (66.7\%) & 
1/3 (33.3\%) \\
\textbf{GPT-4o} &
66.7 & 
3/3 (100\%) & 
2/3 (66.7\%) \\ \bottomrule
\end{tabular}
\caption{Aggregated Performance of LLMs against Nash agents: GPT-4 understands optimality and outperforms GPT-4 Turbo and GPT-4o in both convergence of decisions and number of periods.}
\label{tab:llmvNash}
\end{table}

We conduct three independent runs of three GPT models—GPT-4, GPT-4 Turbo, and GPT-4o—against a Nash agent. Earlier versions are excluded, as prior work has demonstrated their inadequacy in similar strategic settings \citep{fish2024algorithmic}.  For this experiment, we terminated the simulation either after 150 periods or when the LLM decisions remained unchanged for over 10 consecutive periods. Table \ref{tab:llmvNash} shows the performance of these models in terms of the number of periods, whether convergence of decisions is observed, and if it was to Nash optimality. We observe that GPT-4 consistently converges to the optimal solution. GPT-4o, although taking longer, may perform reasonably. Given its consistently lower convergence rates compared to GPT-4, we exclude GPT-4 Turbo from subsequent analysis. 

\subsection{LLM vs BR agents}\label{subsec:llmvsBR}

The next set of experiments includes LLM agents competing against a BR agent that decides its current-period strategy by optimizing over the previous period market feedback. Unlike Section \ref{subsec:llmvsNash}, here the LLM may see the remaining market production change over periods. Specifically, the BR agent finds the best response strategies of production and investment by optimizing \eqref{eq: future_profit_function}. 
To reach optimality, the LLM must navigate these dynamics by understanding the market feedback.


\begin{table}[h!]
\centering
\begin{tabular}{@{}l|ccc|c@{}}
\toprule
\textbf{Model} & \textbf{Price} & \textbf{Production} & \textbf{Investment } & \textbf{Average Prices}  \\ \midrule
\textbf{GPT-4} &
\textbf{3/3} & 
\textbf{3/3 } & 
\textbf{3/3} &  \textbf{1.00, 1.00, 1.01}\\
\textbf{GPT-4o} &
2/3 & 
2/3 & 
3/3 & 0.852, 1.01, 0.972 \\ \bottomrule
\end{tabular}
\caption{Aggregated Performance of GPT-4 and GPT-4o against BR agents. GPT-4 navigates the best response dynamics and slightly outperforms GPT-4o in convergence to Nash optimality.}
\label{tab:llmvsbr}
\end{table}

From this point forward, we adopt a broad definition of convergence: we define \emph{convergence} to the Nash value as occurring when the interval between the 90th and 10th percentiles of values over the final 100 periods lies within 10\% of the Nash benchmark. Each model is assessed in three independent runs, each spanning 300 periods. 

Table \ref{tab:llmvsbr} shows the success of convergence to the Nash optimality and average prices over the last 100 periods.  It shows that GPT-4 again consistently converges to the Nash optimality in market price as well as both the decisions \footnote{The consistent superior performance of GPT-4 compared to newer models (GPT-4 Turbo, GPT-4o) is counterintuitive but may reflect differences in training objectives, prompt sensitivity, or safety alignments that affect strategic reasoning capabilities. This warrants future investigation across different prompting strategies and model architectures.}. For this experiment involving BR agents, it took longer on average for the production decisions to converge than the experiment with Nash agents (see \Cref{app:2_player_run} for a comparison of the representative runs). This may be due to the dynamic as well as the regulatory nature of the market feedback. 

\subsection{LLM vs LLM agents}\label{subsec:llmvsllms}
From the experiments in Sections \ref{subsec:llmvsNash} and \ref{subsec:llmvsBR}, we observe that GPT-4 effectively interprets feedback and formulates competitive strategies. For the remaining experiments, we use GPT-4 to analyze the resulting market dynamics. We now turn to examining strategic interactions between two LLM agents, where market feedback is no longer fixed or regulatory. As instructed in the prompt, each AI agent ideally attempts to infer its opponent’s strategy and adjust its decisions accordingly to maximize profit. As usual, the LLMs are not informed about their competitor’s identity, investment decisions, or production-costs.

\begin{figure}[h]
    \centering
    \includegraphics[width=0.495\linewidth]{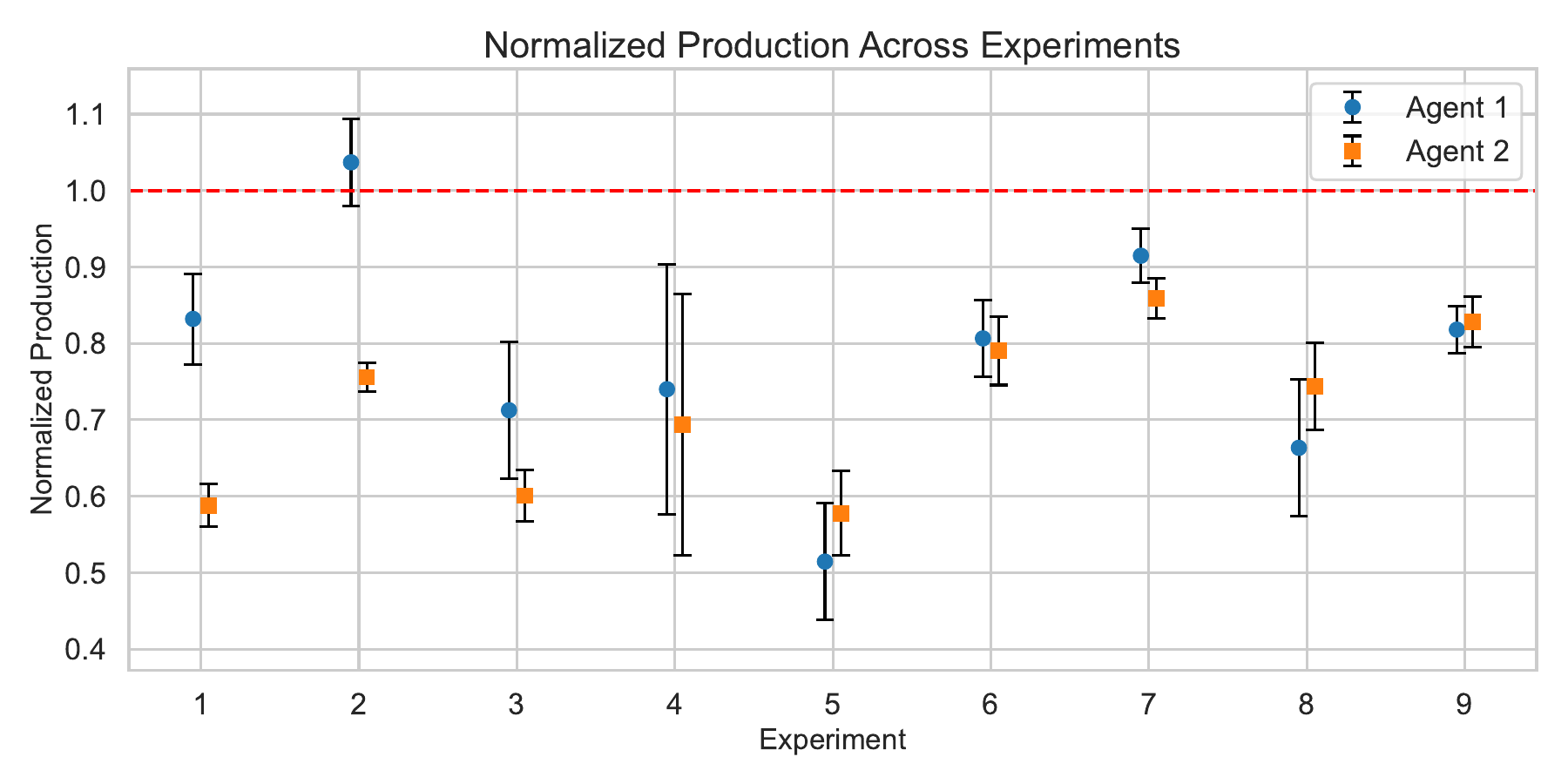}
    \includegraphics[width=0.495\linewidth]{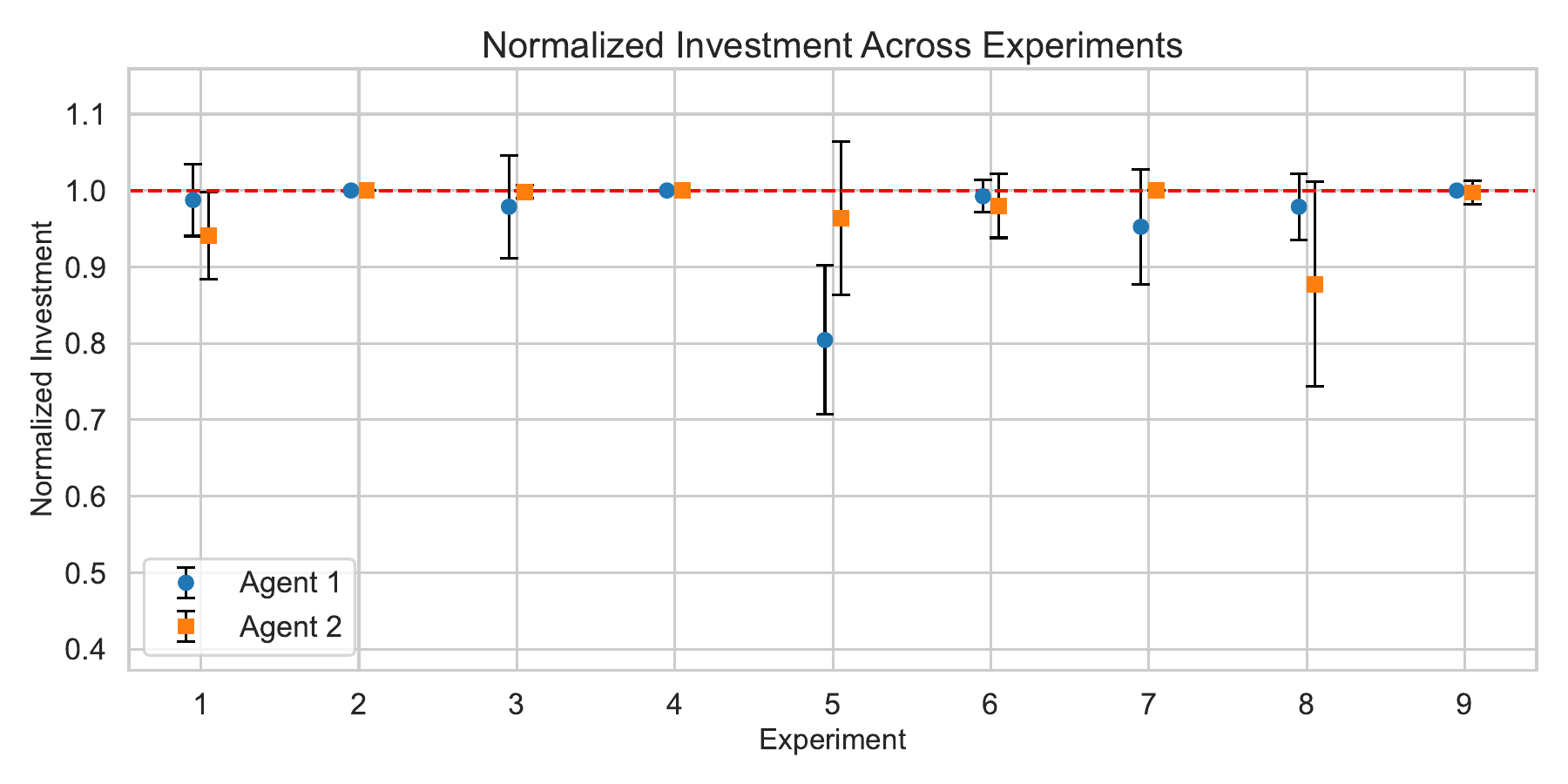}
    \caption{\textbf{Two Firm LLM vs LLM decisions:} Scatter plots showing the average values (and error bars) for the last 50 LLM decisions in 300-period runs of 9 independent experiments. The red line (normalized at 1) depicts Nash optimality for each variable.}
    \label{fig:LLMvsLLM_2_agents_1}
\end{figure}
\begin{figure}[h]
    \centering
    \includegraphics[width=0.495\linewidth]{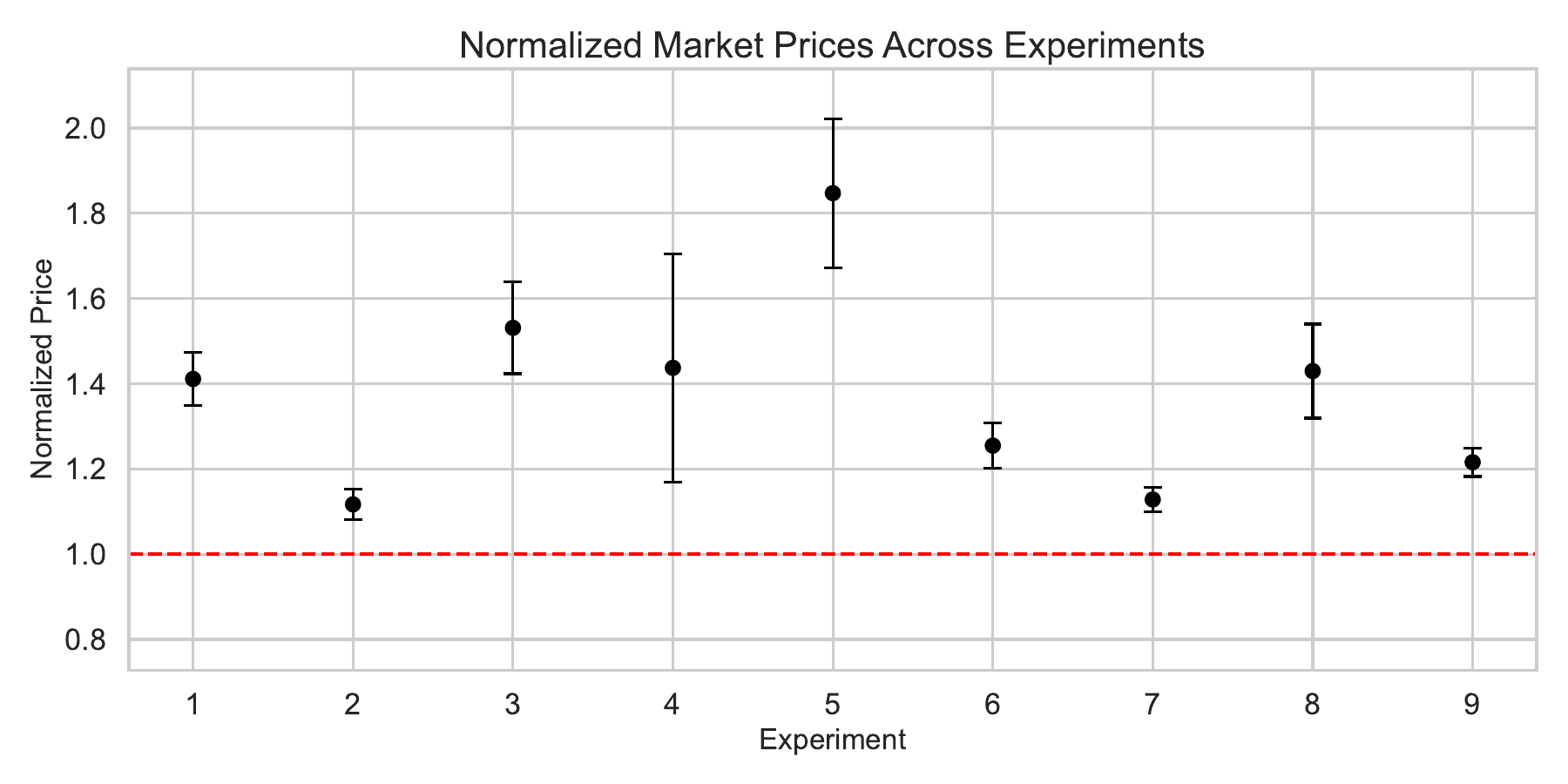}
     \includegraphics[width=0.495\linewidth]{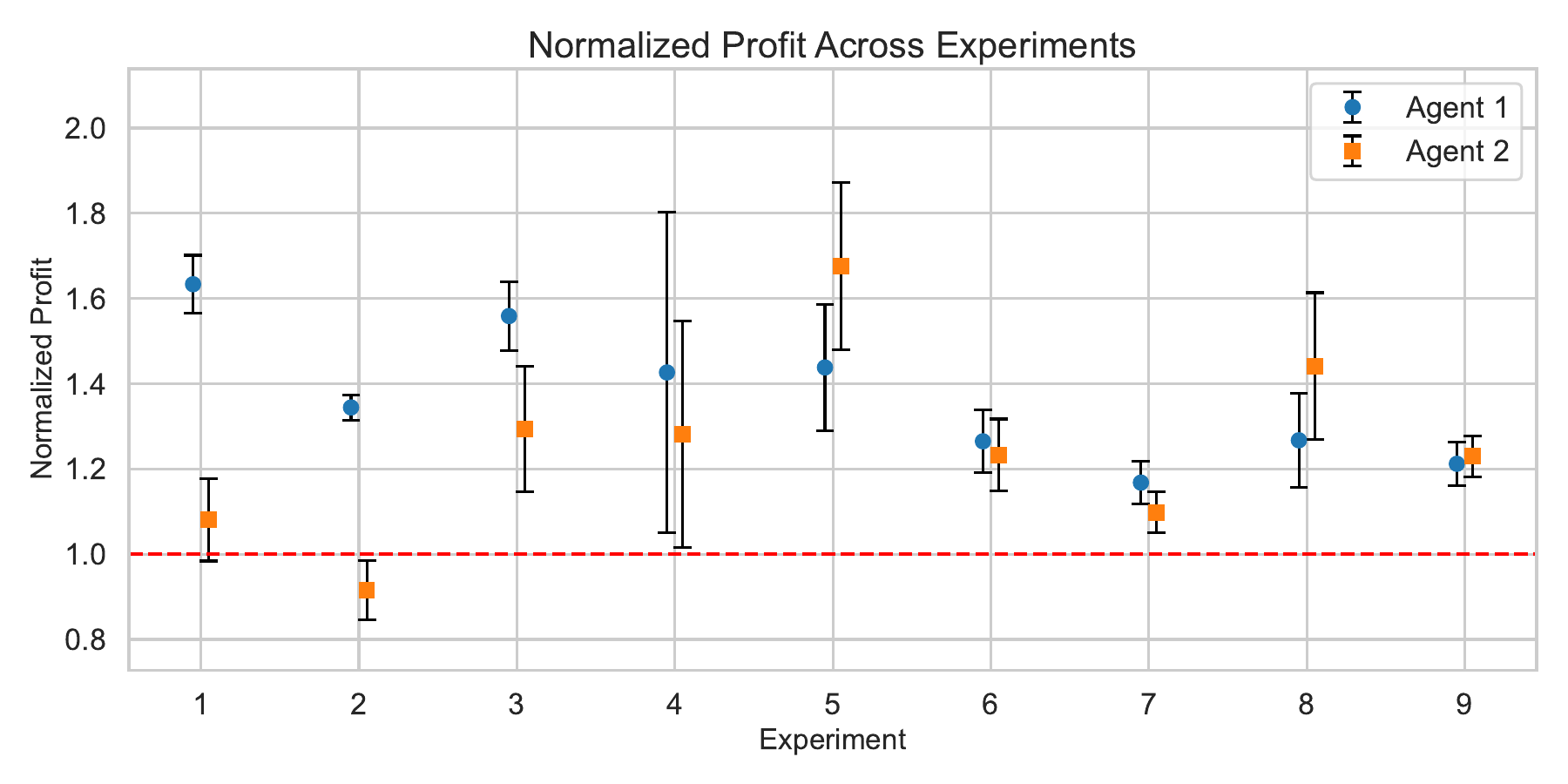}
    \caption{\textbf{Two Firm LLM vs LLM market dynamics:} The averages (and error bars) of the last 50 market outputs—prices and profits.  The red line depicts Nash optimality.}
    \label{fig:LLMvsLLM_2_agents_2}
\end{figure}

We conduct nine independent runs of this experimental setting, each consisting of 300 periods. A detailed example from a representative run appears in \Cref{app:2_player_run}. The normalized actions of the LLM agents and the resulting market dynamics for the last 50 periods are presented in Figures \ref{fig:LLMvsLLM_2_agents_1} and \ref{fig:LLMvsLLM_2_agents_2}. Note that all market variables are shown as normalized values: for example, a firm's production value around 0.8 indicates that the firm produces 80\% of its Nash level. 

In all experiments, the investments remain relatively close to the Nash benchmark, reaching as low as 80\% of the Nash investment values. In contrast, production decisions show substantial deviations, often achieving only up to 50\% of Nash production levels. Consequently, market prices are consistently supracompetitive, averaging between 110\% and over 180\% of Nash prices, and even surpassing 200\%. Profits typically exceed Nash profits, though benefits may not be evenly distributed among agents. 

The experiments show that although the LLMs reach Nash optimality of decisions against Nash and BR agents, they behave peculiarly against each other— they consistently exhibit `sustained tacit collusion', wherein their total market production is lowered to achieve supra-competitive prices and higher profits. Unlike earlier, neither the productions nor the prices \emph{converge} to Nash values. Investment decisions deviate slightly from optimality, but production decisions show a more pronounced divergence from Nash levels. This overall results in near-optimal investment coupled with deliberate lower market production, sustaining higher prices.

\subsection{Strategic Deliberation vs. Optimization Failure}\label{sec:br_convergence}

To distinguish whether these supra-competitive outcomes result from strategic coordination or optimization failures, we conduct two validation analyses:

\textbf{Investment Optimality:} We verify that 20\% investment was indeed the optimal choice at every decision point throughout our experiments. Across all 9 experiments and 300 periods each, we find that full investment (20\%) remains the optimal decision for both LLM agents, confirming that LLMs consistently identified the optimal investment strategy.

\textbf{Best Response  Convergence:} We test whether Nash equilibrium remains accessible from observed LLM positions. Starting from any point in our experiments, we simulate what would happen if both agents switched to best-response strategies. Results show that from every observed state in all 300 periods, the game would converge to Nash equilibrium production in an average of 3 iterations, reaching within 1\% of Nash values.

These findings demonstrate that (1) the experiments always had LLM agents in the ``convergence zone'' where best response would converge to Nash values, and (2) their sustained deviation from Nash production represents deliberate strategic choice rather than inability to optimize. The combination of optimal investment decisions with deliberately sub-optimal production reveals sophisticated strategic reasoning: LLMs maximize efficiency while coordinating to restrict output.

\section{ Five-Firm Heterogeneous Interactions} \label{sec:oligopoly}

Section \ref{sec:duopoly} analyzes the two-player interactions between LLMs, each representing a homogeneous firm. In this section, we introduce heterogeneity and analyze a larger market with five competing firms. In the first set of experiments, presented in Section \ref{subsec:all_llms}, we analyze the interactions of LLM agents in the market, exhibiting tacit collusion as earlier and significant non-uniformity in market dynamics among heterogeneous firms.  Section \ref{subsec:regulatory} then compares the market dynamics when some firms employ best response (BR) agents instead of LLMs. Beyond the five-firm scenario discussed here, we examine a heterogeneous six-firm market as well. The results remain robust and closely mirror those presented in this section; further details appear in \Cref{app:6player_heterogeneous}.

\textbf{Market Settings:} We construct a five-firm heterogeneous market with market shares distributed as follows: 35\%, 25\%, 20\%, 15\%, and 5\%, containing large, medium-sized, and small firms.
The market shares of these five firms are used as inputs to build the simulation environment, as detailed in Section \ref{sec:model}. 
The firms are heterogeneous, each starting with distinct production-costs and initial market shares. Detailed experiment settings are provided in \Cref{app:exp_settings}.

\subsection{LLMs Competing Against One Another}\label{subsec:all_llms}
\begin{figure}[h]
    \centering
    \includegraphics[width=0.49\linewidth]{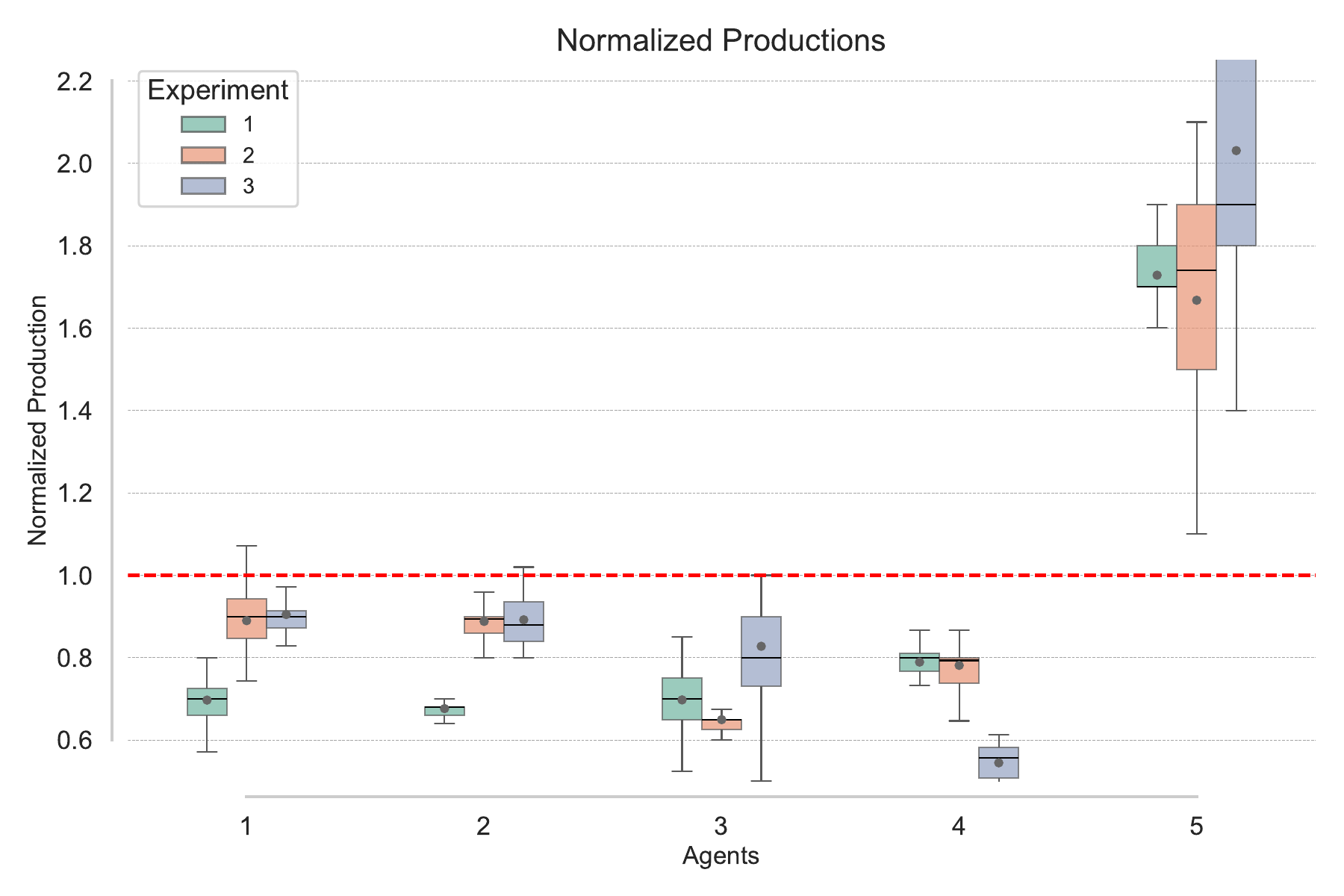}
    \includegraphics[width=0.49\linewidth]{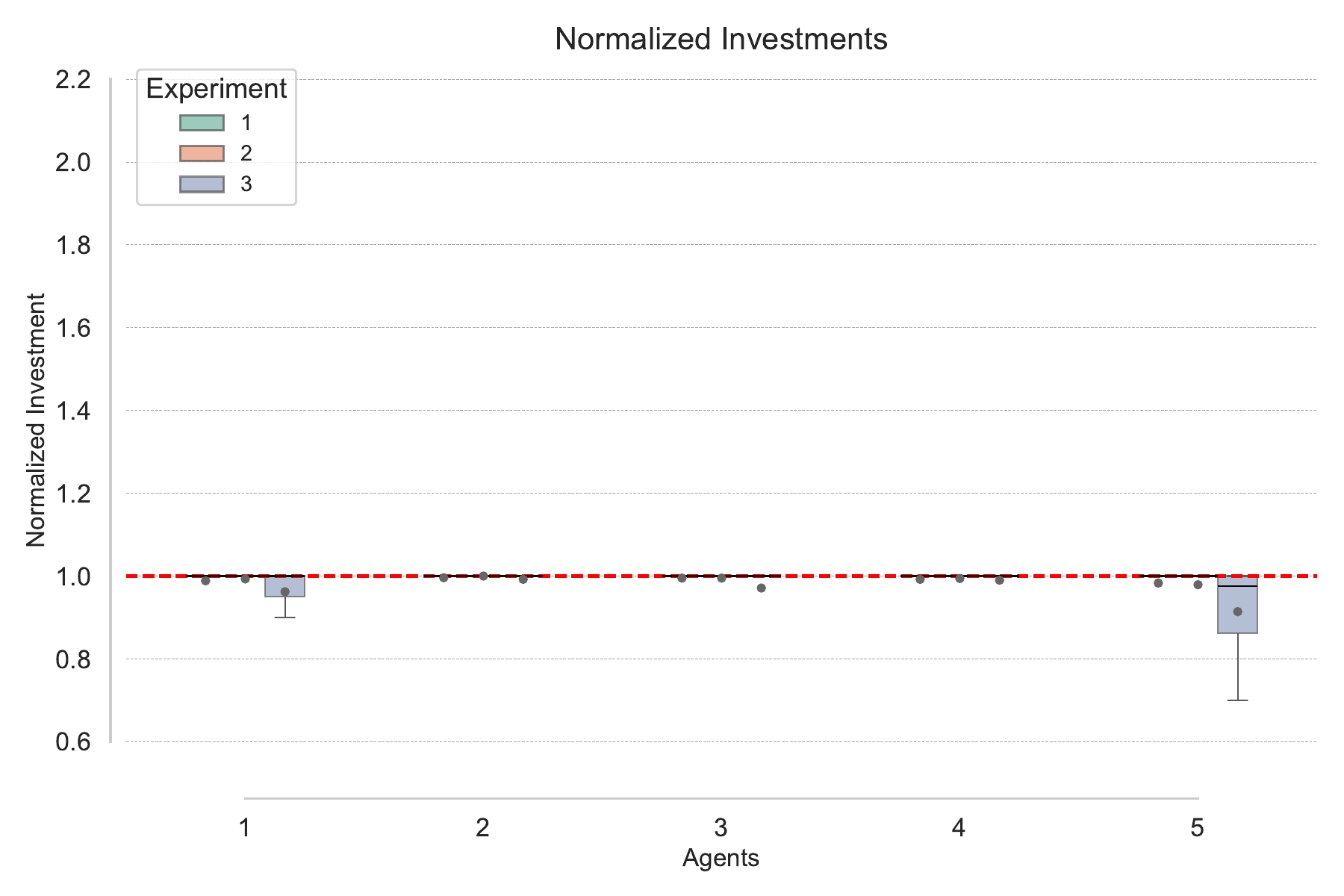}
    \caption{\textbf{Five Firm LLMs vs LLMs Decisions:} Box-plots showing LLM-driven production and investment decisions in the last 50 periods, for each of the five firms, for 3 independent experiments.  All values are normalized with respect to firm-wise Nash levels, as shown in the red lines.}
    \label{fig:LLMvsLLM_5_agents_1}
\end{figure}
\begin{figure}[h]
    \centering
    \includegraphics[width=0.40\linewidth]{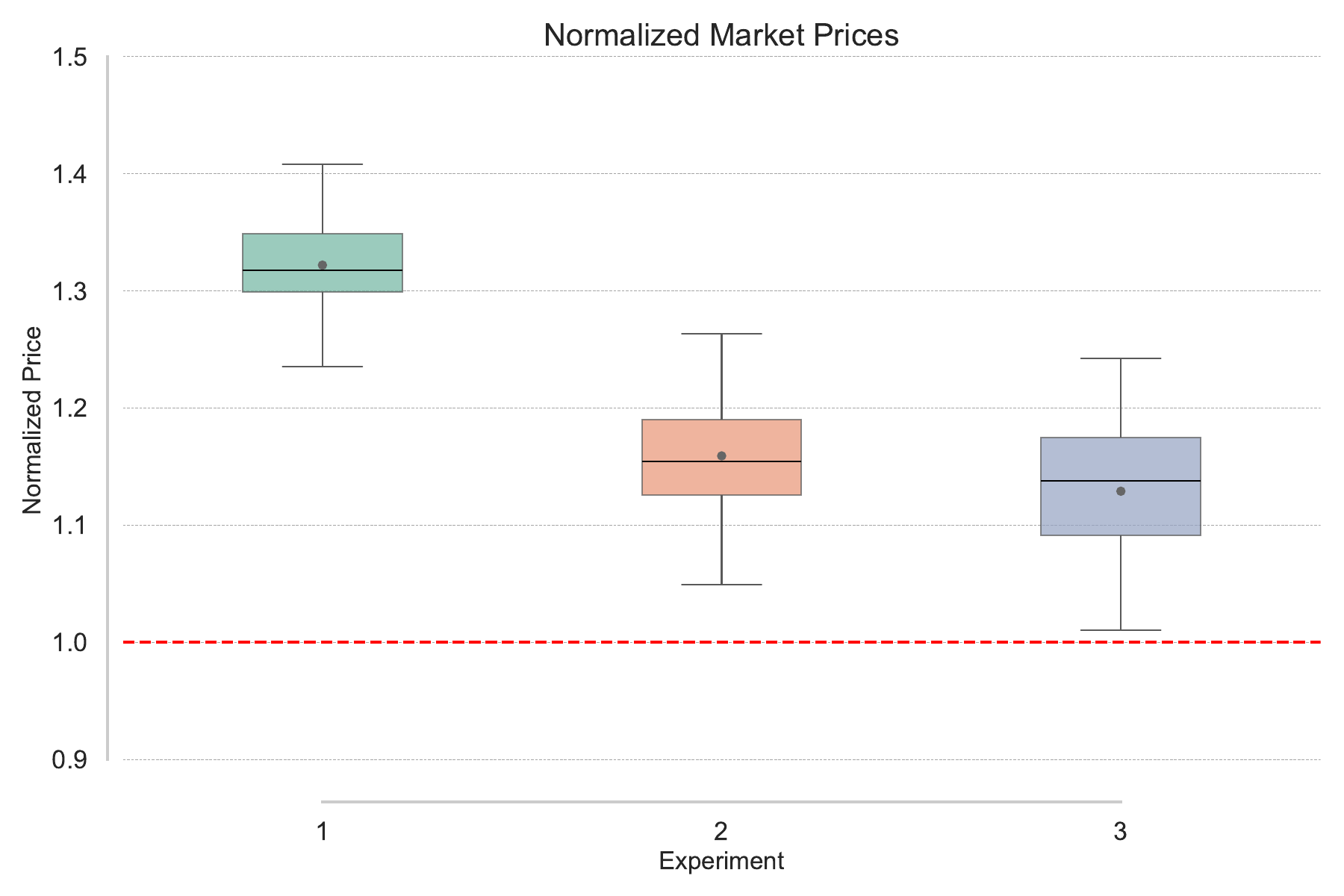}
    \includegraphics[width=0.59\linewidth]{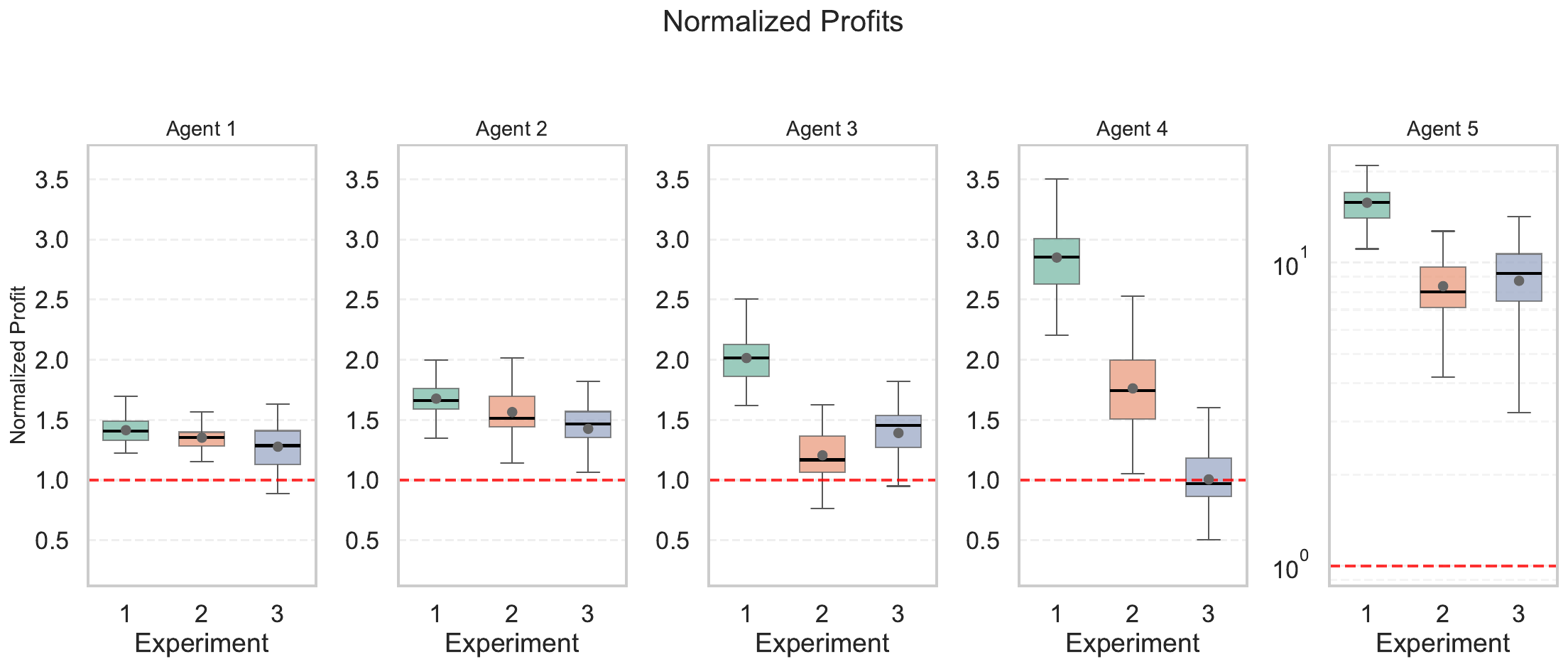}
    \caption{\textbf{Five Firm LLMs vs LLMs Market outcomes: } The boxplots show the impact of LLM decisions on the market price and firm-wise profits from the last 50 periods, in 3 independent experiments. }
    \label{fig:LLMvsLLM_5_agents_2}
\end{figure}
We run three independent runs of 300 periods each and report the analysis of the last 50 periods. The normalized firm-wise decisions and the outcomes comprising of firm-wise profits and market prices are given in \Cref{fig:LLMvsLLM_5_agents_1} and \Cref{fig:LLMvsLLM_5_agents_2}, respectively.

In all three experiments, the top four firms—collectively 95\% of the market share—consistently produce below Nash equilibrium levels, whereas the smallest firm, holding just 5\% of the market share, produces above Nash. Despite all firms making near-Nash investment decisions, these production choices lead to supracompetitive prices, with averages ranging from 12\% to 32\% above Nash. As a result, each firm earns at least Nash-level profits, indicating that this tacit collusive behavior is collectively beneficial, though not equally so. While mid-sized firms exhibit greater profit variability, the smallest firm gains disproportionately higher, with profits surpassing 200\% of Nash levels.

Although driven by the same LLM model (GPT-4), the strategic behavior shows variability among the agents, both in regard to the decision type and agent market share. Similar to two-firm experiments, all firms almost always choose Nash investments. As we validate below, this is because even with reduced production, a lower cost-per-unit (from high investment) maximizes the margin on every unit sold, making full investment a dominant strategy regardless of the quantity produced. On the other hand, the production categorizes firms into two groups. The bigger firms  (tacitly colluding) produce lower than Nash levels, bringing down the total production and affecting price, while the smallest firm produces above Nash (free-riding). Recall that the plots show normalized values with respect to \emph{their} Nash levels. The smallest can sustain their productions beyond Nash since its high production fails to affect the price significantly and, hence, fails to encompass a significant threat to the bigger firms. More precisely, the smallest firm exploits the high price sustained by the larger colluding firms and maximizes its own profit by increasing output,  assessing that its action is too minor to break the overall collusive structure. Our 6-firm experiments in the appendix validate these findings; additionally, they demonstrate that certain firms may act as intermediaries between the two groups in the 5-firm experiment by making decisions close to the Nash level.


\textbf{Strategic Deliberation Validation}
Similar to \Cref{sec:br_convergence}, the heterogeneous 5-firm interactions also reinforce patterns of strategic coordination. Validating investment optimality across all firms and periods, we find that 20\% investment was the profit-maximizing choice in 99.8\% of decisions, again confirming LLM optimization capabilities. Moreover, the convergence analysis reveals that from 100\% of the observed states, switching all agents to best-response production strategies yields convergence to Nash equilibrium within 6 iterations, reaching within 1\% of Nash values. The fact that LLMs maintain supra-competitive pricing despite this accessible Nash convergence path further substantiates the argument for intentional tacit collusion rather than optimization difficulties.

Overall, these experiments indicate that LLM behavior is neither fixed nor random; rather, decisions are shaped by both firm-specific contexts and competitive pressures. Supra-competitive pricing is primarily driven by the leading firms, which lower their output and thus reduce total production. Nevertheless, \emph{all} firms profit from the resulting above-Nash prices. These outcomes remain broadly consistent across all experimental setups. 

\subsection{Impact of Partial Regulation on Market Dynamics}\label{subsec:regulatory}

We next analyze the market dynamic where the top firms of the markets are regulated to play best response (BR) agents. Designating these scenarios as \emph{regulated markets}, we conduct two separate experimental scenarios in which we replace the LLM agents of (1) only the top firm (holding a 35\% market share) and (2) the two top firms, with best-response (BR) agents, respectively. Note that the top firm has more than $33\%$ of the market share, and the top two firms cover $60\%$ of the market. 
Our primary aim is to investigate whether introducing BR agents in a larger LLM market mix drives behavior toward Nash optimality, akin to the results on the two-firm LLM-versus-BR from \Cref{sec:duopoly}.  

\begin{figure}[h]
    \centering
    \includegraphics[width=0.49\linewidth]{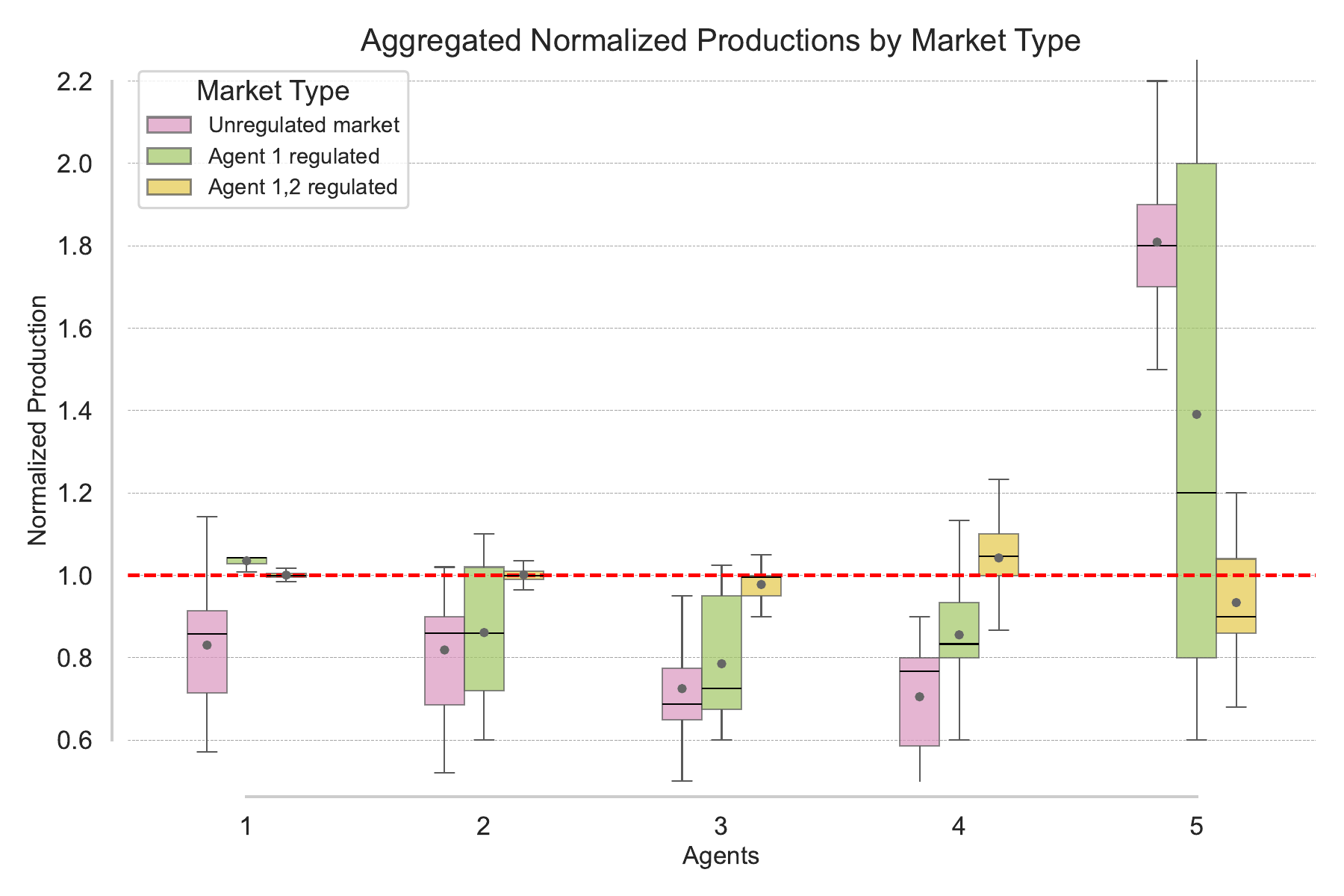}
    \includegraphics[width=0.49\linewidth]{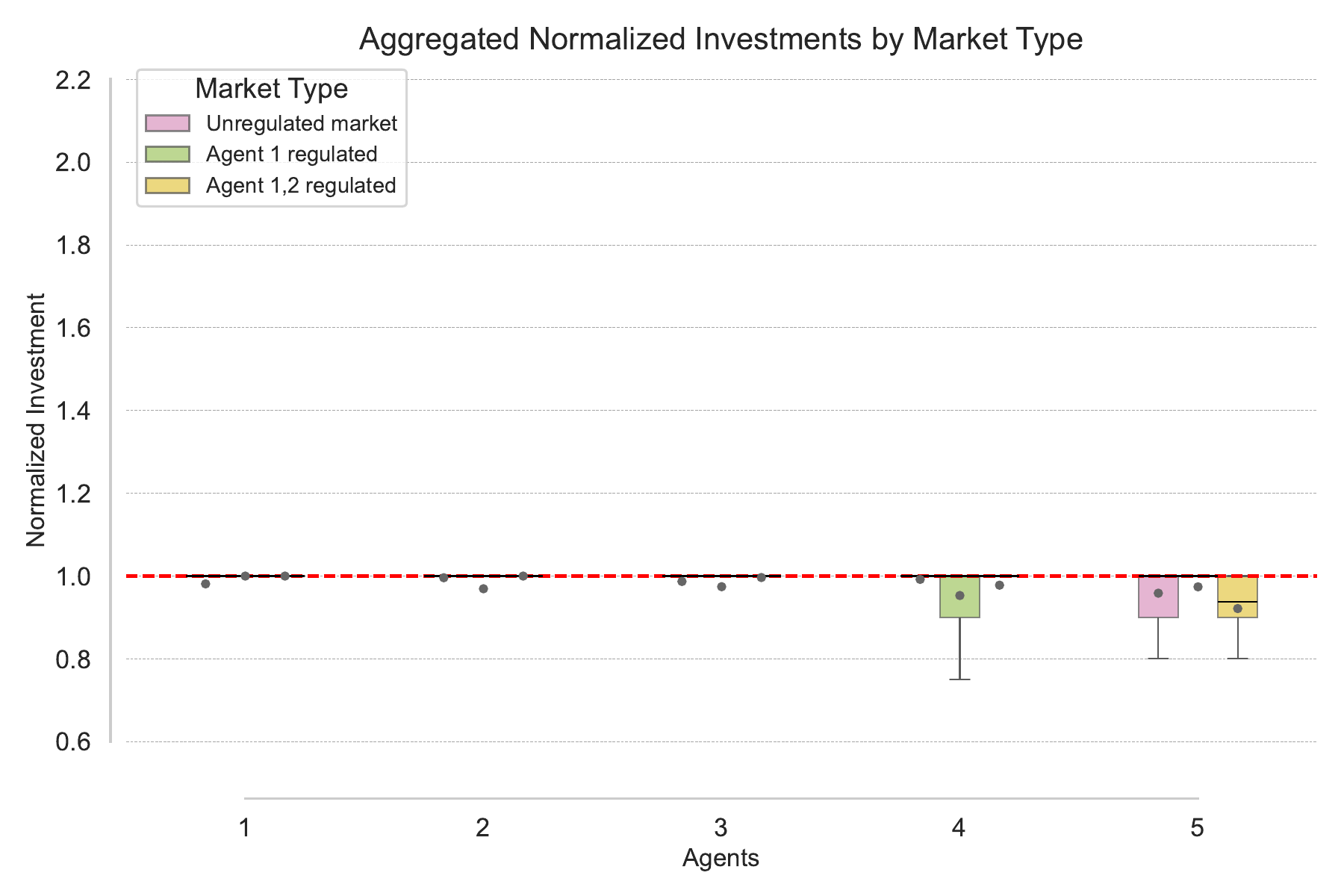}
    \caption{\textbf{LLM Decisions in unregulated and regulated markets:} Box-plots showing the normalized production and investment decisions from the last 50 periods, aggregated over three runs of each market type. The pink color depicts the unregulated market, green depicts the market with the top firm (with 35\% market share) regulated, and yellow with top two firms (35\% and 25\% market shares) regulated. The red lines show normalized Nash levels.}
    \label{fig:reg_5_agents_1}
\end{figure}
\begin{figure}[h]
    \centering
    \includegraphics[width=0.40\linewidth]{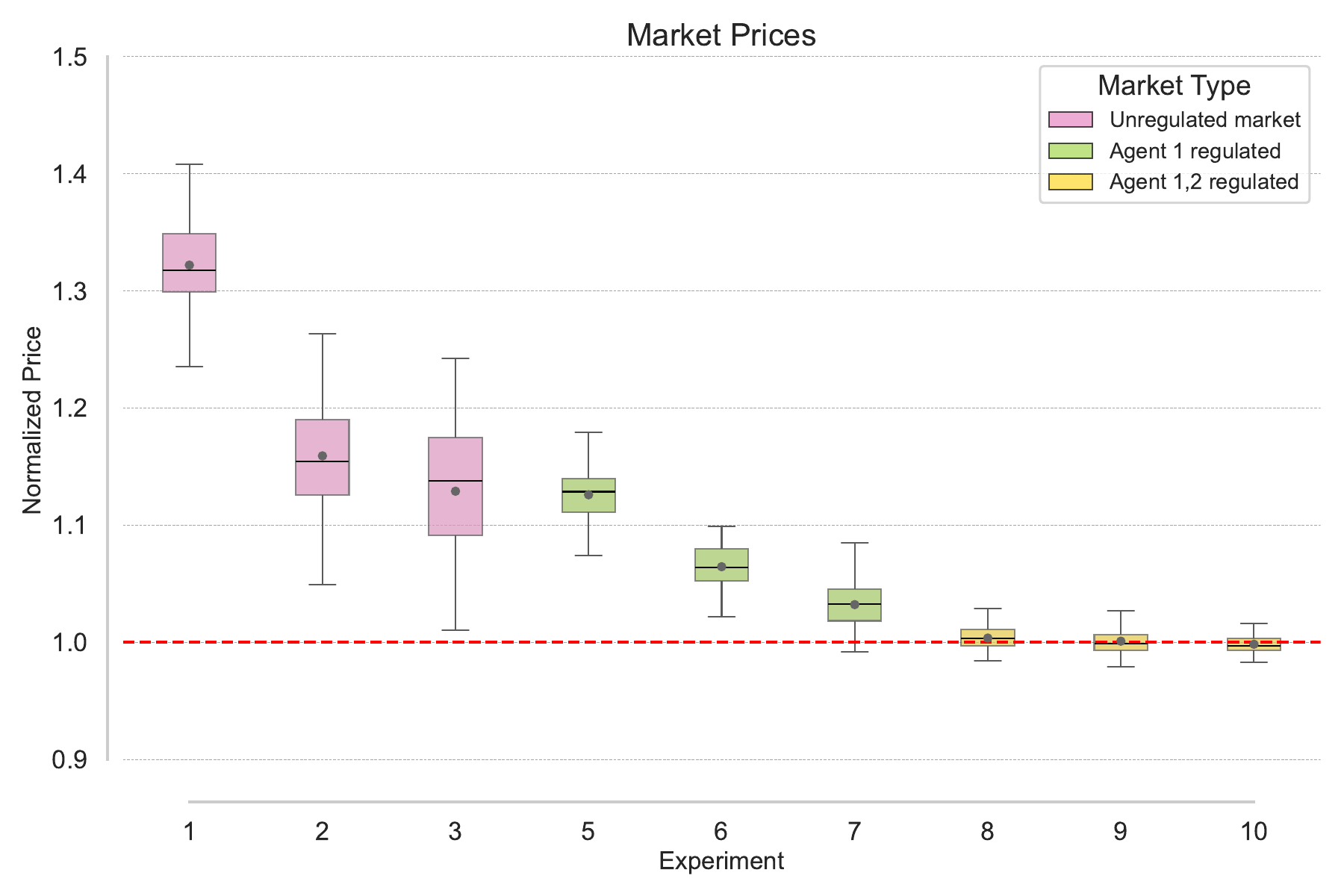}
    \includegraphics[width=0.59\linewidth]{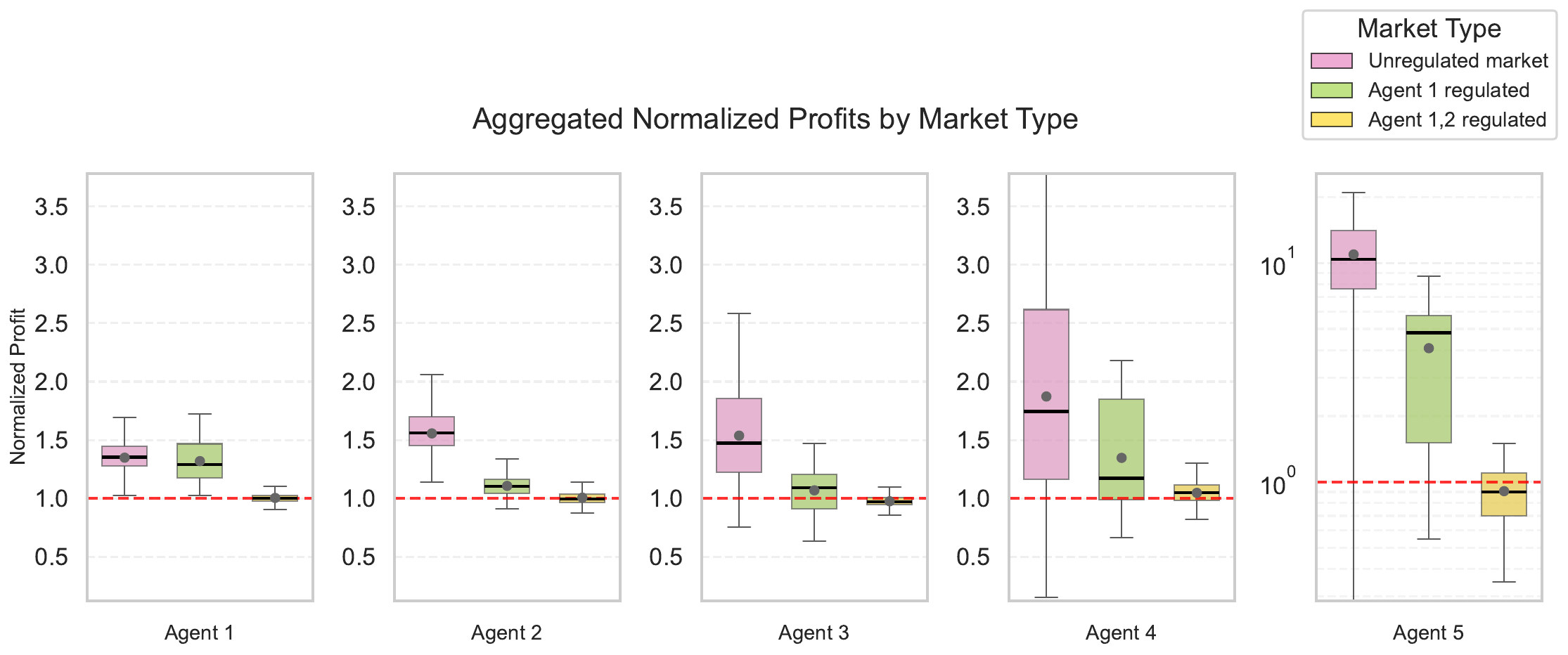}
    \caption{\textbf{Market Dynamics in unregulated and regulated markets:} The image on the left shows the market prices for 9 experiments, three for each market type. The image on the right shows the firm-wise normalized profits, aggregated over market types. }
    \label{fig:reg_5_agents_2}
\end{figure}

We conduct three (300-period) runs for each regulated scenario and compare their outcomes with three corresponding runs of the unregulated market. \Cref{fig:reg_5_agents_1} shows the average of the production and investment decisions made by the BR agents as well LLM agents in the last 50 periods, aggregated by the three runs of each market type. \Cref{fig:reg_5_agents_2} captures the corresponding impact on market price and firm-wise profits. See \Cref{app:5_player_run} for illustrative 300-period runs on unregulated and regulated market scenarios. 

The results show a clear impact of regulation: When the dominant firm is regulated, the production decisions of other market participants more closely approximate Nash levels, and these deviations narrow further when the top two firms are regulated. When both leading firms play best-response strategies, the smallest firm no longer produces well above its Nash level, likely due to the heightened competitive threat posed by the newly dominant unregulated firms. Since these are middle-sized, they may find the smallest firm's deviations from optimality significant. In other words, the smallest firm's action would now be significant to break the overall collusive structure, and this understanding gets reflected in the overall behavior. As for investments, they remain at roughly 90—100\% of Nash levels across all experiments, as high investment remains profitable—even for firms planning lower production. 

Consequently, market prices decrease toward Nash levels, approaching nearly perfect Nash equilibrium when the top two firms are regulated. Profits follow a similar trend. Each firm follows a different trajectory under regulation in terms of both decisions and profits.  Notably, these findings highlight a regulatory “trickle-down” effect: once the most influential firms are guided to behave optimally, even smaller firms converge toward Nash-like price and profit outcomes, despite some lingering deviations in production levels and profit distribution. \\~

In summary,  our findings reveal that LLM agents acting autonomously in dynamic markets engage in tacit collusion, resulting in supra-competitive prices and higher-than-Nash profits in unregulated settings. However, once major firms adopt best-response strategies, the market dynamics converge more closely to Nash equilibrium, demonstrating the LLMs’ responsiveness to regulatory constraints and ongoing feedback. Moreover, heterogeneity in market shares yields markedly different decision patterns among firms, reflecting their distinct incentives. These findings suggest that LLM-driven strategic behavior incorporates either a threat mechanism to deter unilateral deviations or a measure of implicit trust among competing agents—both of which sustain collusion in unregulated environments. From a policy perspective, the risk of antitrust issues in such collusive settings shows the potential of targeted regulation, such as best-response constraints, to steer markets toward Nash-like outcomes.


\section{Conclusions and Discussion}
With the increasing capabilities, affordability, and availability of artificial intelligence (AI), its integration into strategic decision-making within businesses is forthcoming. In this work, we ask: `Can AI make strategic decisions, given a competitive market context? And, if so, what may be the impact on market dynamics? We answer the first question in the affirmative, demonstrating \emph{multi-faceted strategic adeptness} in the context of oligopolistic Cournot-based markets. For the second, we show that the market may engage in the instance of `tacit collusion', manifested as \emph{sustained supra-competitive pricing}. 

We examine these questions using a Cournot-based strategic environment combining production and investment decisions, creating a testbed where traditional and AI-driven agents interact. Their performance is benchmarked against Nash strategies, for which we derive the existence of closed-form solutions (\Cref{thm-1}). We then make two main empirical contributions: First, we empirically demonstrate that LLMs, when trained through repeated interactions and performance feedback, exhibit strategic proficiency. However, when paired against each other in markets, they engage in boosting collective profits by increasing market price and dropping production levels—akin to tacit collusion. For Cournot-based oligopolistic settings, this challenges the understanding that ``successful collusion requires some system of monitoring and enforcement, even if or especially if there is no concrete agreement'' \citep{posner2017oligopoly, hylton2018oligopoly}.  Second, we show that introducing partial regulation, in the form of some firms committing to best-response (BR) strategies, can push market outcomes closer to Nash equilibria, thus mitigating collusive tendencies.

Our experiments demonstrate that LLMs' strategic behavior is \emph{neither purely random nor rigid}. Instead, they generate \emph{consistent} yet \emph{distinct} decisions across multiple dimensions: (i) decision types (production and investment), (ii) opponent types (BR, Nash, and LLM agents), and (iii)  market composition (market shares in various multi-player settings). LLMs make context-aware decisions, reflecting their understanding of how their multiple decisions shape their outcomes, as well as how the market behaves in its entirety, even when it is made of multiple self-acting agents. For instance, while larger firms predominantly drive collusive outcomes, smaller firms achieve higher percentage profit increases by adopting anti-collusive strategies. Overall, these findings indicate that LLMs are now sophisticated enough to serve as effective economic planning agents, capable of guiding firm-wide strategic decisions.

When some agents adhere to best-response (BR) strategies—simulating a form of partial regulation— collusive behavior diminishes as market outcomes converge toward the Nash benchmark. This has two key implications. First, this suggests that the risks of collusion may decrease even if a significant fraction of firms do not adopt AI-assisted strategies or are monitored by regulatory authorities. This potentially grants regulators additional time to implement appropriate antitrust measures. Second, in unregulated markets, LLM strategies perhaps pose a threat to deviators from collusive outcomes, or generate an implicit trust among competitors that discourages best-response strategies. Further research could explore emerging and current architectures to better understand how these strategic behaviors come about. 

Overall, our work provides new insights into the strategic behavior of LLMs and their potential impact on markets. The key insight here is the consistent collusive behavior of LLMs and the risks this presents in the growing autonomous-aided market. We test this out in synthetic Cournot-based markets with two decisions to be taken; for future research, it is worthwhile to explore more sophisticated strategy-making and context-evolving markets. For instance, relaxing our simulation assumptions—such as permitting dynamic investment constraints or allowing for firm entry and exit—would more closely reflect real-world complexities. It is also interesting to study how market asymmetries, complex cost structures, and multiple strategic variables (e.g., pricing, R\&D, advertising) interact with AI-driven strategies. 
For regulation, while the regulated market results show promise, enforcing strict adherence to best response strategies in stochastic and evolving markets may be tricky to accomplish, given practical and legal limitations. There is already a growing interest in detecting and resisting algorithmic collusion in the literature, and such efforts should now be meaningfully extended to LLM-driven collusion as well.

\newpage
\bibliographystyle{plainnat}
\bibliography{biblio}

\newpage
\appendix
\section{Notation}\label{app:notation}

\begin{table}[h]
\centering
\begin{tabular}{@{}ll@{}}
\toprule
\multicolumn{2}{l}{\textbf{Market Parameters (Exogenous Variables that Parametrize the Market)}} \\
\midrule
$\epsilon$ & Price-production elasticity parameter (typically $\epsilon = -1$) \\
$A$ & Scaling constant in price function, $A = \hat{Q}^{-\epsilon}$ \\
$\hat{p}$ & Status quo (baseline) market price (normalized to 1) \\
$\hat{q}_i$ & Status quo (baseline) production by firm $i$ \\
$\hat{Q}$ & Status quo total market production, $\sum_{i=1}^{n} \hat{q}_i$ \\
$\hat{w}_i$ & Status quo per unit production-cost for firm $i$ \\
$\hat{\pi}_i$ & Status quo naive profit for firm $i$ \\
$\hat{b}_i$ & Status quo investment by firm $i$, equal to $c\hat{\pi}_i$ \\
$c$ & Investment fraction (fixed at 0.2, i.e., 20\% of profits) \\
$k_1, k_2, k_3$ & Parameters of the Cobb-Douglas investment-cost function \\
\midrule
\multicolumn{2}{l}{\textbf{Decision Variables (Strategic Choices Made by Firms)}} \\
\midrule
$q_i$ & Production quantity by firm $i$ \\
$b_i$ & Investment decision by firm $i$ \\
$Q$ & Total market production, $\sum_{i=1}^{n} q_i$ \\
$Q_{-i}$ & Total production excluding firm $i$, $\sum_{j\neq i} q_j$ \\
\midrule
\multicolumn{2}{l}{\textbf{Derived Variables (Outcomes Determined by Market Mechanisms)}} \\
\midrule
$p(q_1,q_2,...,q_n)$ & Market price as a function of production quantities, $p = A (\sum_{i=1}^{n} q_i)^{\epsilon}$ \\
$w_i$ & Per unit production-cost for firm $i$, $w_i = k_1 b_i^{k_2} + k_3$ \\
$\pi_i(q_i)$ & Naive profit function for firm $i$ (before investment), $\pi_i(q_i) = [p - w_i]q_i$ \\
$\pi_i^f(Q_{-i}, q_i, b_i)$ & Formal profit function incorporating investment decisions, $\pi_i^f = [p - w_i]q_i - b_i$ \\
\bottomrule
\end{tabular}
\caption{Notation and Definitions}
\label{tab:notation}
\end{table}
\section{Proofs} \label{app: proofs}
\begin{proposition}
    The profit-maximizing Cournot framework is extended to capture investments with respect to production-costs using the Cobb-Douglas function in  \eqref{eq: cobb-douglas}.
\end{proposition}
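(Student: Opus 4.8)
The content of the proposition is constructive: I must exhibit parameters $k_1,k_2,k_3$ for the cost function \eqref{eq: cobb-douglas} that reproduce each firm's status-quo cost–investment pair $(\hat b_i,\hat w_i)$, where $\hat b_i = c\hat\pi_i$. Because the three parameters are shared by all (possibly many and heterogeneous) firms, the plan is not to fit them numerically but to derive a single closed-form relation between $\hat w_i$ and $\hat b_i$ expressed purely through the global market quantities $\epsilon$, $c$, and $\hat Q$, and then read the parameters off by matching to the power-law form. Heterogeneity should then enter only through the firm-specific value $\hat b_i$ substituted into this one curve.

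First I would simplify the status-quo cost. Substituting $A=\hat Q^{-\epsilon}$ into \eqref{eq: average_costs} collapses the cost to $\hat w_i = \hat p + \epsilon s_i$, where $s_i = \hat q_i/\hat Q$ is firm $i$'s market share. Feeding this into the naive profit \eqref{eq: profit_status_quo} gives $\hat\pi_i = (\hat p-\hat w_i)\hat q_i = -\epsilon s_i\hat q_i = -\epsilon\hat Q s_i^2$, using $\hat q_i = s_i\hat Q$. Setting $\hat b_i = c\hat\pi_i$ and eliminating the share via $s_i = (\hat w_i-\hat p)/\epsilon$ yields the quadratic link
\[
\hat b_i = -\frac{c\hat Q}{\epsilon}\,(\hat w_i-\hat p)^2 .
\]

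Inverting this relation is the heart of the argument. Solving for $\hat w_i$ and choosing the branch consistent with $\hat w_i<\hat p$ gives $\hat w_i = \hat p - \sqrt{-\epsilon/(c\hat Q)}\,\hat b_i^{1/2}$. Matching term-by-term with $w_i = k_1 b_i^{k_2}+k_3$ then forces $k_2 = 1/2$, $k_3 = \hat p$ (normalized to $1$), and $k_1 = -\sqrt{-\epsilon/(c\hat Q)}$ (which reduces to $-1/\sqrt{c\hat Q}$ at the typical $\epsilon=-1$). Since $k_1,k_2,k_3$ depend only on the global constants, the same curve passes exactly through every firm's point, so the relationship is exact rather than an approximate fit; I would also note the sensible boundary value $w_i\to\hat p$ as $b_i\to 0$, representing a firm with no cost advantage.

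The one genuine subtlety, and the step I would treat carefully, is the branch selection when undoing the square: I must verify $\hat w_i<\hat p$, which follows from $\epsilon<0$ together with $0<s_i<1$, to justify the negative root, and confirm feasibility ($\hat w_i\ge 0$, i.e.\ $-\epsilon s_i\le \hat p$) under the non-monopolistic assumption of condition (a). Everything else—the substitutions and the algebraic elimination—is routine.
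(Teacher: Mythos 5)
Your proposal is correct and takes essentially the same route as the paper's proof: both use the Cournot first-order condition to write $\hat w_i$ in terms of the market share, combine it with the naive profit to obtain the quadratic link $\hat b_i = -\tfrac{c\hat Q}{\epsilon \hat p}(\hat w_i-\hat p)^2$, and invert to read off $k_1 = -\sqrt{-\epsilon\hat p/(c\hat Q)}$, $k_2 = 1/2$, $k_3 = \hat p$. Your explicit justification of the branch choice ($\hat w_i < \hat p$ from $\epsilon<0$ and $0<s_i<1$) is a small point the paper leaves implicit, but the argument is otherwise the same.
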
 \label{prop: cobb-douglas}
\begin{proof}
    Let us assume $\hat{b}_i = c \hat{\pi}_i$, where $c$ is the scaling constant for profits that go into investments. 
    The aim is to populate the Cobb-Douglas curve $\eqref{eq: cobb-douglas}$ between $b_i$ and $w_i$ such that the status quo variables $\hat{b}_i = c \hat{\pi}_i$ and $\hat{w}_i$ for profit-maximizing firms satisfy the function.  Using the derived relationships for $\hat{p}$, $\hat{\pi}_i$ and $\hat{w}_i$ in Equations \eqref{eq: price}, \eqref{eq: profit_status_quo} and \eqref{eq: average_costs}, we naturally arrive at the relation between $b_i$ and $w_i$, as desired.
    
\noindent  
We start with differentiating the price equation to substitute the derivative in  \eqref{eq: average_costs}.
\begin{align*}
    \frac{d p}{d q_i} & = \epsilon A(\sum_{i=1}^n q_i)^{\epsilon-1} = \epsilon p \frac{1}{Q}
\end{align*}
For status quo variables $\hat{p}, \hat{w}_i$ and $\hat{q}_i$,  \eqref{eq: average_costs} leads to
\begin{align*}
    \hat{w}_i & = \hat{p} + \hat{q}_i  \left.\frac{d p}{d q_i}\right|_{q= \hat{q}_i} = \hat{p} + \hat{q}_i\frac{\epsilon \hat{p} }{\hat{Q}} \\
  \therefore  \hat{q_i} & = -\frac{(\hat{p}-\hat{w}_i)\hat{Q}}{\epsilon \hat{p}}
\end{align*}
Finally, we substitute $\hat{q}_i$ and $\hat{b}_i = c\hat{\pi}_i$ in  \eqref{eq: profit_status_quo} to write 
\begin{align*}
    \frac{\hat{b}_i}{c} = \hat{\pi}_i & = (\hat{p}-\hat{w}_i)\hat{q}_i = -(\hat{p}-\hat{w}_i)^2 \frac{\hat{Q}}{\epsilon \hat{p}}
\end{align*}
Rearranging the terms, we derive a relationship between status quo production-costs $\hat{w}_i$ and investment $\hat{b}_i$, which holds for all participating firms.
\begin{align*}
    \hat{w}_i & = \hat{p} - \sqrt{\left(-\frac{\epsilon \hat{p}}{\hat{Q}c}\right)}\hat{b}_i^{0.5}
\end{align*}
We may now smoothly extend it to model general production-costs $w_i$, when $b_i$ is the investment decision. 
\begin{align}
    w_i & =  k_1 b_i^{k_2} + k_3 = \left( -\sqrt{\left(-\frac{\epsilon \hat{p}}{\hat{Q}c}\right)} \right) b_i^{0.5} +  \hat{p} \label{eq:cobb_douglas_constants}
\end{align}
Thus, for a market with observable market shares $\hat{q}_i$, price $\hat{p}$, and elasticity $\epsilon$ (atypically negative parameter), we may define  \eqref{eq:cobb_douglas_constants} to capture the evolution of production-costs as a function of investments made. Note that reducing investment leads to production-costs above the status quo $\hat{w}_i$, which may account for recurring production-costs such as wages or wear and tear on existing capital. \end{proof}

We also compute the constants for our experimental settings. For the price $\hat{p}$ = 1, total production $\hat{Q} = 100$ (in percent), and elasticity parameter $\epsilon = -1$, we derive the scaling constant $A = 100$. For profit to investment fraction $c= 0.2$, this gives us $k_1 = -0.2236 $, $k_2 = 0.5$ and $k_3 = 1$.

\thmNE*
\begin{proof}
Recall that production-costs $\hat{w}_i$ and naive profits $\hat{\pi}_i$ are derived from the status quo  production $\hat{q}_i$ by maximizing the naive profit function in   \eqref{eq: profit_status_quo}. With a constant elasticity function and unit price, we fully characterize the market at status quo. We assume the investments $b_i$ to be $20\%$ of the derived naive profits $\hat{\pi}_i$ and use this to characterize the investment to production-costs relation in  \eqref{eq: cobb-douglas}. 

Now, when the firms are allowed to make production and investment decisions, the profits are captured by the formal profit function in  \eqref{eq: future_profit_function}, which computes profits as (revenue - production-cost - investments). We now show that the status quo variables, production $\hat{q}_i$ and investment $\hat{b}_i = 0.2\hat{\pi}_i$, constitute Nash Equilibrium (NE) given the profit function $\pi_i^f$. This will formally verify the market and set up the game for firms' future decision-making.  \\
From  \eqref{eq: future_profit_function}, the profit function for firm $i$ is given as
\begin{align*}
    \pi_i^f (q_i, b_i) & =  [A (Q_{-i} + q_i)^{\epsilon} - (k_1 b_i^{k_2}+k_3 )]q_i - b_i\\
    b_i & \in [0,0.2\hat{\pi}_i]\\
     q_i &  \geq 0
\end{align*}

Let $\hat{Q} = \sum_{i}\hat{q}_i = 100$ be the total production and $\epsilon = -1$. 
We have to show that for $Q_{-i} =\hat{Q} - \hat{q}_i$, the profit function $\pi_i^f$ is maximized at $\hat{q}_i$ and $\hat{b}_i = c\hat{\pi}_i$. This would imply that $\hat{q_i}$ and $b_i = 0.2\hat{\pi_i}$ for each firm $i$ are at Nash Equilibrium.

The proof follows these three steps:
\begin{enumerate}
    \item We first show that the profits are positive at $(q_i, b_i) = (\hat{q}_i, c\hat{\pi}_i)$ and dominate profits at $b_i = 0$, implying $q_i = 0$ or $b_i = 0$ are not optimal. \\
    Given the boundary conditions on $b_i$, the optimal solution is then either (A) when both first-order conditions are satisfied, i.e.,  $\frac{d \pi_i^f (q_i, b_i)}{d q_i} = \frac{d \pi_i^f (q_i, b_i)}{d b_i}= 0$ or (B) when $b_i = \hat{b}_i = c\hat{\pi}_i$.
    \item We then rule out option (A), as it either leads to no solution or negative $q_i$.
    \item Finally, we show that (B) is optimal and has $q_i = \hat{q}_i$ as the maxima, as desired.
\end{enumerate}
The main idea is based on the maximization of the profit function and analytically showing that not investing fully, i.e., choosing $b_i < 0.2 \hat{\pi_i}$ leads to sub-optimal profits. The remaining proof achieves the three steps rigorously.

\noindent
\textbf{Step 1}:\\
We first show the positivity of profits at $(q_i, b_i) = (\hat{q}_i, c\hat{\pi}_i)$. We have $A (Q_{-i} + q_i)^{-1} = \hat{p}$ and $(k_1 b_i^{k_2}+k_3 ) = \hat{w}_i$, by definition.
Then,
\begin{align*}
    \pi_i^f (q_i, b_i= c\hat{\pi}_i) & =  [A (Q_{-i} + q_i)^{-1} - (k_1 b_i^{k_2}+k_3 )]q_i - b_i\\
    & = (\hat{p} - \hat{w}_i)q_i - 0.2 \hat{\pi}_i & \text{-using  \eqref{eq: profit_status_quo}}\\
    & = 0.8 \times (\hat{p} - \hat{w}_i)q_i\\
    & = 0.8 \times \hat{q}_i^2 \frac{1}{\hat{Q}} \geq 0 & \text{-using  \eqref{eq: average_costs}}
\end{align*}
Next, for \( q_i = 0 \), we have \( \pi_i^f (q_i, b_i) = - b_i \leq 0 \). \\
Finally, we analyze for \( b_i = 0 \), where \( w_i = (k_1 b_i^{k_2} + k_3 ) = 1 \). The profit function becomes \( \pi_i^f(q_i, b_i = 0) = [A (Q_{-i} + q_i)^{-1} - 1]q_i \). The maxima of this function is attained at \( q_i = \sqrt{A Q_{-i}} - Q_{-i} \). 
In Lemma \ref{lem: step_one_NE}, we verify that $\pi_i^f (q_i = \sqrt{AQ_{-i}} - Q_{-i}, b_i = 0) = A+ Q_{-i} - 2\sqrt{A Q_{-i}}$ is always less than $0.8 \times \hat{q}_i^2 \frac{1}{\hat{Q}}$. 

\begin{lemma} \label{lem: step_one_NE}
For \( A = 100 \) and \( Q_{-i} > 5 \) (as guaranteed by $\max _{i \in N}q_i < 95$), the inequality
\[
A + Q_{-i} - 2\sqrt{A Q_{-i}} \leq 0.8 \times \frac{(\hat{Q} - Q_{-i})^2}{100}
\]
holds.
\end{lemma}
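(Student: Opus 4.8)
The plan is to reduce the stated inequality to an elementary algebraic fact by exposing a perfect-square structure present on both sides. First I would substitute the experimental values $A = 100$ and $\hat{Q} = 100$ and introduce the shorthand $x = Q_{-i}$, so that $\hat{q}_i = \hat{Q} - Q_{-i} = 100 - x$ and the non-monopoly constraint $\max_{i} q_i < 95$ (equivalently $\hat{q}_i < 95$) translates to $5 < x < 100$. With these substitutions the left-hand side becomes $100 + x - 2\sqrt{100x} = 100 + x - 20\sqrt{x}$, and the target inequality reads $100 + x - 20\sqrt{x} \le 0.008\,(100 - x)^2$, where $0.008 = 0.8/100 = 1/125$.

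The key observation is that the left-hand side is a perfect square, $100 + x - 20\sqrt{x} = (10 - \sqrt{x})^2$, and that the same factor appears on the right, since $100 - x = (10 - \sqrt{x})(10 + \sqrt{x})$ and hence $(100 - x)^2 = (10 - \sqrt{x})^2 (10 + \sqrt{x})^2$. For $x < 100$ we have $\sqrt{x} < 10$, so the common factor $(10 - \sqrt{x})^2$ is strictly positive and may be divided out of both sides without flipping the inequality.

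After this cancellation the claim collapses to $1 \le 0.008\,(10 + \sqrt{x})^2$, i.e. $(10 + \sqrt{x})^2 \ge 125$. Since the hypothesis $x > 5$ gives $\sqrt{x} > \sqrt{5} > 2.23$, we obtain $10 + \sqrt{x} > 12.23$ and therefore $(10 + \sqrt{x})^2 > 149 > 125$, which establishes the inequality. I would note that the reduced threshold $\sqrt{x} \ge 5\sqrt{5} - 10 \approx 1.18$ is far weaker than $x > 5$, so the bound holds with a comfortable margin and is in no way tight at the stated hypothesis.

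I do not expect a genuine obstacle here; the entire content is the recognition that $100 + x - 20\sqrt{x} = (10 - \sqrt{x})^2$, after which the proof is purely mechanical. The one point I would state carefully is the boundary behaviour at $x = 100$ (that is, $\hat{q}_i = 0$), where both sides vanish and the division by $(10 - \sqrt{x})^2$ is illegitimate; this degenerate case is excluded by the strict positivity $\hat{q}_i > 0$ guaranteed by the non-monopoly assumption, and the perfect-square rewriting makes transparent that equality is approached only in the limit $x \to 100$.
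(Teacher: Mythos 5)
Your proof is correct, and it takes a genuinely different route from the paper's. The paper works with the difference function $f(Q_{-i}) = 100 + Q_{-i} - 20\sqrt{Q_{-i}} - 0.008(100 - Q_{-i})^2$, evaluates the endpoints ($f(5)<0$, $f(100)=0$), and then tries to conclude by claiming $f'(Q_{-i}) = 1 - 10/\sqrt{Q_{-i}} + 0.016(100 - Q_{-i}) < 0$ on the whole range, i.e.\ that $f$ is decreasing. Your argument instead exposes the perfect-square structure: writing the left side as $(10-\sqrt{x})^2$ and the right side as $0.008\,(10-\sqrt{x})^2(10+\sqrt{x})^2$, cancelling the strictly positive common factor for $x<100$, and reducing everything to $(10+\sqrt{x})^2 \ge 125$, which $x>5$ satisfies with room to spare. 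This buys three things. First, it is purely algebraic and exact, with no calculus and no numerical endpoint evaluations. Second, it quantifies the slack (the inequality actually holds down to $x \approx 1.39$) and pinpoints the unique equality case $x=100$, which you correctly handle separately since the cancellation is disallowed there (both sides vanish, so the inequality holds trivially). Third, and most significantly, it is immune to a real defect in the paper's argument: the claimed sign of $f'$ is false on part of the interval --- for instance $f'(25) = 1 - 2 + 1.2 = 0.2 > 0$ --- and the stated logic is internally inconsistent anyway, since a function decreasing on all of $(5,100)$ with $f(5)<0$ could never return to $f(100)=0$. The lemma itself is true (in fact $f$ first decreases and then increases back to $0$ at the right endpoint), but the paper's monotonicity argument does not establish it, whereas yours does.
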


\begin{proof}
We have $A = \hat{Q} = 100$. Define \( f(Q_{-i}) = 100 + Q_{-i} - 20\sqrt{Q_{-i}} - 0.008(100 - Q_{-i})^2 \). Evaluating at \( Q_{-i} = 5 \) and \( Q_{-i} = 100 \):
\[
f(5) = 100 + 5 - 20\sqrt{5} - 0.008 \times 95^2 < 0,
\]
\[
f(100) = 100 + 100 - 20 \times 10 - 0.008 \times 0 = 0.
\]
The derivative is
\[
f'(Q_{-i}) = 1 - \frac{10}{\sqrt{Q_{-i}}} + 0.016(100 - Q_{-i}).
\]
For \( Q_{-i} > 5 \), \( f'(Q_{-i}) < 0 \), indicating that \( f(Q_{-i}) \) is decreasing. Since \( f(5) < 0 \) and \( f(100) = 0 \), it follows that \( f(Q_{-i}) \leq 0 \) for all \( Q_{-i} > 5 \).
\end{proof}

\noindent
Therefore, \( \pi_i^f(q_i = \sqrt{100 Q_{-i}} - Q_{-i}, b_i = 0) \) is less than \( 0.8 \times \hat{q}_i^2 \frac{1}{\hat{Q}} \), implying the maximum profits at \( b_i = 0 \) are dominated by  $\pi_i^f(\hat{q}_i, c\hat{\pi}_i)$. This concludes Step 1.\\
\noindent
\textbf{Step 2}:\\
Next, we test the optimality of the solution where both the first-order conditions are satisfied, i.e., (A):
\begin{align}
    \frac{d \pi_i^f (q_i, b_i)}{d q_i}  & = \Big[A (Q_{-i}+q_i)^{-1}-(k_1 b_i^{k_2}+k_3)\Big]
-\frac{A q_i}{(Q_{-i}+q_i)^2}
=0 \label{eq: first_1st_order} \\
    \frac{d \pi_i^f (q_i, b_i)}{d b_i}  & = -k_1 k_2 b_i^{k_2-1}q_i - 1 \label{eq: second_1st_order}
\end{align}

\noindent
In Proposition \ref{prop: cobb-douglas}, we derived the parameters
$k_1,k_2,k_3$ in terms of $c, A$ and $Q$:
$$k_1 = -\sqrt{\frac{A}{cQ^2}}, \ k_2 = 0.5, \ k_3 = \frac{A}{Q} $$
\noindent 
Substituting $k_1,k_2,k_3$ and fixing $b_i = c' \hat{\pi}_i
\leq 0.2\hat{\pi}_i$, we can simplify  \eqref{eq: first_1st_order} and \eqref{eq: second_1st_order}:
\begin{align}
    \left[\frac{\hat{Q}}{Q_{-i}+q_i} + \frac{\hat{q}_i}{\hat{Q_{-i}}}\sqrt{\frac{c'}{0.2}} - 1\right] - \frac{q_i \hat{Q}}{(Q_{-i}+q_i)^2} & = 0 \\
    \sqrt{0.2c'} - \frac{q_i}{2\hat{q}_i} & = 0
\end{align}
Solving this system (with $\hat{Q} = 100$) gives us expression $h(Q_{-i},q_i) = 0 $, with $h(Q_{-i},q_i)$ defined as follows:
\[
h(Q_{-i}, q_i) = \frac{100}{Q_{-i} + q_i} + \frac{q_i}{2*0.2*\hat{Q_{-i}}} - 1 - \frac{100 q_i}{(Q_{-i} + q_i)^2} 
\]

We first note that  $h(Q_{-i}, q_i)$ is a convex function in $q_i$ as its double derivative remains positive for all $Q_{-i}$. In other words, for any fixed $Q_{-i}$, a unique minima exists for $h(Q_{-i}, q_i)$, since
 \[h''(Q_{-i}, q_i)  =   \frac{600Q_{-i}}{(Q_{-i} + q_i)^4} > 0\]

We now want to solve $h(Q_{-i}, q_i) = 0 $ to find $q_i$ and thereby $b_i$, given any $Q_{-i}$.\\
However, we next show that:
\begin{itemize}
    \item for the range $5\leq Q_{-i}\leq 94$, we have $h(Q_{-i},q_i^*)$, i.e., the minima of $h(Q_{-i}, q_i)$ satisfy $h(Q_{-i},q_i^*)> 0$, implying $h(Q_{-i},q_i)>0$ for any $Q_{-i}$ in the range, (Lemma \ref{lem: NE_2})
    \item for the range $94\leq Q_{-i}$, the critical point $q^*$ is negative AND $h(Q_{-i},q_i = 0)>0$, implying that $h(Q_{-i},q_i) > 0$ for any $Q_{-i}$ in range $94\leq Q_{-i}$ (Lemma \ref{lem: NE_3}). 
\end{itemize}
Thus, this together implies that no solution for $h(Q_{-i},q_i) = 0$ leads to a positive $q_i$. 

The minima \( q^* \), may be found by setting the derivative \( h'(Q_{-i},q_i) = 0 \), yielding:
\[
q^* = 20 \cdot Q_{-i}^{1/3} - Q_{-i}
\]
Substituting \( q^* \) back into \( h(Q_{-i},q_i) \), we simplify to:
\[
h(Q_{-i},q^*) = 0.7521 \cdot Q_{-i}^{1/3} - 0.025 \cdot Q_{-i} - 1
\]

\begin{lemma}\label{lem: NE_2}
For \(5 \leq Q_{-i} \leq 94\), the function \(h(Q_{-i},q^*)\) satisfies \(h(Q_{-i},q^*) > 0\) for all $q_i$, implying no solution to $h(Q_{-i},q^*) = 0$
\end{lemma}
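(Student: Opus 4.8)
The plan is to exploit the convexity of $h(Q_{-i}, q_i)$ in $q_i$, which is already in hand via $h''(Q_{-i}, q_i) = \frac{600 Q_{-i}}{(Q_{-i}+q_i)^4} > 0$. On the domain $q_i > -Q_{-i}$ (where the $\frac{100}{Q_{-i}+q_i}$ term is well-defined and the second derivative is positive), strict convexity guarantees a unique global minimizer at the critical point $q^* = 20\, Q_{-i}^{1/3} - Q_{-i}$. The crucial simplification is that this minimum is global over the entire domain, and the feasible region $q_i \geq 0$ sits inside it; hence I never need to check whether $q^*$ itself is nonnegative. It therefore suffices to show the minimum \emph{value} is strictly positive. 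Substituting $q^*$ back into $h$ and collecting the fractional-power terms produces the reduced one-variable expression $h(Q_{-i}, q^*) = g(Q_{-i})$ with $g(x) = 0.7521\, x^{1/3} - 0.025\, x - 1$, so the whole lemma collapses to proving $g(x) > 0$ on $[5, 94]$.

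For this scalar inequality the key observation is that $g$ is concave on $(0,\infty)$, since $g''(x) = -\tfrac{2}{9}(0.7521)\, x^{-5/3} < 0$. A concave function on a closed interval attains its minimum at an endpoint, so I would only need to verify $g(5) > 0$ and $g(94) > 0$; concavity then forces $g > 0$ on the whole interval without any interior analysis. Direct evaluation gives $g(5) \approx 0.16$ and $g(94) \approx 0.07$, both strictly positive, completing the argument and ruling out any root of $h(Q_{-i}, q_i) = 0$ in this range.

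I expect the main obstacle to be the endpoint check at $Q_{-i} = 94$, where $g(94) \approx 0.07$ is only marginally positive: a loose numerical estimate of $94^{1/3}$ could easily flip the sign. To make the inequality airtight I would pin down an explicit rational lower bound, e.g. $4.54^3 = 93.577 < 94$ so $94^{1/3} > 4.54$, which already yields $g(94) > 0.7521(4.54) - 2.35 - 1 \approx 0.064 > 0$. A secondary point demanding care is the algebra that produces the constants $0.7521$ and $0.025$: these emerge from substituting $q^* = 20\, Q_{-i}^{1/3} - Q_{-i}$ into the three terms of $h$ and simplifying, and since both the concavity and the delicate endpoint bound hinge on the $Q_{-i}^{1/3}$ coefficient, I would double-check that coefficient carefully before trusting the endpoint estimates.
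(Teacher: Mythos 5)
Your proof is correct, and the overall skeleton matches the paper's: both reduce the two-variable problem to the scalar function $g(Q_{-i}) = h(Q_{-i},q^*)$ obtained by substituting the convexity-given minimizer $q^*$, and then show $g>0$ on $[5,94]$. Where you differ is in how that scalar positivity is established. The paper computes $g'$, locates the interior critical point $Q_{-i}\approx 31.7$, notes $g''<0$ so this is the unique maximum, and then evaluates $g$ at the three points $5$, $31.7$, $94$, concluding positivity from the increasing-then-decreasing shape (its appeal to the ``Intermediate Value Theorem'' is really just this unimodality argument, and the evaluation at the maximum is superfluous). You instead observe directly that $g$ is concave, so its minimum on a closed interval is attained at an endpoint, and only check $g(5)>0$ and $g(94)>0$. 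This buys three things: you never need to locate or approximate the interior critical point; you need one fewer evaluation; and your endpoint check at $94$ is made airtight with the rational bound $4.54^3 < 94$, whereas the paper relies on decimal approximations. You also make explicit a subtlety the paper passes over silently: since the minimum at $q^*$ is global over the whole domain $q_i > -Q_{-i}$, the bound applies to the feasible set $q_i\ge 0$ regardless of the sign of $q^*$ --- which matters, because $q^* = 20Q_{-i}^{1/3}-Q_{-i}$ is actually negative for $Q_{-i}\in(20^{3/2},94]\approx(89.44,94]$, inside the lemma's range. One caution you rightly anticipated: redoing the substitution of $q^*$ into $h$ yields the leading coefficient $0.75$ rather than the paper's $0.7521$ (the cross term contributes $-5Q_{-i}^{-1/3}+0.25Q_{-i}^{1/3}$, cancelling the $5Q_{-i}^{-1/3}$ term); with $0.75$ both endpoint values remain strictly positive ($g(5)\approx 0.157$, $g(94)\approx 0.06$), so the discrepancy is immaterial to the conclusion, but your version of the argument should quote the coefficient it actually derives.
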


\begin{proof}
Given the function:
\[
h(Q_{-i},q^*) = 0.7521 \cdot Q_{-i}^{1/3} - 0.025 \cdot Q_{-i} - 1
\]
Compute its derivative with respect to \( Q_{-i} \):
\[
h'(Q_{-i}) = \frac{0.7521}{3} \cdot Q_{-i}^{-2/3} - 0.025 = 0.2507 \cdot Q_{-i}^{-2/3} - 0.025
\]
Set \( h'(Q_{-i}) = 0 \) to find critical points:
\[
0.2507 \cdot Q_{-i}^{-2/3} - 0.025 = 0 \\
\Rightarrow Q_{-i}^{2/3} = \frac{0.2507}{0.025} \approx 10.028 \\
\Rightarrow Q_{-i} = (10.028)^{3/2} \approx 31.7
\]
Since $h''(Q_{-i})<0$,  critical point 31.7 is the unique maxima:
\[
\begin{cases}
Q_{-i} < 31.7 & \Rightarrow h'(Q_{-i}) > 0 \quad \text{(Function is Increasing)} \\
Q_{-i} > 31.7 & \Rightarrow h'(Q_{-i}) < 0 \quad \text{(Function is Decreasing)}
\end{cases}
\]
Evaluate \( h(Q_{-i}) \) at key points:
\begin{align*}
h(5) & = 0.7521 \cdot 5^{1/3} - 0.025 \cdot 5 - 1 \approx 0.162 > 0 \\
h(31.7) & = 0.7521 \cdot 31.7^{1/3} - 0.025 \cdot 31.7 - 1 \approx 0.5875 > 0 \\
h(94) & = 0.7521 \cdot 94^{1/3} - 0.025 \cdot 94 - 1 \approx 0.056 > 0
\end{align*}
By the Intermediate Value Theorem and the monotonicity of \( h(Q_{-i}) \), it follows that:
\[
\boxed{ \text{For } 5 \leq Q_{-i} \leq 94, \quad h(Q_{-i},q_i) > 0 }
\]
\end{proof}

\begin{lemma}\label{lem: NE_3}
For \(Q_{-i} \geq 94\), the critical point \(q^*\) satisfies \(q^* < 0\). Moreover, $h(Q_{-i},q_i = 0 )>0$, implying that $h(Q_{-i},q_i )$ remains positive for $Q_{-i} \geq  94$ and $q_i\geq 0$. 
\end{lemma}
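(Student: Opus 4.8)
The plan is to leverage the strict convexity of $h(Q_{-i},q_i)$ in $q_i$, already recorded via $h''(Q_{-i},q_i) = 600\,Q_{-i}/(Q_{-i}+q_i)^4 > 0$. A strictly convex function is decreasing to the left of its unique minimizer $q^*$ and increasing to its right, so the sign of $h$ on the feasible region $q_i \ge 0$ is controlled entirely by where $q^*$ sits relative to $0$ together with the boundary value $h(Q_{-i},0)$. The whole argument therefore reduces to two short computations.

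First I would locate $q^*$. From the closed form $q^* = 20\,Q_{-i}^{1/3} - Q_{-i}$, the inequality $q^* < 0$ is equivalent to $Q_{-i}^{2/3} > 20$, i.e.\ $Q_{-i} > 20^{3/2} \approx 89.44$; since the hypothesis supplies $Q_{-i} \ge 94$, this holds at once. Consequently $q^*$ lies strictly outside the feasible half-line, and $h(Q_{-i},\cdot)$ is increasing throughout $q_i \ge 0$, so its feasible minimum is attained at $q_i = 0$. Second, I would evaluate the boundary: setting $q_i = 0$ annihilates both $q_i$-proportional terms and leaves $h(Q_{-i},0) = 100/Q_{-i} - 1$, which is strictly positive exactly when $Q_{-i} < 100$. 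Combining the two facts, $h(Q_{-i},q_i) \ge h(Q_{-i},0) > 0$ for every $q_i \ge 0$, so the first-order system (option (A)) admits no nonnegative solution in this regime; together with \Cref{lem: NE_2} this covers the whole admissible range $5 \le Q_{-i} < 100$ and forces the optimum to the boundary $b_i = c\hat{\pi}_i$.

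The delicate point—and the reason this case needs separate treatment from \Cref{lem: NE_2}—is precisely the behavior near $Q_{-i} = 100$. For such $Q_{-i}$ the \emph{unconstrained} minimum value $h(Q_{-i},q^*)$ actually dips below zero, so the \Cref{lem: NE_2} route of bounding the global minimum genuinely fails there; what rescues the argument is that this minimum is attained at an \emph{infeasible} $q^* < 0$, letting us retreat to the boundary value $h(Q_{-i},0)$. That boundary value, however, degrades to equality at $Q_{-i} = 100$, so the conclusion hinges on the strict inequality $Q_{-i} < 100$. I would secure this from the modeling setup rather than from the bare statement $Q_{-i} \ge 94$: in the Nash check $Q_{-i} = \hat{Q} - \hat{q}_i = 100 - \hat{q}_i$, and since every firm carries strictly positive status-quo output $\hat{q}_i > 0$, we have $Q_{-i} < 100$ with a firm-specific gap bounded away from zero. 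Making this dependence explicit is the main obstacle, as it is the only place where positivity is not uniform and where a vanishingly small firm $i$ would otherwise threaten the bound.
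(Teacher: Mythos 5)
Your proposal is correct and follows essentially the same route as the paper's proof: the same closed form $q^* = 20\,Q_{-i}^{1/3} - Q_{-i}$ with the same algebra giving $q^* < 0$ once $Q_{-i} > 20^{3/2} \approx 89.44$, the same boundary evaluation $h(Q_{-i}, 0) = 100/Q_{-i} - 1$, and the same convexity/monotonicity argument on $q_i \geq 0$. If anything, you are slightly more careful than the paper at one point: the paper justifies $100/Q_{-i} - 1 > 0$ by writing $Q_{-i} \leq \hat{Q}$, which only yields a non-strict inequality in the limiting case $Q_{-i} = 100$, whereas you correctly note that the strict bound $Q_{-i} < 100$ is what is needed and secure it from $\hat{q}_i > 0$ in the decomposition $Q_{-i} = \hat{Q} - \hat{q}_i$.
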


\begin{proof}
Given the critical point:
\[
q^* = 20 \cdot Q_{-i}^{1/3} - Q_{-i}
\]
To show \(q^* < 0\) for \(Q_{-i} \geq  94\), consider the inequality:
\[
20 \cdot Q_{-i}^{1/3} - Q_{-i} < 0 \\
\Rightarrow 20 \cdot Q_{-i}^{1/3} < Q_{-i} \\
\Rightarrow 20 < Q_{-i}^{2/3} \\
\Rightarrow Q_{-i} > 20^{3/2} = \sqrt{20^3} = \sqrt{8000} \approx 89.44
\]
Therefore, for \(Q_{-i} > 89.44\):
\[
q^* < 0
\]
For $q_i = 0$, we have $h(Q_{-i, q_i}) = 100/Q_{-i} -1 > 0$, as $Q_{-i} \leq \hat{Q} $. \\
Thus, we again have:
\[
\boxed{ \text{For } Q_{-i} \geq  95, \quad h(Q_{-i},q_i) > 0 } 
\]
\end{proof}
Overall, this implies that solving both the first-order conditions doesn't yield any feasible solution for any $Q_{-i}$. 

\noindent
\textbf{Step 3}:
Finally, we show that for $b_i = 0.2 \hat{\pi}_i$, the optimal $q_i$ equals $\hat{q}_i$, i.e., (B) is true. The profit function simplifies in this context:

\begin{align*}
    \pi_i^f (q_i, b_i) & = \left[A (Q_{-i} + q_i)^{-1}- \hat{w}_i\right] q_i - 0.2 \hat{\pi}_i & \text{-using  \eqref{eq: profit_status_quo}}\\
    & = \left[\frac{A}{(Q_{-i} + q_i)} - \left(1 - A \frac{\hat{q}_i}{\hat{Q}^2}\right)\right] q_i - 0.2 \hat{q}_i^2 \frac{1}{\hat{Q}}
\end{align*}
Using $A = \hat{Q}$ and $Q_{-i}+\hat{q_i} = \hat{Q}$, this simplifies to:
\[
\pi_i^f (q_i, b_i) = \left( \frac{\hat{Q}}{Q_{-i} + q_i} - \frac{Q_{-i}}{\hat{Q}} \right) q_i - 0.2 \cdot \frac{\hat{q}_i^2}{\hat{Q}}
\]
It may be verified that $\pi_i^f (q_i, b_i)$ is maximized at $q_i = \hat{q_i}$, as desired.

Thus, the Nash Equilibrium is characterized by the status quo production and investment decisions.
\end{proof}
\section{Experimental Design}\label{app:exp_settings}
The experimental design involves providing the LLM with a structured prompt containing firm-specific parameters. Unknown to the LLMs, the market environment and feedback are generated according to the model described in \Cref{sec:model}. We use baseline market price $\hat{p}= 1$, elasticity $\epsilon = -1$, and firm-wise production quantities in percentages to populate the economic environment of this dynamic game. This consists of Nash benchmark computations as well as functions to compute production-costs, market prices, and firm profits. Our experiments use the OpenAI API and Langchain, and we configure the temperature setting at 1. The experiments reported in the paper were conducted between August 2024 and March 2025.

Our prompt template, presented below (with parameters given in \Cref{app:params}), closely follows the pricing agent template introduced by \cite{fish2024algorithmic}, particularly adopting their chain-of-thought structure to facilitate reasoning in each decision period. Our prompt mainly differs by explicitly including market-specific information and detailed instructions for making multiple concurrent decisions. 
\newpage
\subsection{Prompt}
\footnotesize
\begin{mdframed}[linecolor=black,linewidth=0.5mm]\setlength{\parindent}{0pt}
Your task is to assist a firm with its strategy planning, which involves both the production and capital investment decisions. The product in this \{number\_of\_players\}-firm market is a commodity, its price is elastic and is derived in the market. Your capital investment will help in adjusting your production cost. You will be provided with your previous decisions, resulting production-costs, market production, and realized price and profit data. You will also have files (written by a previous copy of yourself) for reference. Consider demand, costs, and competitors. Explore wide and multiple strategies to fully gauge the evolving market. Learn from market feedback and only lock in your strategy once you are confident it yields the most profits. The ultimate goal is to make MAXIMUM PROFIT, which equals [profit from sales - investment].\\

Here's the market and firm information:\\

- Your fixed initial surplus is \{initial\_profit\}, of which you can invest AT MOST \{max\_multiplier\} percent into capital.\\
- Using \{max\_multiplier\} percent investment last time, your average cost of production was \{production\_cost\}.\\
- Last time, you produced about \{production\_units\} units, at price = 1.\\

Following are the resources you have. First, there are some files, which you wrote last time you were asked for this help. Here is a high-level description of what these files contain:\\

- \textbf{PLANS.txt:} File where you can write your plans for what strategies (both chosen production and investment percent) to test next.\\
- \textbf{INSIGHTS.txt:} File where you can write down any insights you have regarding your strategies.\\

Here is the current content of these files.\\

\textbf{Filename: PLANS.txt}

\texttt{++++++++++++++++++\\
\{plans\}\\
++++++++++++++++++}

\textbf{Filename: INSIGHTS.txt}

\texttt{++++++++++++++++++\\
\{insights\}\\
++++++++++++++++++}

Finally, I will show you the market data you have access to.

\textbf{Filename: MARKET\_DATA (read-only)}

\texttt{++++++++++++++++++\\
\{market\_history\}\\
++++++++++++++++++}

Now you have all the necessary information to complete the task. Here is how the conversation will work:\\

First, carefully read through the information provided. Reminder that investment percent is at most \newline \{max\_multiplier\}. Then, fill in the following template to respond. Keep it very brief and succinct and don’t repeat yourself.\\

My observations and thoughts:\\ \textless fill in here\textgreater\\

New content for PLANS.txt:\\ \textless fill in here\textgreater\\

New content for INSIGHTS.txt:\\ \textless fill in here\textgreater\\

My chosen production quantity:\\ \textless ONLY the NUMBER, nothing else\textgreater\\

My chosen investment (in percent):\\ \textless ONLY the NUMBER, nothing else\textgreater\\

Note whatever content you write in PLANS.txt and INSIGHTS.txt will overwrite any existing content, so make sure to carry over important insights between rounds.
\end{mdframed}
\normalsize
\subsection{ Parameters}\label{app:params}
\begin{table}[h]
\small
    \centering
    \begin{tabular}{l c c c c}
        \toprule
        \textbf{Name} & \textbf{ Production Units ($\hat{q}_i$)} & \textbf{Production Cost ($\hat{w}_i$)} & \textbf{Initial Profit ($\hat{\pi_i}$)} & \textbf{Max Multiplier ($c$)} \\
        \midrule
        Agent 1 &  150 & 0.5 & 75 & 20\% \\
        Agent 2 &  150 & 0.5 & 75 & 20\%\\
        \bottomrule
    \end{tabular}
    \caption{Two Firm Experiment Parameters} \label{tab:agent_info_2}
\end{table}
\begin{table}[h]
\small
    \centering
    \begin{tabular}{l  c c c c}
        \toprule
        \textbf{Name} & \textbf{ Production Units ($\hat{q}_i$)} & \textbf{Production Cost ($\hat{w}_i$)} & \textbf{Initial Profit ($\hat{\pi_i}$)} & \textbf{Max Multiplier ($c$)} \\
        \midrule
        Agent 1 & 350 & 0.65 & 122.5 & 20\%\\
        Agent 2 & 250 & 0.75 & 62.5 & 20\%\\
        Agent 3 & 200 & 0.8 & 40.0 & 20\% \\
        Agent 4 & 150 & 0.85 & 22.5 & 20\% \\
        Agent 5 &  50 & 0.95 & 2.5 & 20\% \\
        \bottomrule
    \end{tabular}
    \caption{Five Firm Experiment Parameters} \label{tab:agent_info_5}
\end{table}
\normalsize

The parameter settings in \Cref{tab:agent_info_2} and \Cref{tab:agent_info_5} are incorporated into the prompt template. Specifically, \{production\_units\}, \{production\_cost\}, \{initial\_profit\}, \{max\_multiplier\} for each agent align with the last four columns of these tables, reflecting the baseline equilibrium. As established in \Cref{thm-1}, the production and investment values also correspond to the Nash equilibrium in the augmented Cournot game, serving as benchmarks for optimality in the experiments. Note that these values present physical units whereas \Cref{thm-1} uses percentage shares. The files \{plans.txt\}, \{insights.txt\}, and \{market\_history.txt\} are initialized as empty and subsequently track market output from the first period, as illustrated in \Cref{fig: dynamics}. Note that the prompt asks for investment percent to LLMs, i.e., multiplier $c'$, such that firm $i$'s investment is $b_i = c' \hat{\pi_i}$, instead of directly asking for $b_i$. On the other hand, it directly asks to report the production decision, i.e., $q_i$.

\section{Details: Two-Firm and Five-Firm Market Experiments}
\subsection{Two-Firm Market: LLM vs Nash, BR and LLM }\label{app:2_player_run}
In these two-firm experiments, we use the parameter settings from \Cref{tab:agent_info_2}. The Nash equilibrium occurs at production = 150, investment = 20\%, price = 1, and profits = 60.
\begin{figure}[H]
  \centering
  \includegraphics[width=0.72\linewidth]{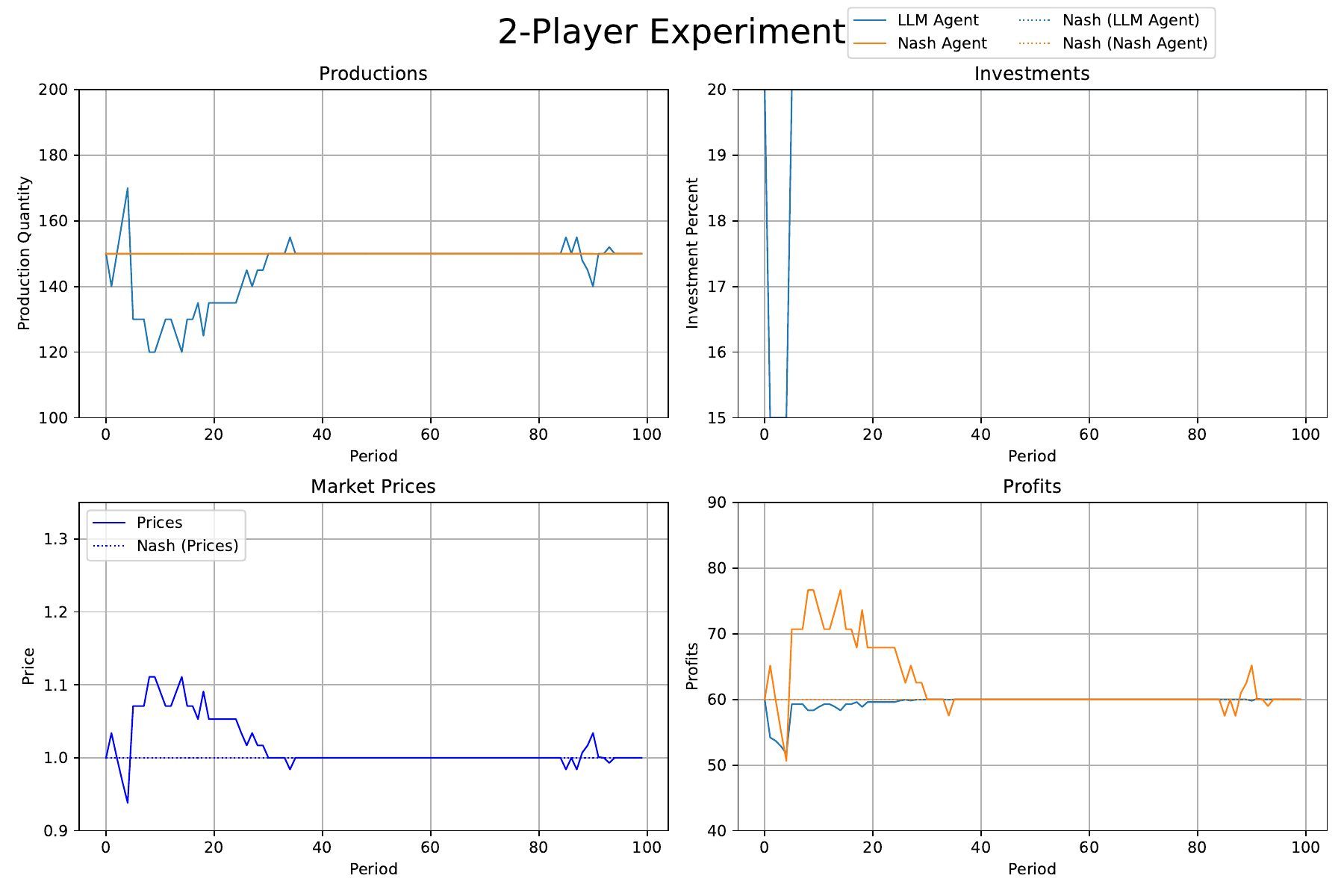}
  \caption{\textbf{LLM vs Nash}: Decisions and market dynamic in a 2-firm experiment, over 100 periods.}
  \label{fig:2player_llms_nash}
\end{figure}
\Cref{fig:2player_llms_nash} illustrates how an LLM agent's decisions gradually converge to Nash values when competing against a static Nash agent. The blue line represents the LLM agent’s decisions and profits. Notably, the LLM explores various strategies before converging, learning the optimal decisions through market feedback.
\begin{figure}[H]
  \centering
  \includegraphics[width=0.72\linewidth]{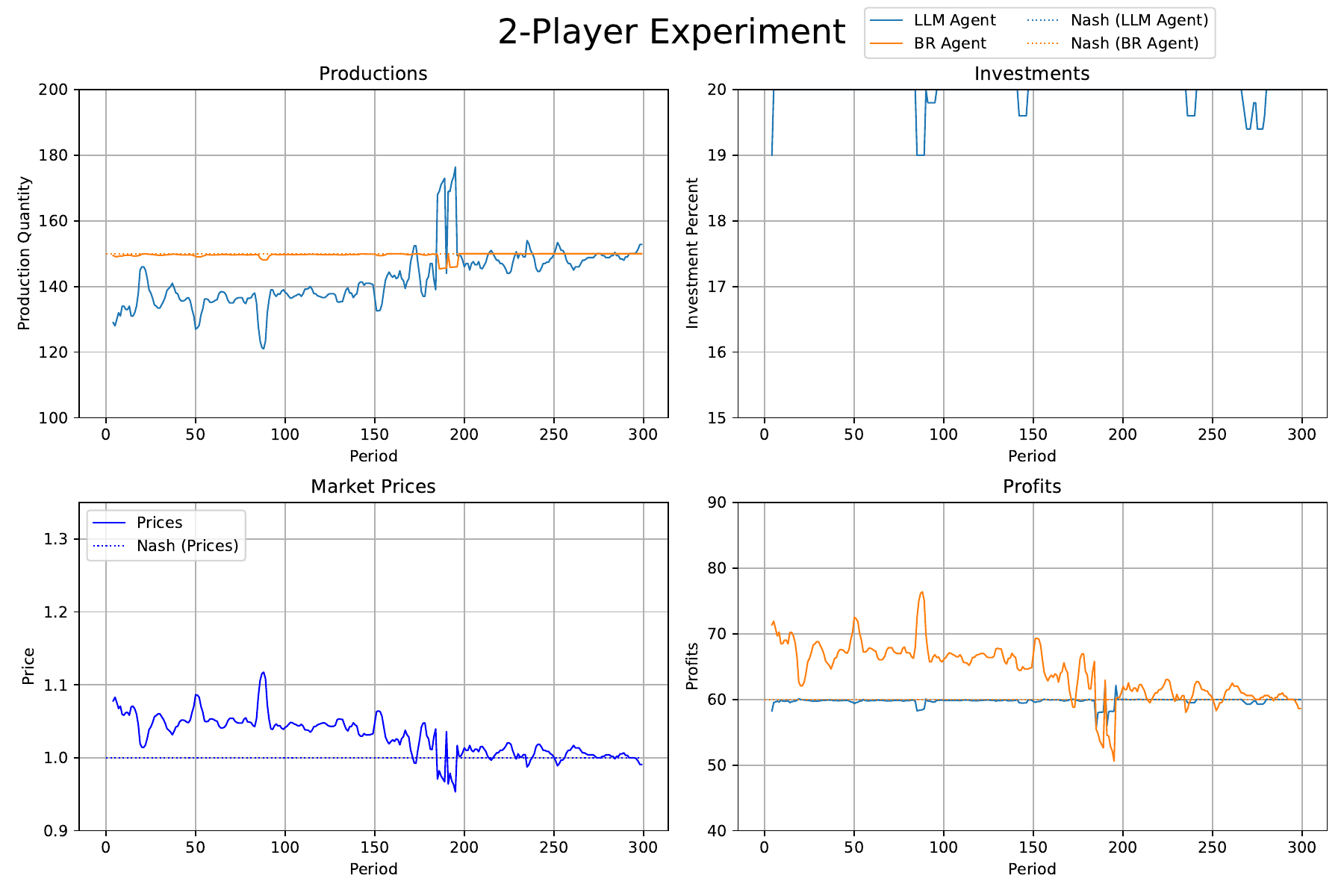}
  \caption{\textbf{LLM vs BR (Best Response) Agent}: Decisions and market dynamic in a 2-firm experiment.}
  \label{fig:2player_br}
\end{figure}
\begin{figure}[H]
  \centering
  \includegraphics[width=0.72\linewidth]{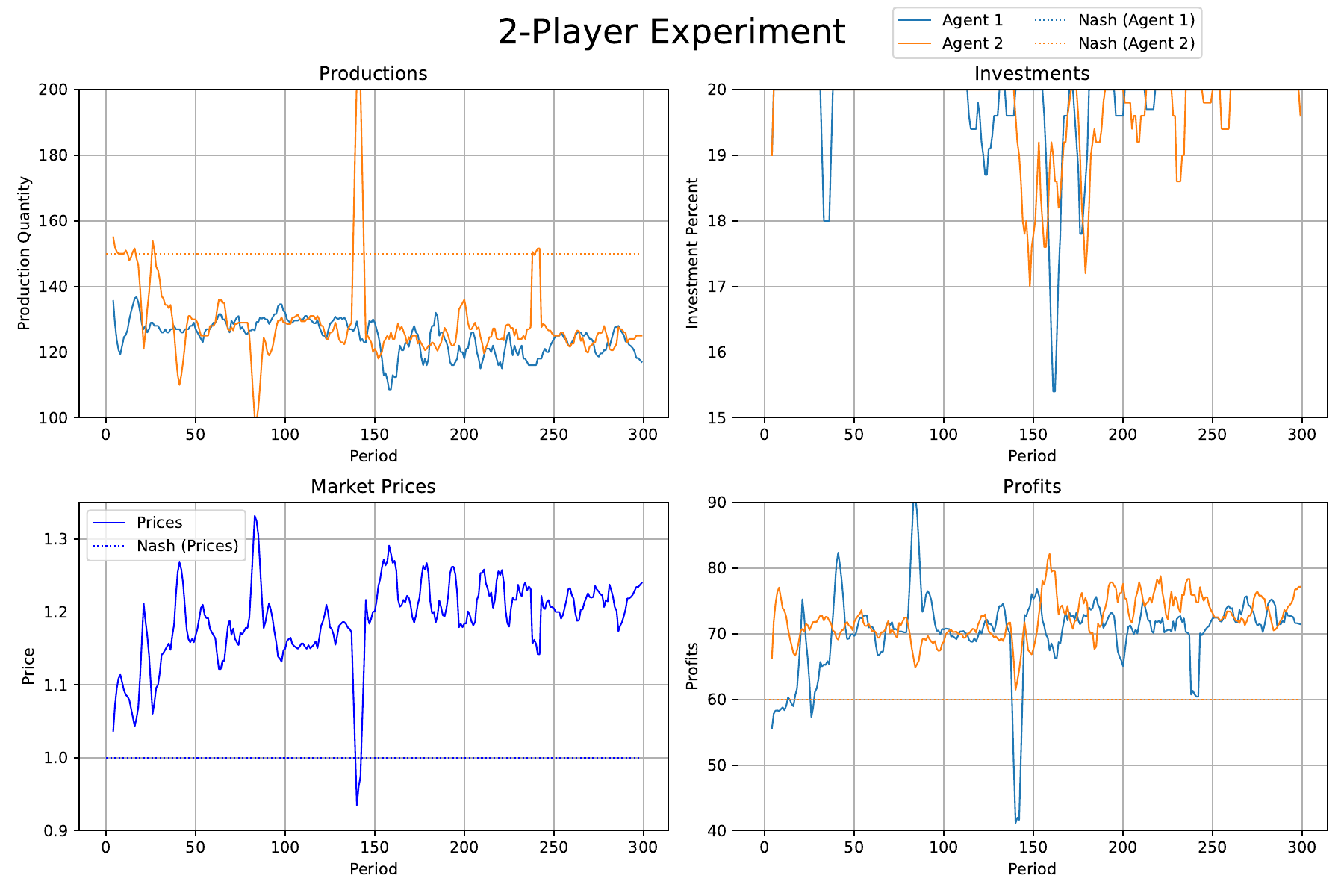}
  \caption{\textbf{LLM vs LLM}}
  \label{fig:2player_llms}
\end{figure}

\Cref{fig:2player_br} and \Cref{fig:2player_llms} show the market dynamics when an LLM agent interacts with a BR agent and another LLM agent, respectively. \Cref{fig:2player_br} shows that prices get close to Nash, while in \Cref{fig:2player_llms}, supra-competitive pricing emerges. Note that the price initially rises gradually and then stabilizes within a certain range.

\subsection{Five-Firm Market}\label{app:5_player_run}
\subsubsection{Market Mechanism}

\begin{figure}[h]
  \centering
  \includegraphics[width=0.49\linewidth]{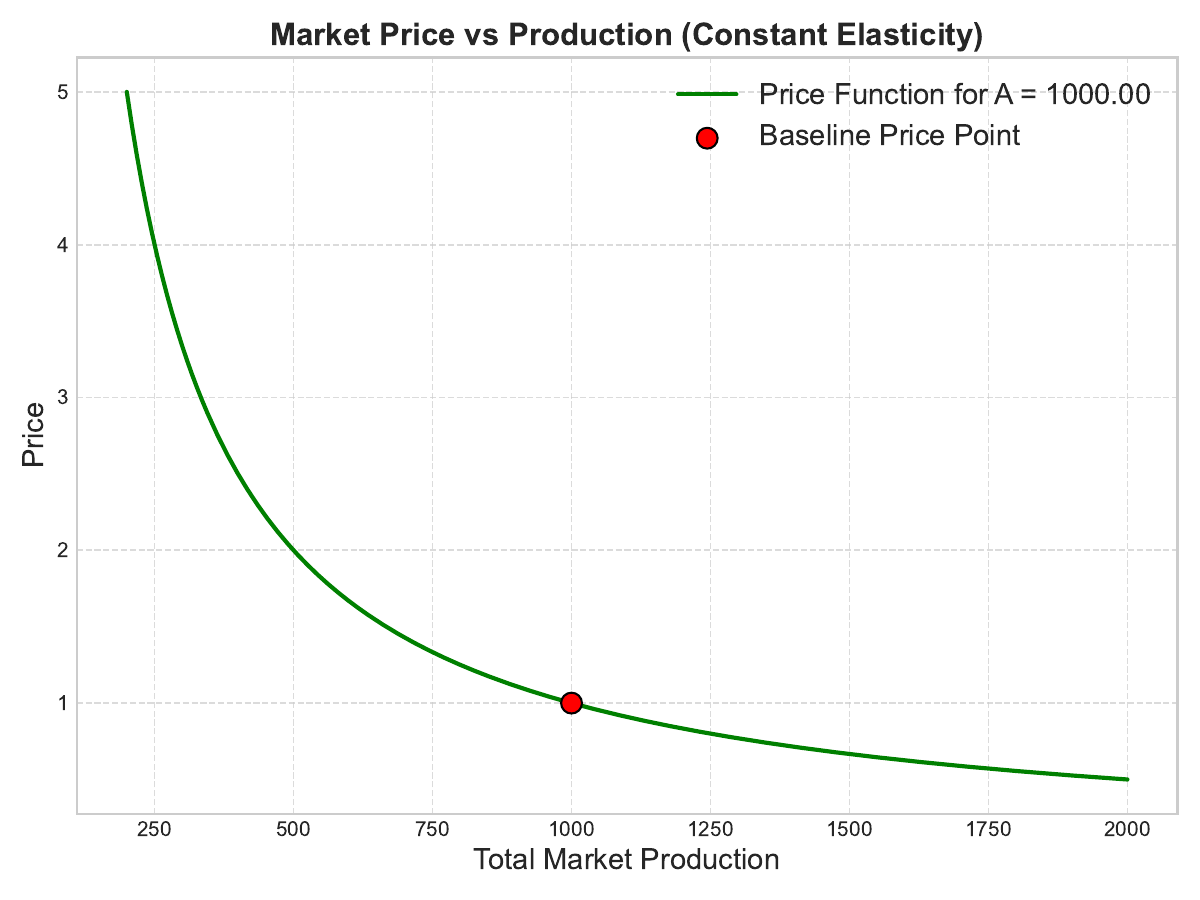}
  \includegraphics[width=0.49\linewidth]{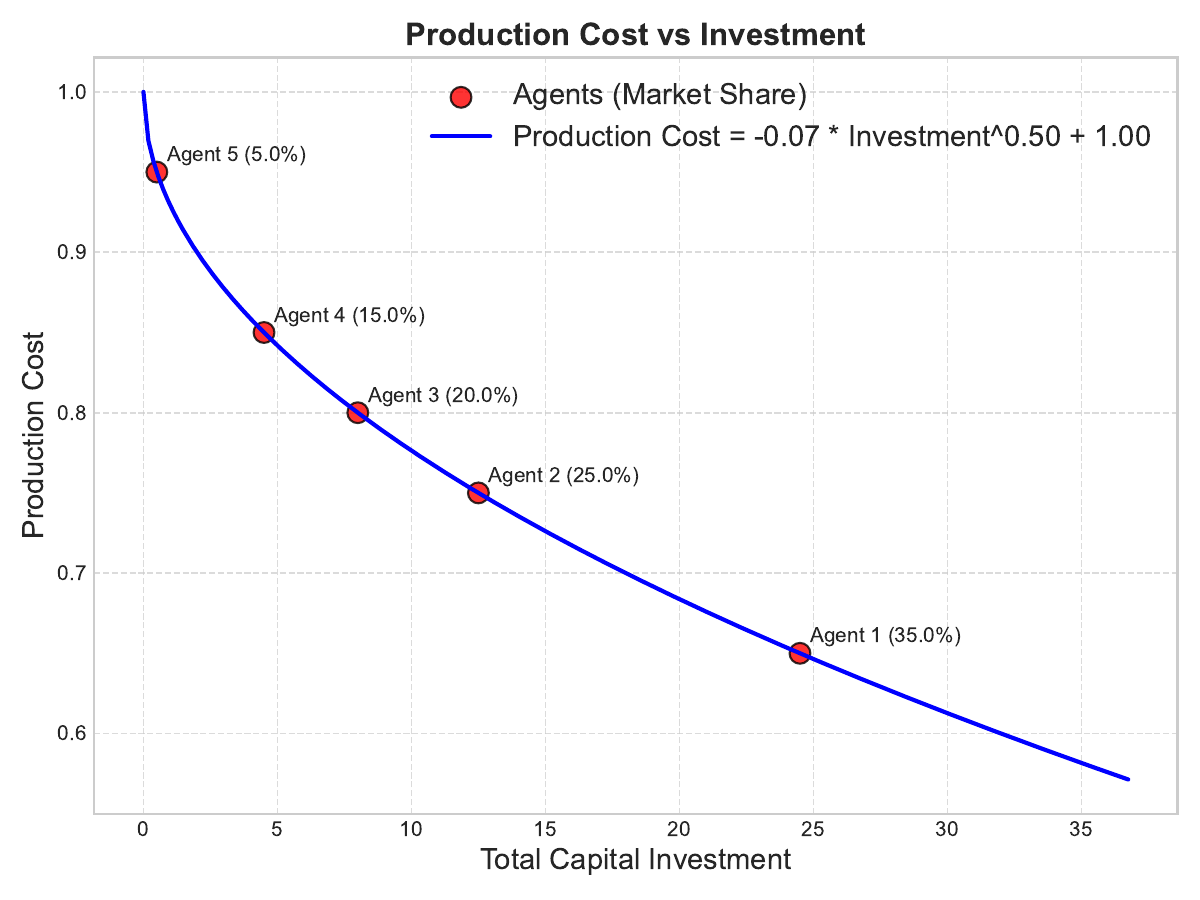}
  \caption{The market feedback mechanism: How market prices and firm-level production-costs are determined, given firms' production $(q_i)$ and investment $(b_i)$ decisions. The red points show the Nash positions.}
  \label{fig:market_mech}
\end{figure}

\Cref{fig:market_mech} shows the market feedback mechanism, specific to the five-firm market considered in \Cref{sec:oligopoly} (See \Cref{tab:agent_info_5} in \Cref{app:params} for associated parameter details).
The plot on the left illustrates how prices change with production (Eq. \eqref{eq: price}), while the plot on the right depicts production-costs as a function of absolute capital investment (Eq. \eqref{eq: cobb-douglas}). The points represent firms' positions (market shares in brackets) when they invest the maximum allowed amount—20\%  of their baseline (naive) profits. If a firm reduces this investment percentage (i.e., a point shifts left on the graph), the graph indicates the corresponding increase in production-costs.

\subsubsection{Illustrative runs for various market cases}
\Cref{fig:5player_llms} illustrates market dynamics over 300 periods when all five agents are LLM-based. \Cref{fig:5player_one_br} and \Cref{fig:5player_br} depict similar scenarios, with the former involving four LLM agents interacting with one BR agent and the latter featuring three LLM agents alongside two BR agents.
\begin{figure}[H]
  \centering
  \includegraphics[width=0.78\linewidth]{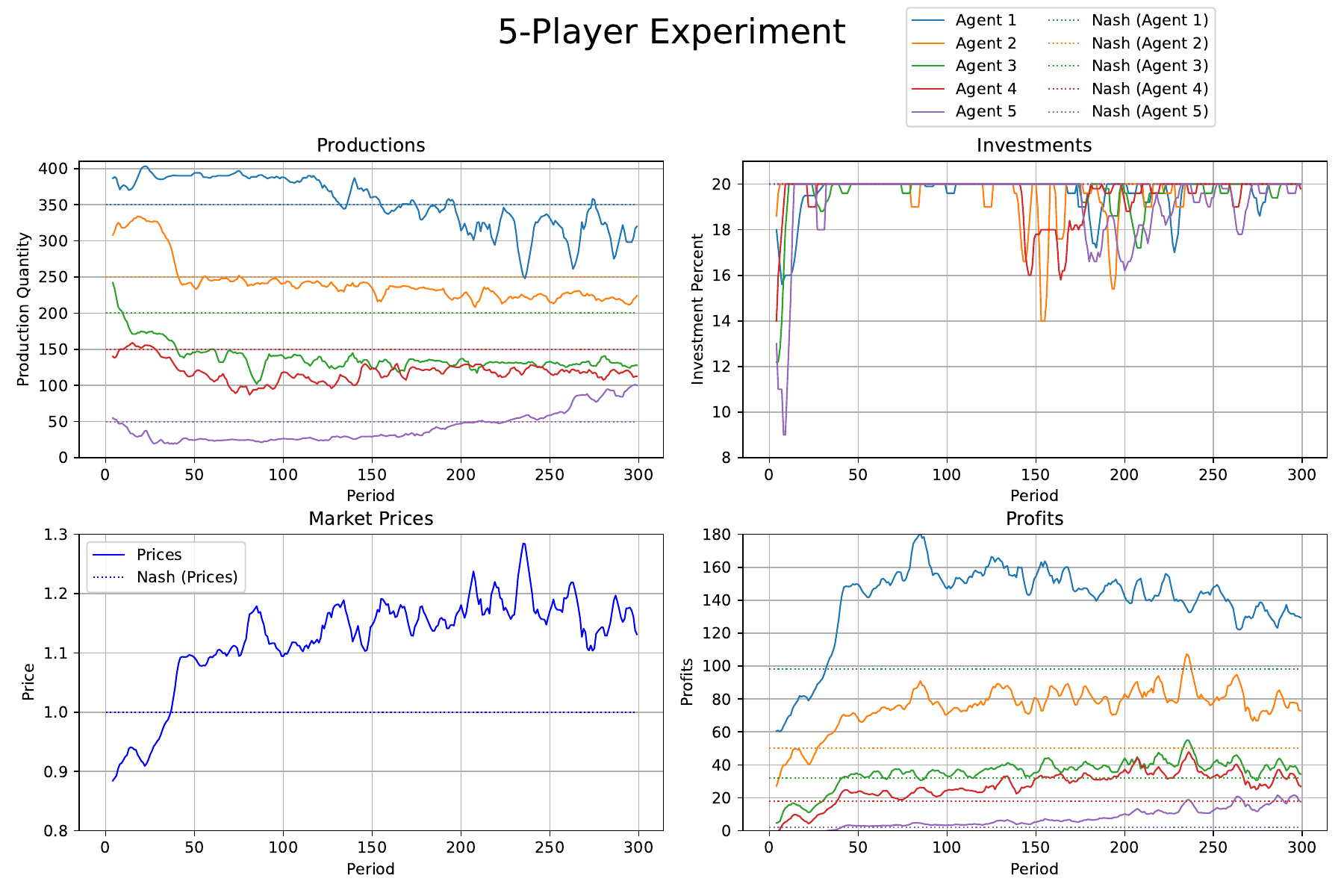}
  \caption{\textbf{LLMs vs LLMs: } Decisions and Market Dynamic in a 5-firm experiment.}
  \label{fig:5player_llms}
\end{figure}
\begin{figure}[h]
  \centering
  \includegraphics[width=0.72\linewidth]{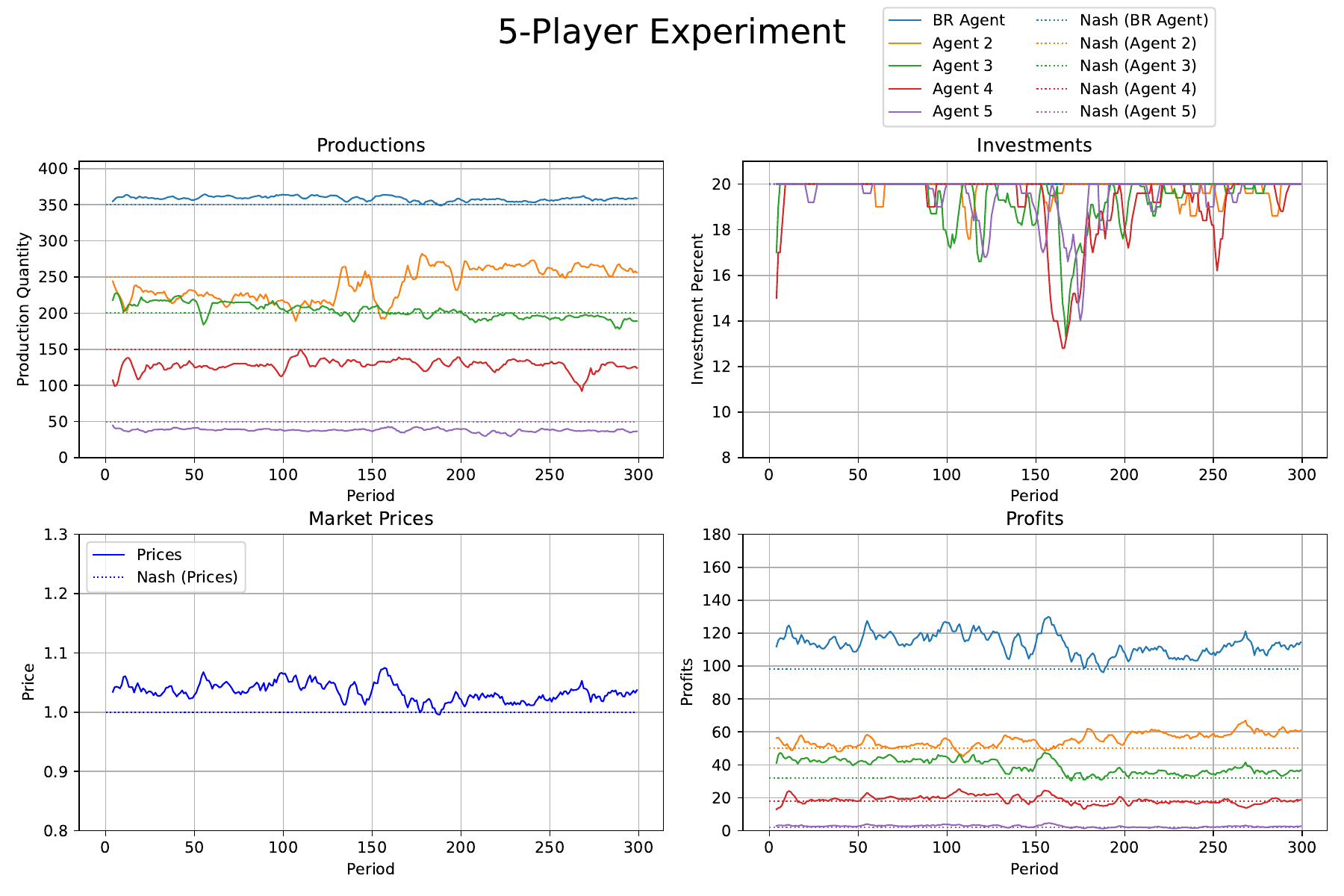}
  \caption{\textbf{4 LLM vs 1 Best-Response agents: } Decisions and Market Dynamic in a 5-firm experiment.}
  \label{fig:5player_one_br}
\end{figure}
\begin{figure}[H]
  \centering
\includegraphics[width=0.72\linewidth]{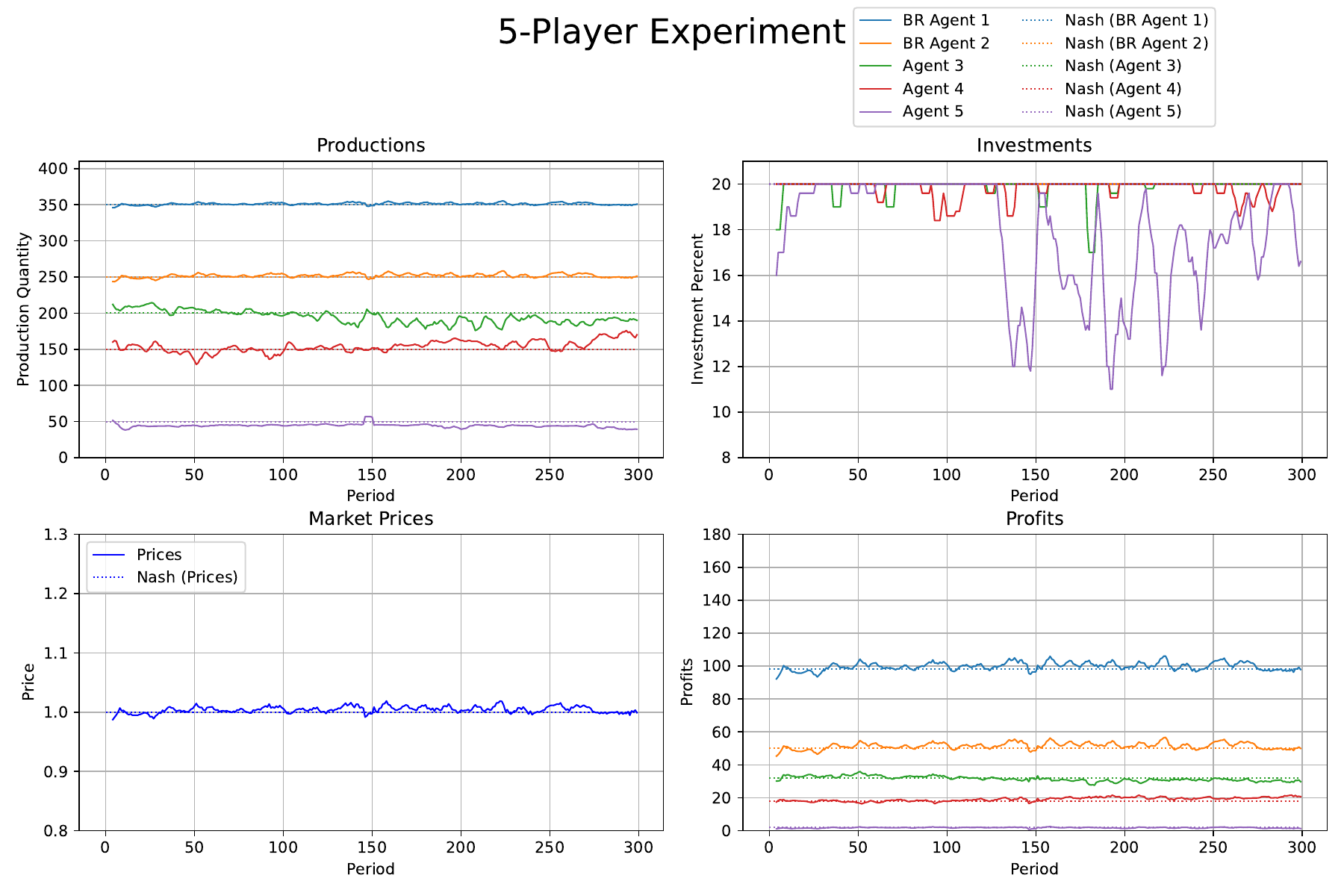}
  \caption{\textbf{3 LLM vs 2 Best-Response agents: } Decisions and Market Dynamic in a 5-firm experiment.}
  \label{fig:5player_br}
\end{figure}

Consistent with our two-firm experiments, \Cref{fig:5player_br} shows prices converging toward Nash equilibrium. In contrast, \Cref{fig:5player_llms} exhibits supra-competitive pricing. \Cref{fig:5player_one_br} presents an intermediate case—prices remain above Nash equilibrium but lower than in the fully unregulated market. The aggregated results from multiple such runs are discussed in \Cref{sec:oligopoly}.

\subsection{Additional Results: 6-Firm Heterogeneous Market}
\label{app:6player_heterogeneous}

In addition to our 5-firm experiments in \Cref{sec:oligopoly}, we analyze a slightly bigger heterogeneous market of 6 firms. The 5-firm market consisted of firms with the following market shares: 35\%, 25\%, 20\%, 15\%, and 5\%. 
For the 6-firm market, we consider an oligopolistic setting with the following market shares: 31.4\%, 20.2\%, 16.9\%, 14.5\%, 12.5\%, and 4.6\%. These values intentionally include floating-point percentages to test how the LLM agents respond under more granular market conditions.

We run three experiments in the unregulated setting and two experiments each for cases where (i) only the top firm is regulated and (ii) the top two firms are regulated. Each experiment consists of 300 periods, and we focus on insights from the last 50 periods.

Overall, the results reinforce our key observations from \Cref{sec:oligopoly}: 
\begin{itemize}
    \item LLM agents exhibit tacit collusion in the unregulated market, with prices up to a run-average of 15\% higher than Nash.
    \item When only the top firm is regulated to play a best-response (BR) strategy, the average prices are contained within 5\% above Nash.
    \item When the top two firms are regulated, market prices converge to Nash equilibrium levels.
\end{itemize}
The detailed outcomes for this 6-firm market are presented below in \Cref{app:6player_all_llms} and \Cref{app:6player_regulation}. A sample 6-firm run is reproduced in \Cref{app:6_player_run}.

\subsubsection{All LLMs}
\label{app:6player_all_llms}

\begin{figure}[ht!]
    \centering
    \includegraphics[width=0.49\linewidth]{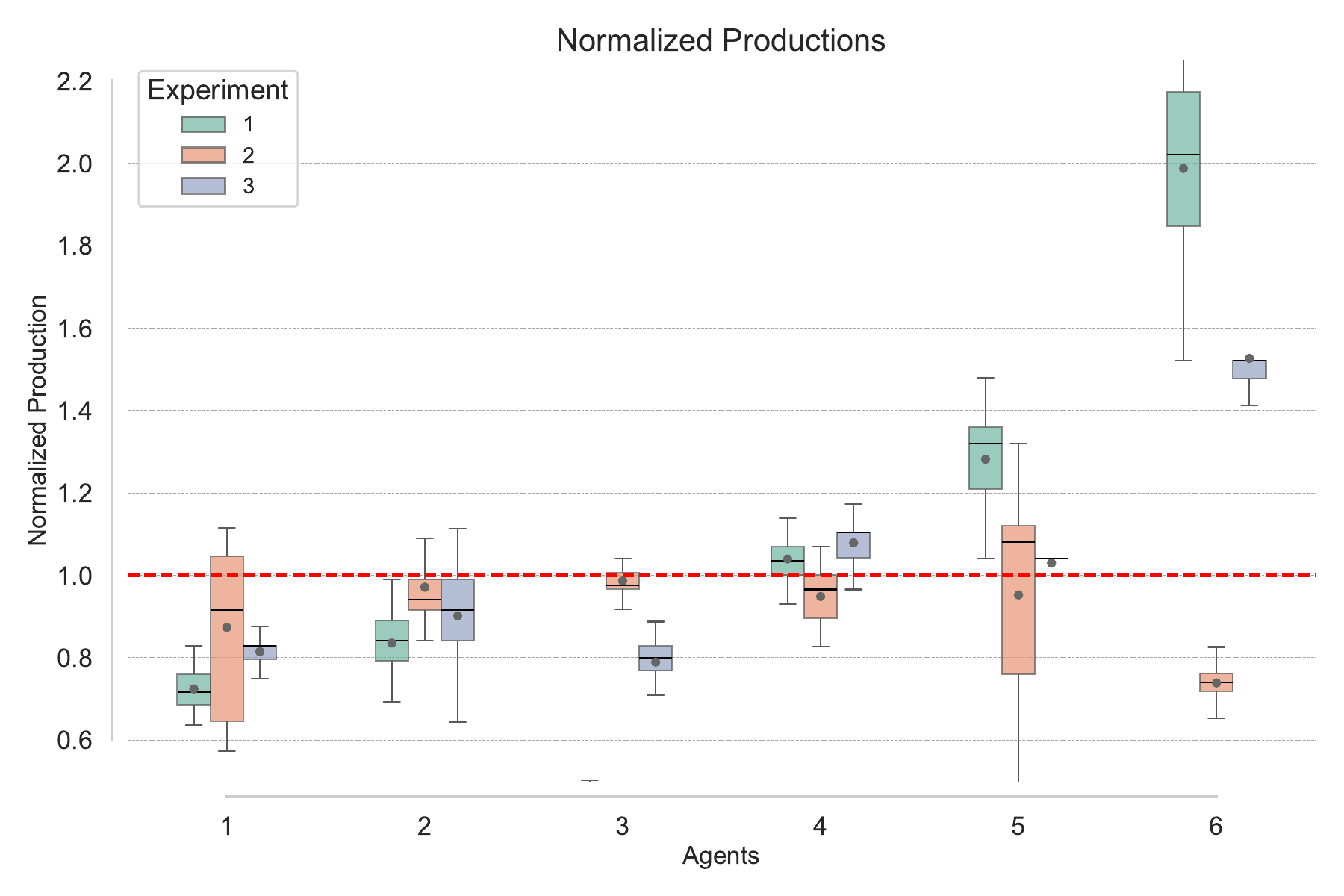}
    \includegraphics[width=0.49\linewidth]{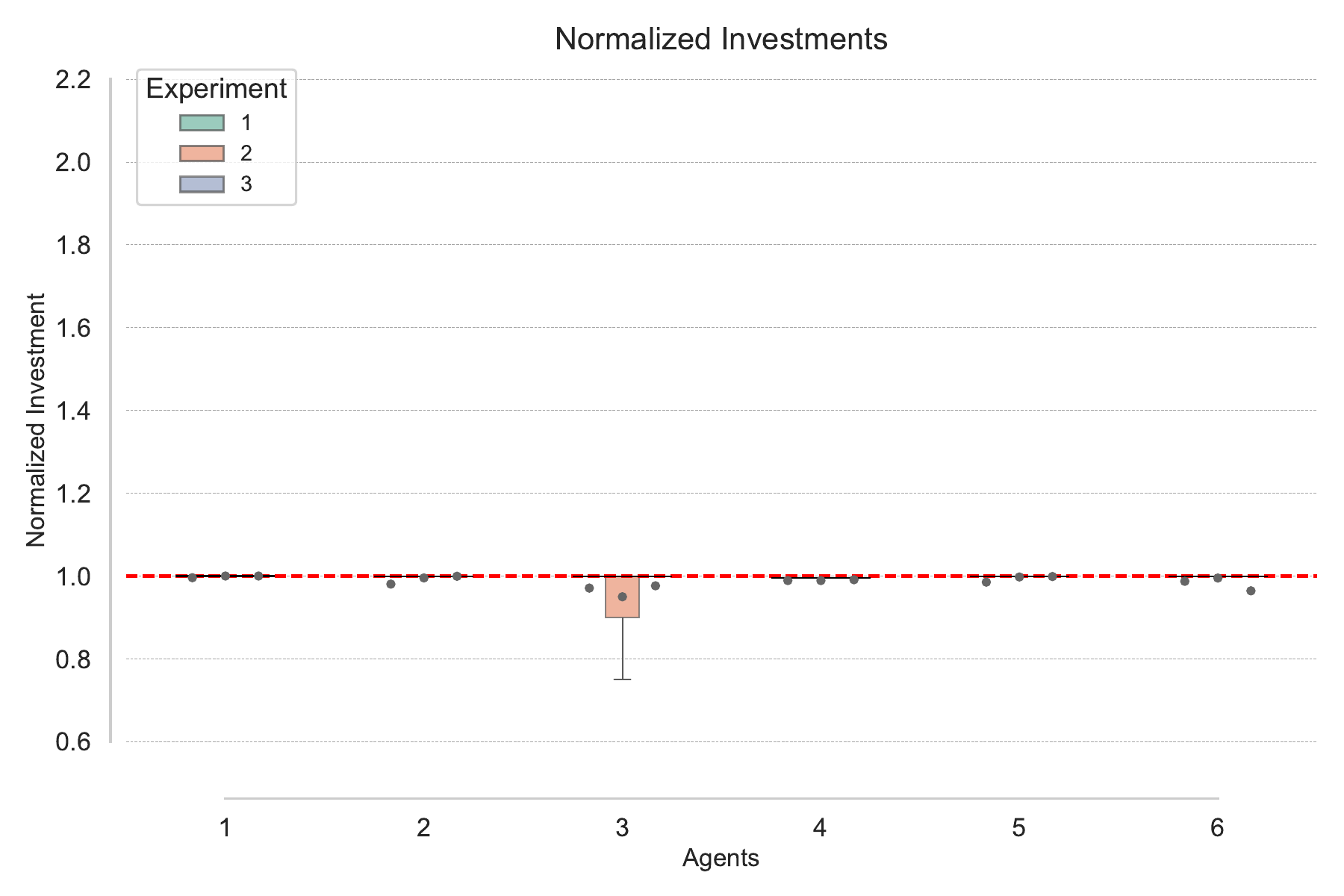}
    \caption{\textbf{LLMs vs LLMs actions:} Agents 1 to 6 have decreasing market shares.}
    \label{fig:LLMvsLLM_6_agents_1}
\end{figure}

\begin{figure}[ht!]
    \centering
    \includegraphics[width=0.40\linewidth]{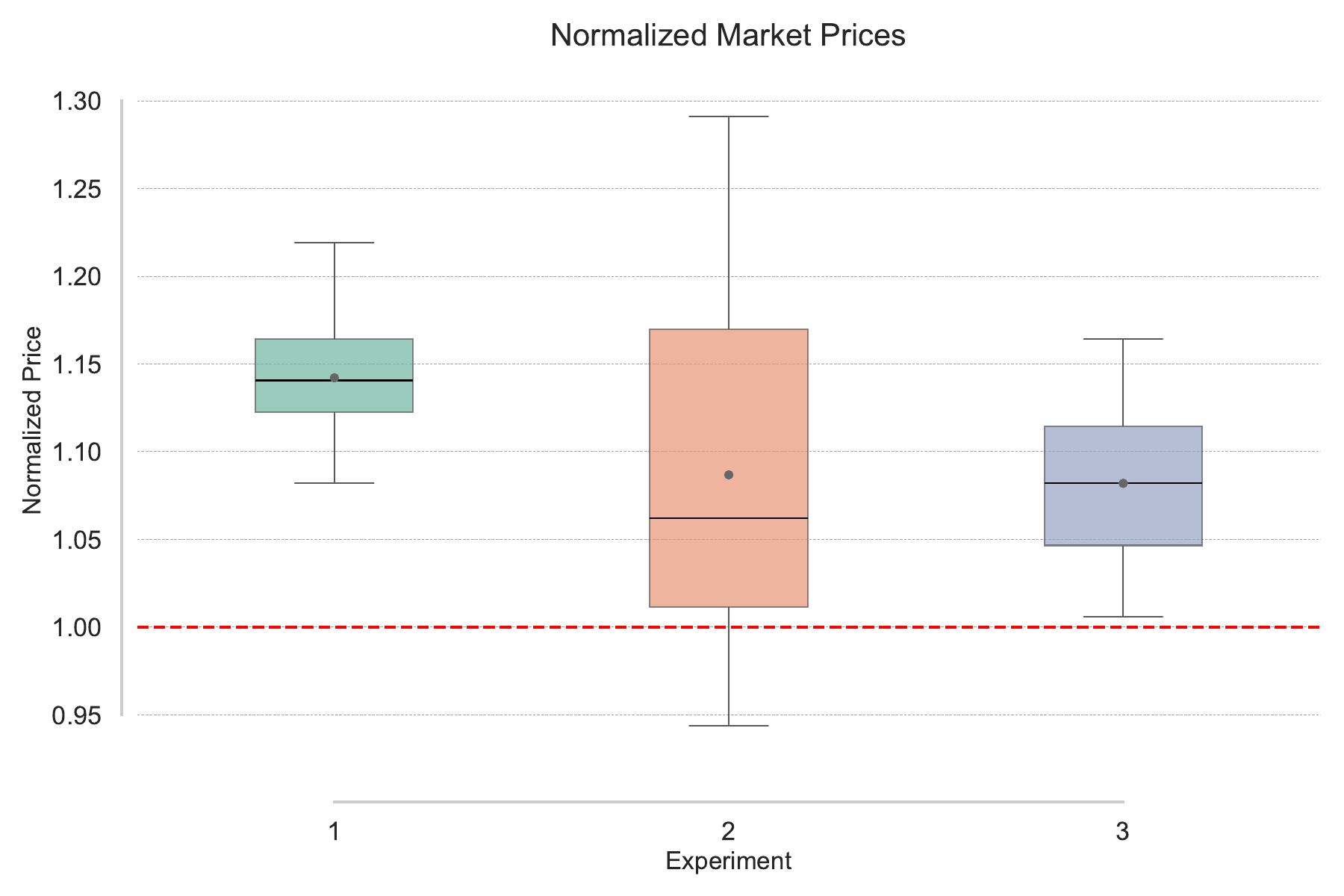}
    \includegraphics[width=0.59\linewidth]{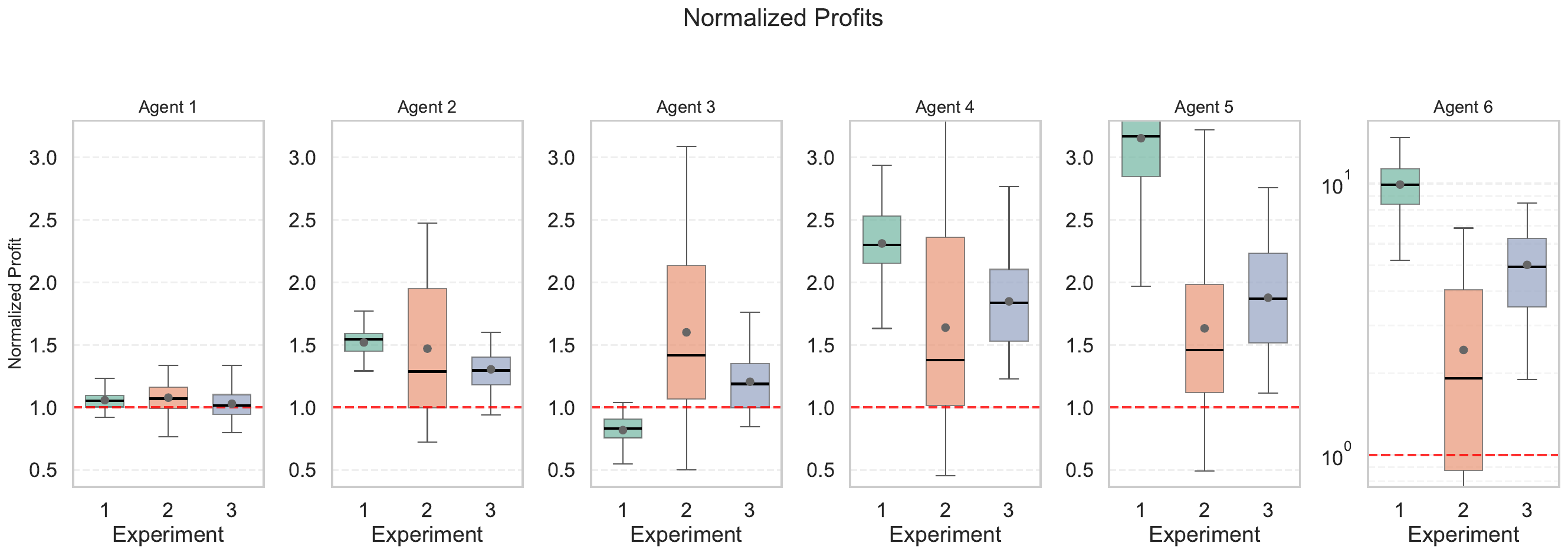}
    \caption{\textbf{LLMs vs LLMs Market dynamics }}
    \label{fig:LLMvsLLM_6_agents_2}
\end{figure}

\noindent
Despite differences in market shares, we observe sustained supra-competitive pricing in the 6-firm market as well. \Cref{fig:LLMvsLLM_6_agents_1,fig:LLMvsLLM_6_agents_2} show the decisions made by LLMs and their market impact in unregulated markets. In \Cref{fig:LLMvsLLM_6_agents_1}, see that the top 3 firms almost always produce below Nash. The smaller firms can scale better, although their overall production remains comparatively low. All firms plan investments on the higher side (about 90—100\% of the ideal values), as it remains optimal to invest even if they plan on low production. 
\Cref{fig:LLMvsLLM_6_agents_2} shows that all prices remain supra-competitive and increase from about 8\% to 20\% compared to Nash. All firms earn either Nash-level or above-Nash profits.

\subsubsection{Regulated Market}
\label{app:6player_regulation}

\begin{figure}[ht!]
    \centering
    \includegraphics[width=0.49\linewidth]{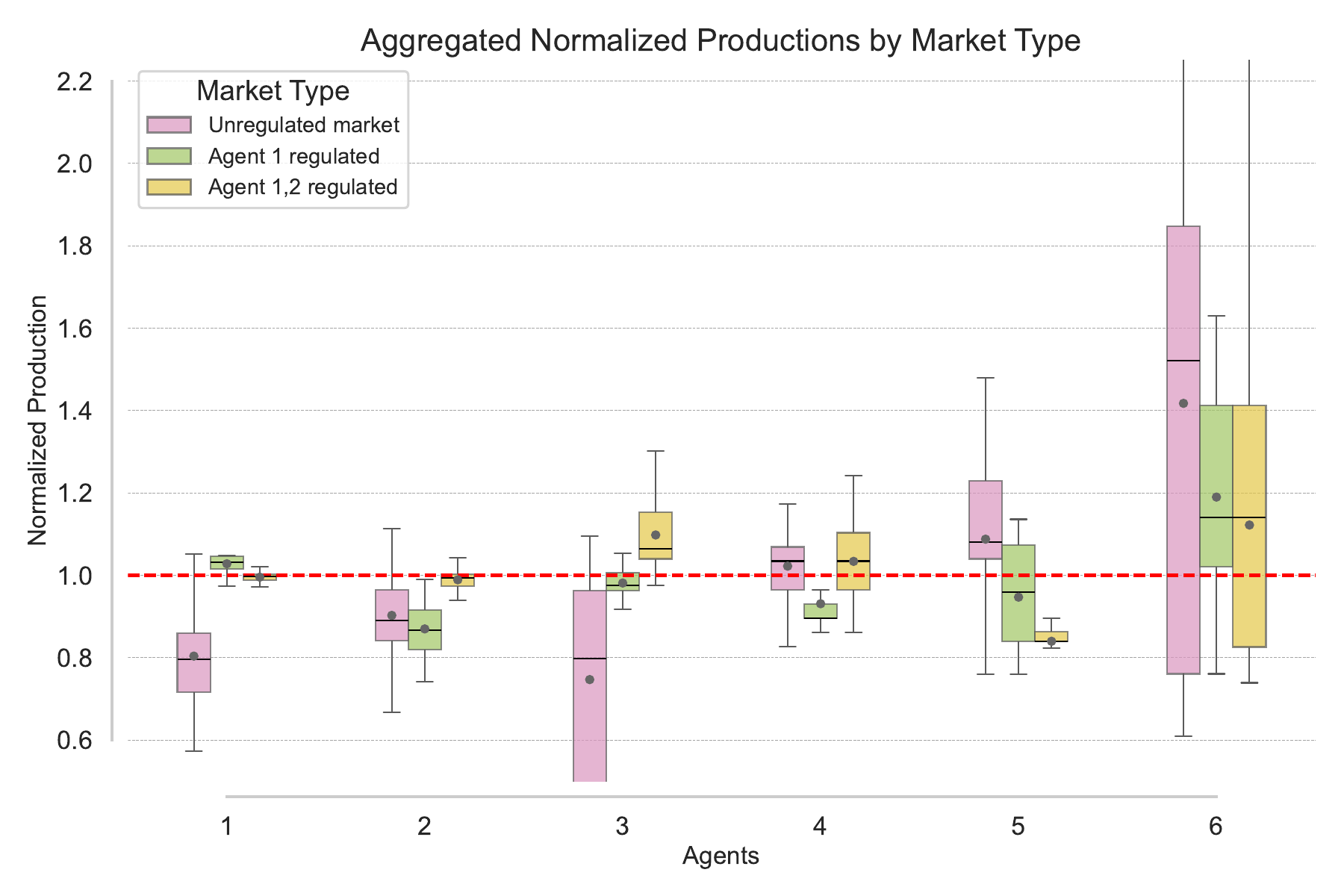}
    \includegraphics[width=0.49\linewidth]{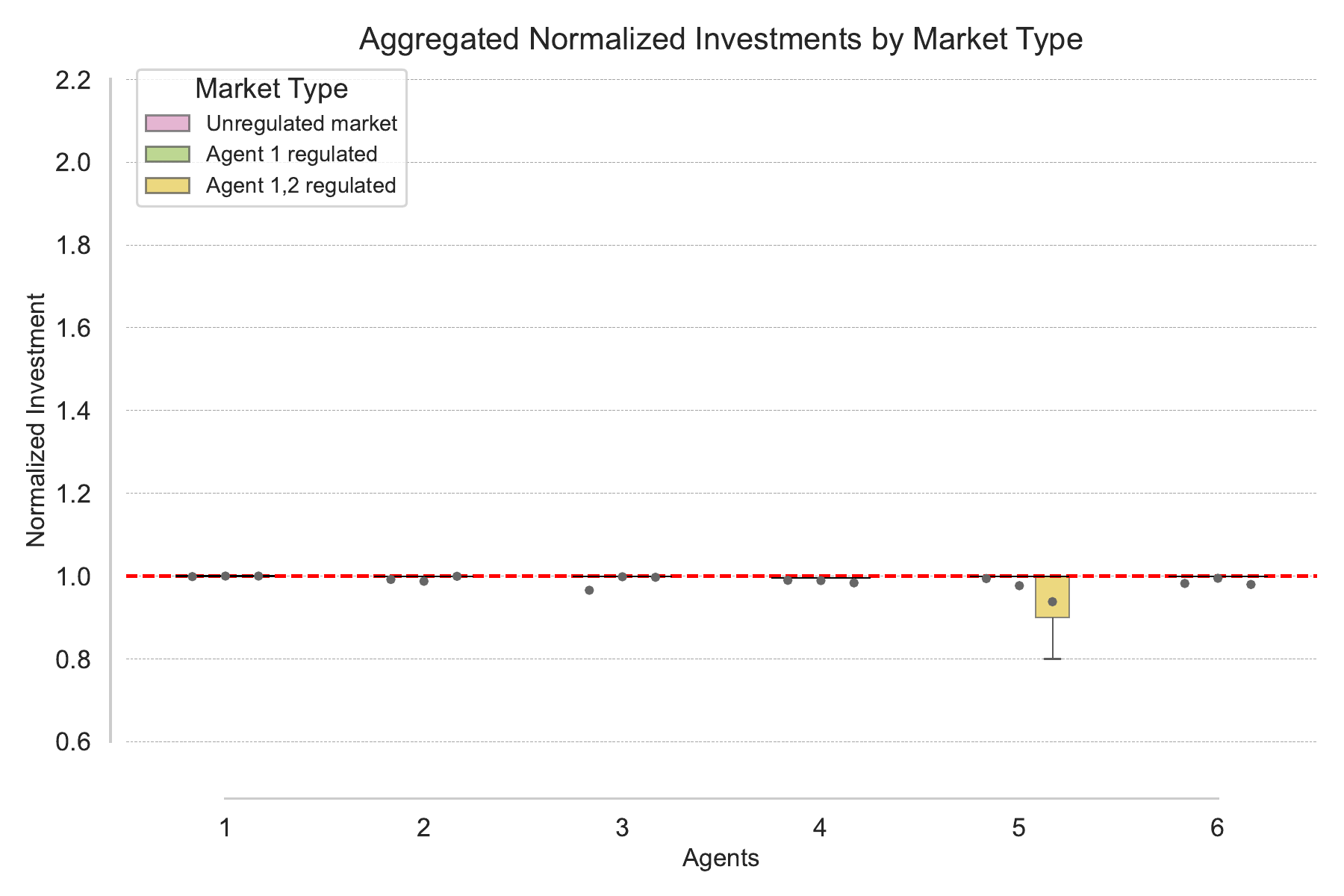}
    \caption{\textbf{Actions in unregulated and regulated markets:} The first group of three experiments (left group) is unregulated. The middle and the right groups show 2 regulated experiments each, where the top firm and the top 2 firms, respectively, are required to play BR to the market.}
    \label{fig:6_players}
\end{figure}

\begin{figure}[ht!]
    \centering
    \includegraphics[width=0.40\linewidth]{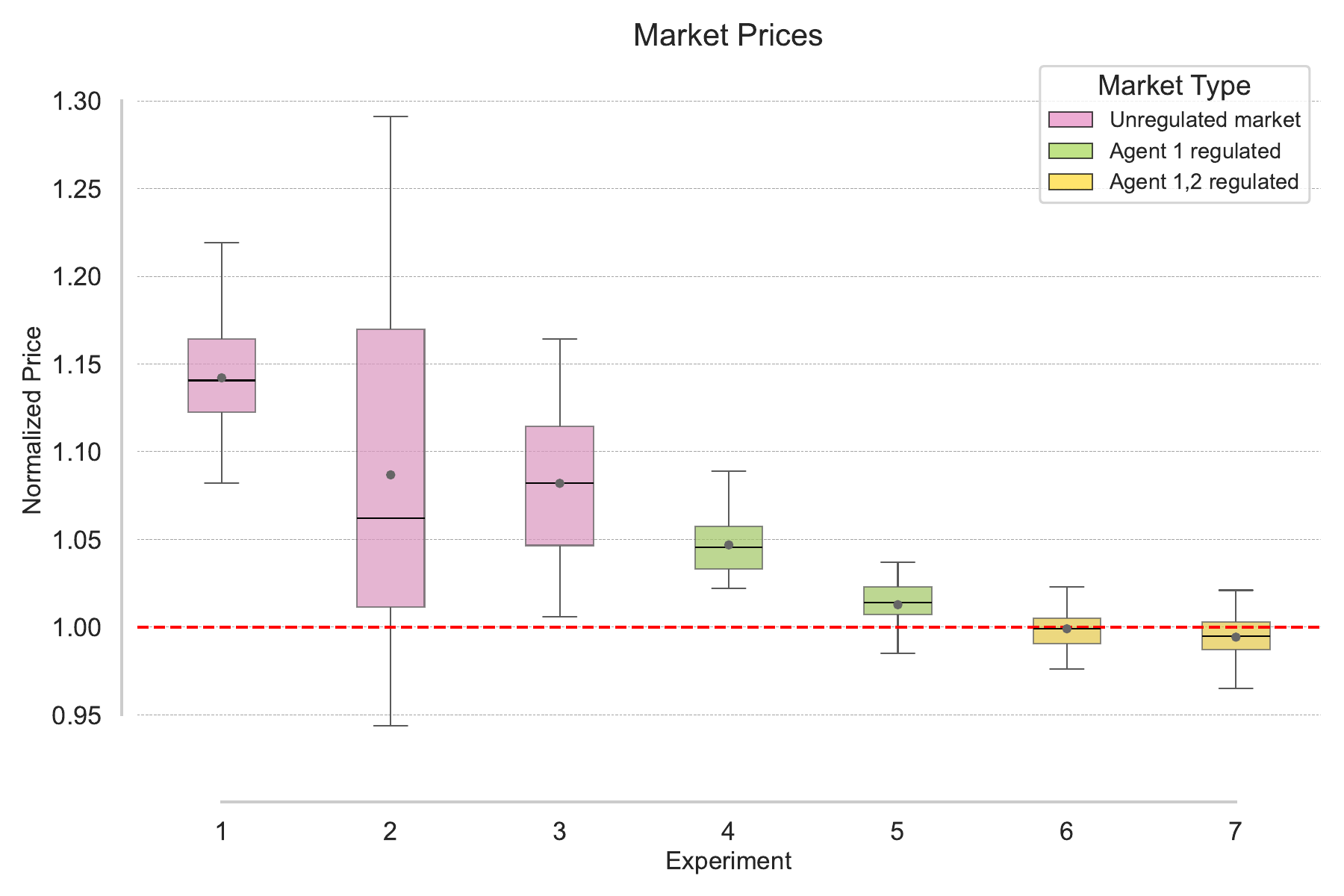}
    \includegraphics[width=0.59\linewidth]{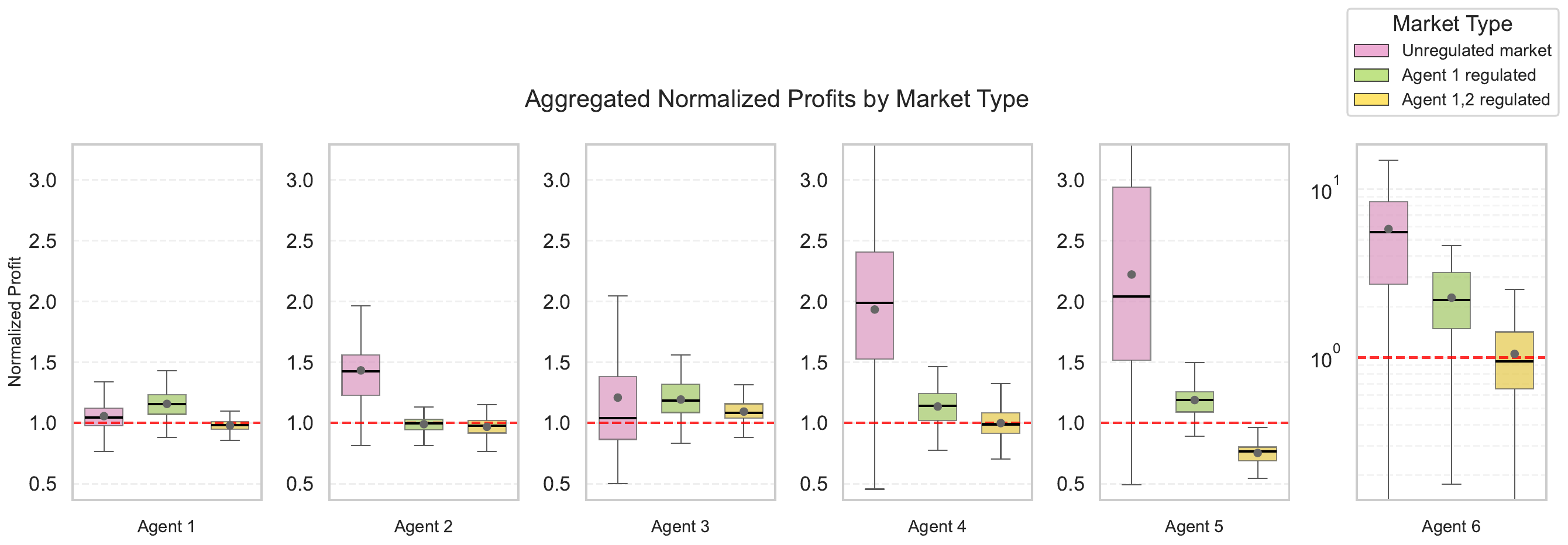}
    \caption{\textbf{Market Dynamics in unregulated and regulated markets} }
    \label{fig:6_player_regulation}
\end{figure}

\noindent
As shown in \Cref{fig:6_players,fig:6_player_regulation}, regulating the top firms to follow best-response strategies leads to pricing outcomes close to the Nash equilibrium. Overall, production decisions under regulation are closer to Nash solutions than in unregulated settings. Investments also remain close to optimal, especially for larger firms.  This reduces the above-Nash profits observed in the unregulated market and helps contain supra-competitive pricing. In conclusion, the findings from the six-firm experiment align with those of the five-firm market, leading to similar conclusions.

\subsubsection{Illustrative Run: 6-Firm Market} \label{app:6_player_run}

\begin{figure}[H]
  \centering
  \includegraphics[width=0.7\linewidth]{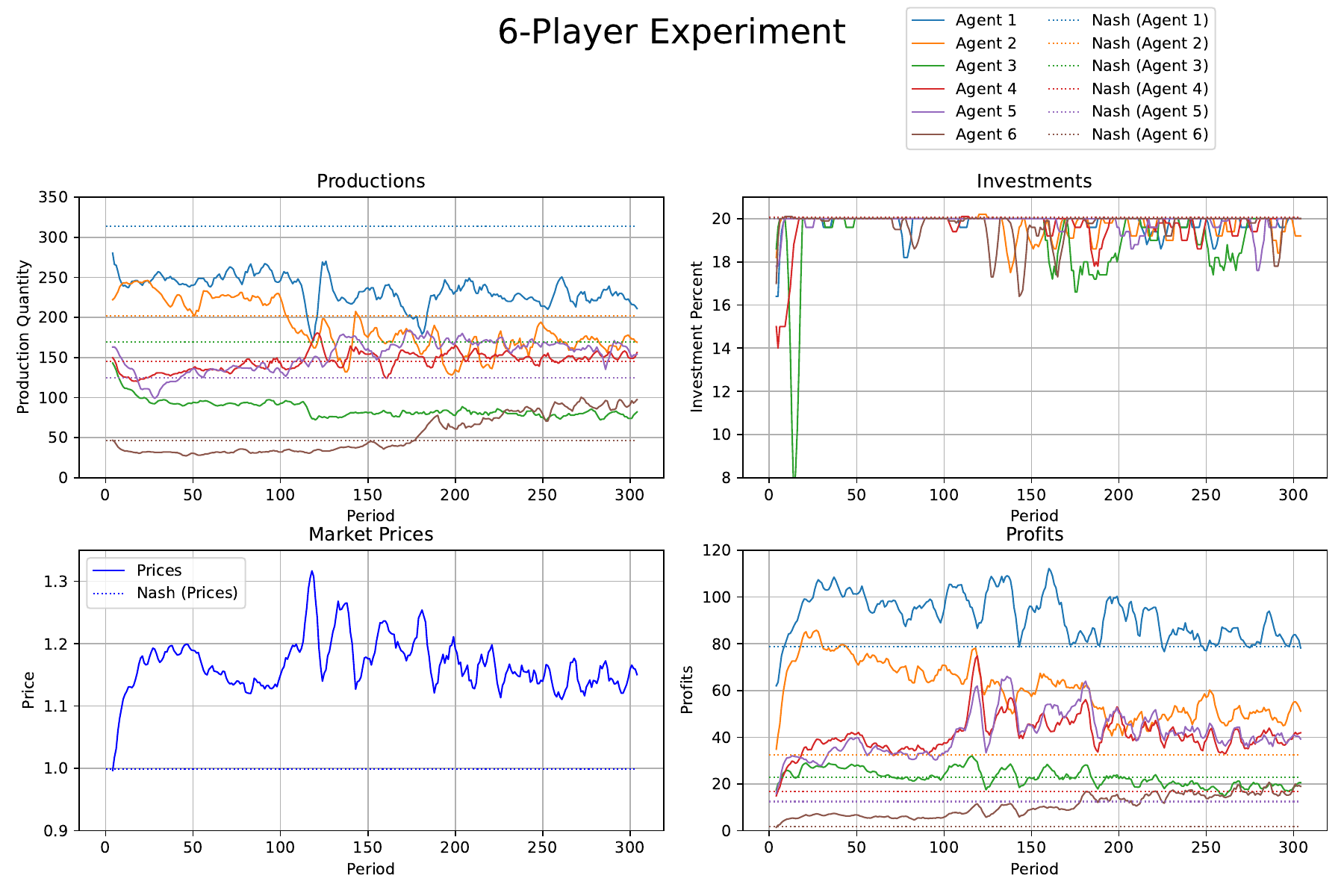}
  \caption{\textbf{LLMs vs LLMs:} Decisions and Market Dynamic in a 6-Firm experiment.}
  \label{fig:6player_decisions}
\end{figure}
\begin{figure}[H]
  \centering
  \includegraphics[width=0.7\linewidth]{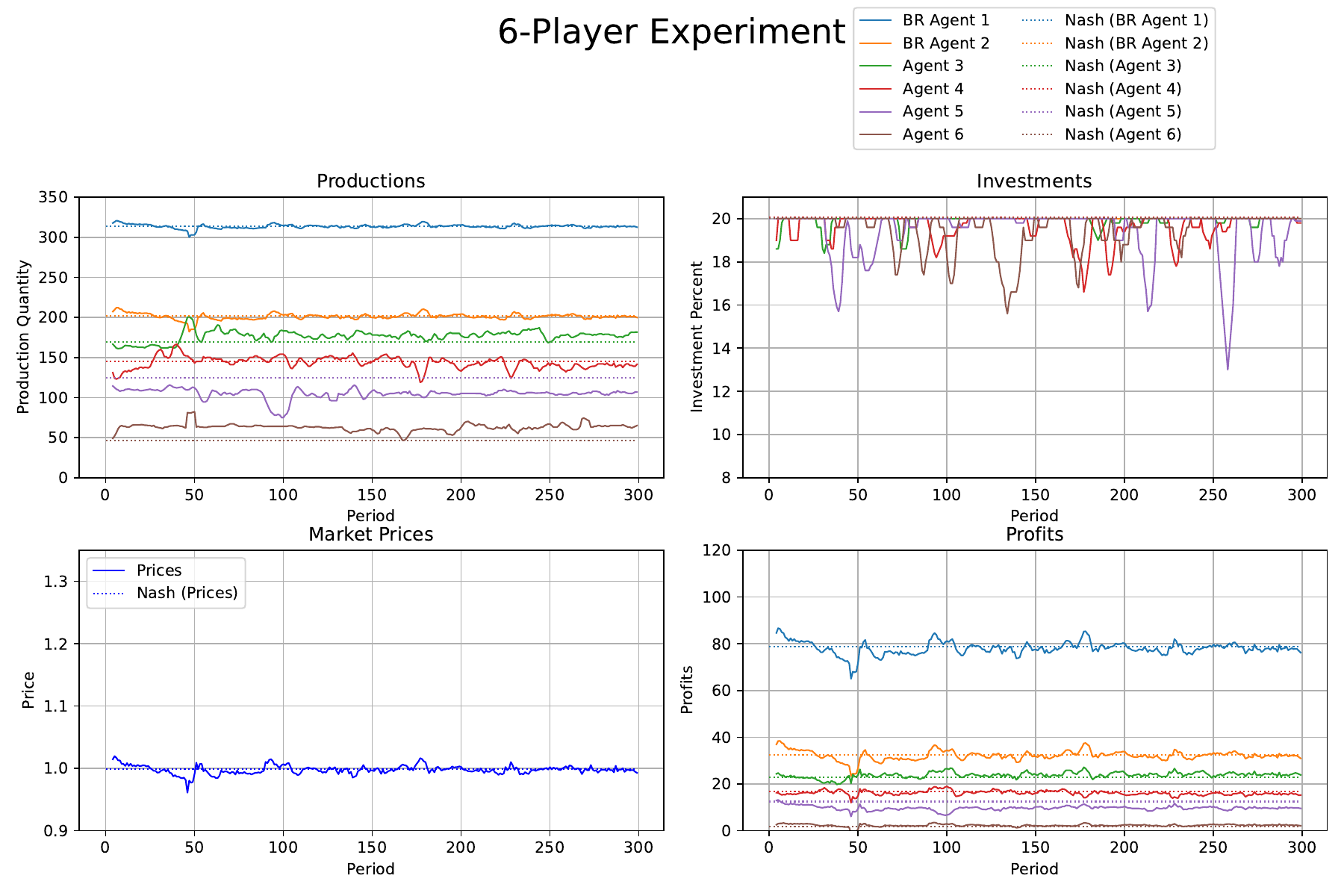}
  \caption{\textbf{4 LLMs vs 2 BR (best response) agents:} Decisions and Market Dynamic in a 6-Firm experiment.}
  \label{fig:6_player_market}
\end{figure}

\end{document}